\documentclass[review]{elsarticle}
\usepackage[letterpaper,top=2cm,bottom=2cm,left=2.9cm,right=2.9cm,marginparwidth=1.75cm]{geometry}
\usepackage{lineno,hyperref,mathtools}

\journal{Journal of \LaTeX\ Templates}

\usepackage{amssymb}
\usepackage{bm}
\usepackage{amssymb}
\usepackage{amsthm}
\usepackage{multirow,booktabs}
\usepackage{cases}
\usepackage{threeparttable}
\usepackage{color}
\usepackage{rotating}
\pdfstringdefDisableCommands{%
    \renewcommand*{\bm}[1]{#1}%
    }

\bibliographystyle{elsarticle-num}
\newtheorem{theorem}{Theorem}

\newtheorem{remark}{Remark}
\newtheorem{definition}{Definition}
\newtheorem{lemma}{Lemma}
\newtheorem{corollary}{Corollary}
\newtheorem{example}{Example}


\newcommand{\Hull}{{\mathrm{Hull}}}
\newcommand{\C}{{\mathcal{C}}}
\newcommand{\F}{{\mathbb{F}}}

\begin{document}

\begin{frontmatter}

\title{Galois self-orthogonal MDS codes with large dimensions}
\tnotetext[mytitlenote]{This research work is supported by the National Natural Science Foundation of China under Grant Nos. U21A20428 and 12171134.}

\author[mymainaddress]{Ruhao Wan}
\ead{wanruhao98@163.com}

\author[mymainaddress]{Shixin Zhu\corref{mycorrespondingauthor}}
\cortext[mycorrespondingauthor]{Corresponding author}
\ead{zhushixinmath@hfut.edu.cn}

\address[mymainaddress]{School of Mathematics, HeFei University of Technology, Hefei 230601, China}

\begin{abstract}
Let $q=p^m$ be a prime power, $e$ be an integer with $0\leq e\leq m-1$ and $s=\gcd(e,m)$.
In this paper, for a vector $\bm{v}\in (\F_q^*)^n$ and a $q$-ary linear code $\C$,
we give some necessary and sufficient conditions for the equivalent code $\Phi_{\bm{v}}(\C)$ of $\C$ and the extended code of $\Phi_{\bm{v}}(\C)$ to be $e$-Galois self-orthogonal.
From this, we directly obtain some necessary and sufficient conditions for (extended) generalized Reed-Solomon (GRS and EGRS) codes to be $e$-Galois self-orthogonal.
Furthermore, for all possible $e$ satisfying $0\leq e\leq m-1$,
we classify them into three cases (1) $\frac{m}{s}$ odd and $p$ even;
 (2) $\frac{m}{s}$ odd and $p$ odd; (3) $\frac{m}{s}$ even, and construct several new classes of $e$-Galois self-orthogonal maximum distance separable (MDS) codes.
 It is worth noting that our $e$-Galois self-orthogonal MDS codes can have dimensions greater than $\lfloor \frac{n+p^e-1}{p^e+1}\rfloor$, which are not covered by previously known ones.
Moreover, by propagation rules, we obtain some new MDS codes
with Galois hulls of arbitrary dimensions.
 As an application, many quantum codes can be obtained from these MDS codes with Galois
hulls.
\end{abstract}

\begin{keyword}
Galois self-orthogonal\sep
MDS codes\sep
GRS codes\sep
EGRS codes\sep
Galois hulls
\end{keyword}

\end{frontmatter}

\section{Introduction}\label{sec1}

Let $\F_{q}$ be the finite field with $q$ elements, where $q=p^m$ is a prime power.
An $[n,k,d]_{q}$ linear code $\C$ is a $k$-dimensional subspace of $\F_{q}^{n}$ with minimum distance $d$.
It is well known that for a linear code $\C$, it needs to satisfy the Singleton bound: $d\leq n-k+1$.
In particular, if a linear code $\C$ can reach the Singleton bound (i.e., $d=n-k+1$), then $\C$ is called an MDS code.
Since MDS codes have good error correction capability,
the construction of MDS codes has received a lot of attention.
In particular, GRS and EGRS codes are the most well-known family of MDS codes.

Let $e$ be an integer with $0\leq e\leq m-1$ and $s=\gcd(e,m)$.
In \cite{RefJ (2017) Galois},
 Fan et al. first introduced the $e$-Galois inner product.
 The Galois inner product is a generalization of the Euclidean and Hermitian inner product.
 The $e$-Galois hull of a linear code $\C$ is defined as $\Hull_e(\C)=\C\cap \C^{\bot_e}$,
 where $\C^{\bot_e}$ is the $e$-Galois dual code of $\C$.
 If $\C\subseteq \C^{\perp_e}$ (i.e., $\Hull_e(\C)=\C$), the linear code $\C$ is called $e$-Galois self-orthogonal.
 In particular, if $\C=\C^{\perp_e}$ (i.e., $\Hull_e(\C)=\C=\C^{\bot_e}$),  $\C$ is called $e$-Galois self-dual.

 On one hand, due to the nice algebraic structure of Euclidean self-dual (resp. self-orthogonal) MDS codes,
 the construction of Euclidean self-dual (resp. self-orthogonal) MDS codes has been extensively studied (see \cite{RefJ (2020) Feng MDS it}-\cite{RefJ (2021) Fang It} and the references therein).

On the other hand, the authors in \cite{RefJ (1998) introduct 1}-\cite{RefJ (2001) introduct 3} established the fundamentals for constructing a class of quantum codes named stabilizer codes using linear codes with certain self-orthogonality.
Since then, Hermitian self-orthogonal MDS codes are widely used to construct MDS quantum error-correcting codes (QECCs) (see \cite{RefJ (2008) Li},\cite{RefJ (2018) Fang two},\cite{RefJ (2023) Wan 2},\cite{RefJ (2004) M. R}-\cite{RefJ (2004) M. G} and the references therein).

In \cite{RefJ (2006) bound}, Brun et al. introduced a class of quantum codes called entanglement-assisted
quantum error-correcting codes (EAQECCs) which contains quantum stabilizer codes as a special case.
Since then, the Euclidean (resp. Hermitian or Galois) hulls of linear codes has been extensively studied and used to construct (MDS) EAQECCs (see \cite{RefJ (2019) Luo}-\cite{RefJ (2022) Luo rule},\cite{RefJ (2020) Fang hull}-\cite{RefJ (2024) relative hull},\cite{RefJ (2023) Yang DM},\cite{RefJ (2023) Wan 3},\cite{RefJ (2018) K good},\cite{RefJ (2023) Chen},\cite{RefJ (2018) Liu. Galois},\cite{RefJ (2019) Liu  FFA} and the references therein).

\subsection{Related works}

In recent years,
researchers have made efforts to construct Galois self-dual (resp. self-orthogonal) codes
via skew multi-twisted codes (see \cite{RefJ (2019) Sharma FFa}) and constacyclic codes (see \cite{RefJ (2017) Galois}-\cite{RefJ (2021) Mi DM}).
Moreover, in \cite{RefJ (2020) liu hull}-\cite{RefJ (2022) Wu FFa}, the Galois hulls of linear codes was extensively studied.
Recently, (extended) GRS codes (resp. Gabidulin codes), an important class of MDS codes, are also widely used to construct MDS codes with $e$-Galois hulls of arbitrary dimensions.
We summarise the relevant results as follows.

\begin{itemize}
\item
In \cite{RefJ (2021) cao}-\cite{RefJ (2023) cao},
Cao constructed eleven classes of MDS codes with Galois hulls of arbitrary dimension.
Among them, we can find four classes of Galois self-orthogonal MDS codes.

\item
In \cite{RefJ (2022) Fang CC},
Fang et al. gave a class of Galois self-orthogonal MDS codes obtained from Euclidean self-orthogonal GRS codes.

\item
In \cite{RefJ (2023) Yang CC}-\cite{RefJ (2023) Yang FFA},
Li et al. constructed eight classes of Galois self-orthogonal MDS codes.
Moreover, the authors gave a class of $e$-Galois self-orthogonal MDS codes obtained from $s$-Galois self-orthogonal GRS codes,
where $2s\mid m$.

\item
In \cite{RefJ (2024) Qian dcc},
Qian et al. constructed nine classes of MDS codes with Galois hulls of arbitrary dimension.
Among them, we can find six classes of Galois self-orthogonal MDS codes.

\item
In \cite{RefJ (2023) Islam}, Islam et al. studied the Galois hull dimensions of Gabidulin codes.
This leads to a class of Galois self-orthogonal MDS codes with parameters $[m,k]_{q}$ for any $1\leq k\leq \min\{e,m-e\}$.
\end{itemize}
We summarize the currently known Galois self-orthogonal MDS codes in Table \ref{tab:1} of Section \ref{sec5}.

\subsection{Our motivation}

Our main motivations can be summarized as follows:
\begin{itemize}
\item
Recently,
with the aims of constructing
EAQECCs and MDS codes with Galois hulls, GRS codes have been widely studied under the Galois inner product
(see \cite{RefJ (2021) cao}-\cite{RefJ (2023) cao},\cite{RefJ (2022) Fang CC}-\cite{RefJ (2023) Yang FFA},\cite{RefJ (2024) Qian dcc},\cite{RefJ (2023) Liu Galois}-\cite{RefJ (2022) Wu FFa}).
Constructing Galois self-orthogonal MDS codes with large dimensions can be used to obtain new MDS codes with Galois hulls of arbitrary dimensions.

\item
Compared to the Euclidean and Hermitian inner product, the Galois inner product
has the more general setting, which allows us to to find more codes with better algebraic structures or good parameters.
Moreover, the study of Galois self-orthogonal GRS codes is a generalisation of
many of the current results for Euclidean (resp. Hermitian) self-orthogonal GRS codes.

\item
From Table \ref{tab:1}, we can find that
most of the known Galois self-orthogonal MDS codes have dimensions less than  $\lfloor \frac{n+p^e-1}{p^e+1}\rfloor$, where $2e\leq m$.
Therefore, as mentioned in \cite{RefJ (2022) Fang CC},
constructing new $e$-Galois self-orthogonal MDS codes with dimensions greater than $\lfloor \frac{n+p^e-1}{p^e+1}\rfloor$ is a challenging work.
\end{itemize}

\newcommand{\tabincell}[2]{\begin{tabular}{@{}#1@{}}#2\end{tabular}}
\begin{table}
\caption{Some known $e$-Galois self-orthogonal MDS codes}
\label{tab:1}
\begin{center}
\resizebox{\textwidth}{102mm}{
	\begin{tabular}{ccccc}
		\hline
		 Finite field $q$ & Length $n$ & Dimension $k$ & References \\
        \hline
         $q=p^m$ odd, $2e\mid m$ & $n=t\frac{q-1}{p^e-1}$, $1\leq t\leq p^e-1$ & $1\leq k\leq \lfloor \frac{n+p^e-\frac{q-1}{p^e-1}}{p^e+1}\rfloor$ & \cite{RefJ (2023) Yang FFA}  \\
        \hline
         $q=p^m$ odd, $2e\mid m$ & $n=t\frac{q-1}{p^e-1}+1$, $1\leq t\leq p^e-1$ &
         \tabincell{c}{$1\leq k\leq
        \max\{\lfloor \frac{n+p^e-1}{p^e+1}\rfloor,$\\$\lfloor \frac{n+p^{m-e}-1}{p^{m-e}+1}\rfloor\}$} & \cite{RefJ (2021) cao,RefJ (2023) cao}\\
        \hline
        $q=p^m$ odd, $2e\mid m$ & $n=t\frac{q-1}{p^e-1}+2$, $1\leq t\leq p^e-1$ & $k= \frac{n+p^e-1}{p^e+1}$ & \cite{RefJ (2023) Yang FFA} \\
        \hline
          $q=p^m$ odd, $2e\mid m$ & \tabincell{c}{$n=t\frac{q-1}{\gcd(x_2,q-1)}$, $1\leq t\leq \frac{q-1}{\gcd(x_1,q-1)}$,\\ $(q-1)\mid {\rm lcm}(x_1,x_2)$, $\frac{q-1}{p^e-1}\mid x_1$} & $1\leq k\leq \lfloor \frac{n+p^e-\frac{q-1}{\gcd(q-1,x_2)}}{p^e+1}\rfloor$ & \cite{RefJ (2023) Yang FFA} \\
        \hline
          $q=p^m$ odd, $2e\mid m$ & \tabincell{c}{$n=t\frac{q-1}{\gcd(x_2,q-1)}+1$, $1\leq t\leq \frac{q-1}{\gcd(x_1,q-1)}$,\\ $(q-1)\mid {\rm lcm}(x_1,x_2)$, $\frac{q-1}{p^e-1}\mid x_1$} &
          \tabincell{c}{$1\leq k\leq\max\{\lfloor \frac{n+p^e-1}{p^e+1}\rfloor,$\\$\lfloor \frac{n+p^{m-e}-1}{p^{m-e}+1}\rfloor\}$} & \cite{RefJ (2021) cao,RefJ (2023) cao}\\
        \hline
         $q=p^m$ odd, $2e\mid m$ & \tabincell{c}{$n=t\frac{q-1}{\gcd(x_2,q-1)}+2$, $1\leq t\leq \frac{q-1}{\gcd(x_1,q-1)}$,\\ $(q-1)\mid {\rm lcm}(x_1,x_2)$, $\frac{q-1}{p^e-1}\mid x_1$,\\ $(p^e+1)\mid (n-2)$} & $k= \frac{n+p^e-1}{p^e+1}$ & \cite{RefJ (2023) Yang FFA}\\
        \hline
         $q=p^m$ odd, $2e\mid m$ & \tabincell{c}{$n=t\tilde{m}$, $1\leq t\leq \frac{p^e-1}{m_1}$, \\$m_1=\frac{\tilde{m}}{\gcd(\tilde{m},y)}$, $y=\frac{q-1}{p^e-1}$, $\tilde{m}\mid (q-1)$} & $1\leq k\leq \lfloor \frac{n+p^e-\tilde{m}}{p^e+1}\rfloor$ & \cite{RefJ (2023) Yang FFA}\\
        \hline
         $q=p^m$ odd, $2e\mid m$ & \tabincell{c}{$n=t\tilde{m}+1$, $1\leq t\leq \frac{p^e-1}{m_1}$, \\$m_1=\frac{\tilde{m}}{\gcd(\tilde{m},y)}$, $y=\frac{q-1}{p^e-1}$, $\tilde{m}\mid (q-1)$} &
         \tabincell{c}{$1\leq k\leq\max\{\lfloor \frac{n+p^e-1}{p^e+1}\rfloor,$\\$\lfloor \frac{n+p^{m-e}-1}{p^{m-e}+1}\rfloor\}$} & \cite{RefJ (2021) cao,RefJ (2023) cao}\\
        \hline
         $q=p^m$ odd, $2e\mid m$ & \tabincell{c}{$n=t\tilde{m}+2$, $1\leq t\leq \frac{p^e-1}{m_1}$, $m_1=\frac{\tilde{m}}{\gcd(\tilde{m},y)}$,\\ $y=\frac{q-1}{p^e-1}$, $\tilde{m}\mid (q-1)$, $(p^e+1)\mid (n-2)$} & $ k= \frac{n+p^e-1}{p^e+1}$ & \cite{RefJ (2023) Yang FFA}\\
        \hline
          \tabincell{c}{$q=p^m$ odd, $2^h\mid \frac{m}{a}$,\\ $2^h=p^e+1$}&  \tabincell{c}{$n=tp^{az}$, $1\leq t\leq p^a$,\\ $1\leq z\leq \frac{m}{a}-1$} & $1\leq k\leq \lfloor \frac{n+p^e-1}{p^e+1}\rfloor$ & \cite{RefJ (2023) Yang CC}\\
        \hline
         $q=p^m$ odd, $2e\mid m$ &  \tabincell{c}{$n=tp^{az}$, $a\mid e$, $1\leq t\leq p^a$,\\ $1\leq z\leq \frac{m}{a}-1$} &
         \tabincell{c}{$1\leq k\leq\max\{\lfloor \frac{n+p^e-1}{p^e+1}\rfloor,$\\$\lfloor \frac{n+p^{m-e}-1}{p^{m-e}+1}\rfloor\}$} & \cite{RefJ (2021) cao,RefJ (2023) cao}\\
        \hline
        $q=p^m$ odd, $2e\mid m$ &  \tabincell{c}{$n=tp^{m-e}+1$, $1\leq t\leq p^e$,\\ $(p^e+1)\mid (n-2)$} & $ k= \frac{n+p^e-1}{p^e+1}$ & \cite{RefJ (2023) Yang FFA} \\
        \hline
         $q=p^m$ odd, $\frac{m}{e}$ odd &  \tabincell{c}{$GRS_{k'}(\bm{a},\bm{v})^{\perp_E}=GRS_{n-k'}(\bm{a},\bm{v})$,\\ $1\leq k'\leq \lfloor \frac{n}{2}\rfloor$} & $ 1\leq k\leq \lfloor \frac{n+p^e-1}{p^e+1}\rfloor$ & \cite{RefJ (2022) Fang CC} \\
        \hline
         \tabincell{c}{$q=p^m$, $\frac{m}{s}$ even,\\ $s=\gcd(e,m)$ } &  \tabincell{c}{$GRS_{k'}(\bm{a},\bm{v}) \subseteq GRS_{k'}(\bm{a},\bm{v})^{\perp_s}$ over $\F_q$, \\ $\deg(\lambda(x))\leq n-k'-1$} & $ 1\leq k\leq \lfloor \frac{n+p^e-1-\deg(\lambda(x))}{p^e+1}\rfloor$ & \cite{RefJ (2023) Yang CC} \\
        \hline
         \tabincell{c}{$q=p^m$ odd, $m$ even,\\ $\frac{m}{s}$ odd, $s=\gcd(e,m)$}  &  \tabincell{c}{$GRS_{k'}(\bm{a},\bm{v}) \subseteq GRS_{k'}(\bm{a},\bm{v})^{\perp_H}$ over $\F_q$, \\ $\deg(\lambda(x))\leq n-k'-1$} & $ 1\leq k\leq \lfloor \frac{n+p^e-1-\deg(\lambda(x))}{p^e+1}\rfloor$ & \cite{RefJ (2023) Yang CC} \\
        \hline
        $q=p^m$ even, $\frac{m}{s}$ odd & $n\leq q$, $m>1$ & $1\leq  k\leq \lfloor\frac{n+p^e-1}{p^e+1}\rfloor$ & \cite{RefJ (2024) Qian dcc} \\
        \hline
        $q=p^m>3$ & $n\leq p^a$, $a\mid m$, $(p^e+1)\mid\frac{q-1}{p^a-1}$ & $1\leq  k\leq \lfloor\frac{n+p^e-1}{p^e+1}\rfloor$ & \cite{RefJ (2024) Qian dcc} \\
        \hline
        $q=p^m>3$ & $n\mid q$ & $1\leq  k\leq \lfloor\frac{n+p^e-1}{p^e+1}\rfloor$ & \cite{RefJ (2024) Qian dcc} \\
        \hline
          $q=p^m>3$ & $(n-1)\mid (q-1)$, $n>1$ & $1\leq  k\leq \lfloor\frac{n+p^e-1}{p^e+1}\rfloor$ & \cite{RefJ (2024) Qian dcc} \\
        \hline
          $q=p^m>3$ & $n=2n'\leq q$, $n'\mid q$, $-1\in \langle w^{p^e+1}\rangle$& $1\leq  k\leq \lfloor\frac{n+p^e-1}{p^e+1}\rfloor$ & \cite{RefJ (2024) Qian dcc} \\
        \hline
         $q=p^m>3$ & \tabincell{c}{$n=tp^{az}$, $a\mid m$,  $1\leq t\leq p^a$,\\ $1\leq z\leq \frac{m}{a}-1$, $\gcd(p^e+1,q-1)\mid \frac{q-1}{p^a-1}$} & $1\leq k\leq  \lfloor\frac{n+p^e-1}{p^e+1}\rfloor$ & \cite{RefJ (2024) Qian dcc} \\
         \hline
         $q=p^m$ & $n=m$ & $1\leq k\leq \min\{e,m-e\}$ & \cite{RefJ (2023) Islam} \\
        \hline

	\end{tabular}}
\end{center}
\end{table}

\begin{table}
\caption{Our results on $e$-Galois self-orthogonal MDS codes.}
\label{tab:2}
\begin{center}
\resizebox{\textwidth}{68mm}{
	\begin{tabular}{c|ccc}
		\hline
		Finite field $q$ & Length $n$ & Dimension $k$ & References\\
		\hline
\tabincell{c}{$q=p^m$, \\ $\gcd(e,m)=\gcd(e',m)$}
        & \tabincell{c}{$GRS_{k'}(\bm{a},\bm{v})\subseteq GRS_{k'}(\bm{a},\bm{v})^{\perp_{e'}}$ over $\F_{q}$, \\for some  $p^{e'}\leq k'\leq \lfloor \frac{n+p^{e'}-1}{p^{e'}+1}\rfloor$,\\ $\deg(\lambda(x))\leq n-(k'-1)(p^{e'}+1)-2$} & $1\leq k\leq \lfloor \frac{n+p^e-1-\deg(\lambda(x))}{p^e+1}\rfloor$ & Lemma \ref{lem relation} \\
      \hline

 \multirow{4}{*}{\tabincell{c}{$q=p^m$ even, $\frac{m}{s}$ odd,\\ $s=\gcd(e,m)$}}
        & $1\leq n\leq \min \{p^s+1,q\}$  & $1\leq k\leq \lfloor \frac{n}{2}\rfloor$ & Corollary \ref{cor ps 2} (1)\\

        \cline{2-4}
        \multirow{4}{*}{}
       & \tabincell{c}{$p^s+1\leq n\leq p^{2s}$,\\ $n=\sum_{i=1}^tn_i$, $n_i\leq p^s$, $t\leq p^s$} & \tabincell{c}{$1\leq k\leq \max\{\lfloor \frac{n+p^e-1}{p^e+1}\rfloor,$\\$ \min\{\lfloor \frac{n_1}{2}\rfloor,\lfloor \frac{n_2}{2}\rfloor,\dots,\lfloor \frac{n_t}{2}\rfloor \} \}$ }& Theorem \ref{th p2s 1} (1) \\

       \cline{2-4}

       \multirow{4}{*}{}
       & \tabincell{c}{$n=r(p^e-1)+1$, \\$s=e$, $1\leq r\leq p^e+1$}  & $1\leq k\leq \lfloor \frac{p^e+r-1}{2}\rfloor$ & Theorem \ref{th 11 3333}\\

       \cline{2-4}

       \multirow{4}{*}{}
       & $n=q+1$ & $1\leq k\leq \lfloor \frac{q+p^e-2}{p^e+1}\rfloor$ & Theorem \ref{th 2 q+1} \\
       \hline

    \multirow{9}{*}{\tabincell{c}{$q=p^m$ odd, $\frac{m}{s}$ odd,\\ $s=\gcd(e,m)$}}
        & \tabincell{c}{$1\leq n\leq p^s$, \\$GRS_{k'}(\bm{a},\bm{v})\subseteq GRS_{k'}(\bm{a},\bm{v})^{\perp_E}$ over $\F_{p^s}$} & $1\leq k\leq k'$ & Corollary \ref{cor ps 2} (2)\\

        \cline{2-4}
        \multirow{9}{*}{}
       & \tabincell{c}{$2\leq n\leq p^s+1$, \\$GRS_{k'}(\bm{a},\bm{v},\infty)\subseteq GRS_{k'}(\bm{a},\bm{v},\infty)^{\perp_E}$ over $\F_{p^s}$ }& $1\leq k\leq k'$ & Corollary \ref{cor ps 2} (2) \\

       \cline{2-4}

       \multirow{9}{*}{}
       & \tabincell{c}{$p^s+1\leq n\leq p^{2s}$, \\$n=\sum_{i=1}^tn_i$, $n_i\leq p^s$, $t\leq p^s$,\\ $GRS_{k_i}(\bm{a}_i,\bm{v}_i)\subseteq GRS_{k_i}(\bm{a}_i,\bm{v}_i)^{\perp_E}$ over $\F_{p^s}$}   & $1\leq k\leq \min\{k_1,k_2,\dots,k_t\} $ & Theorem \ref{th p2s 1} (2) \\

       \cline{2-4}

       \multirow{9}{*}{}
       & \tabincell{c}{$1\leq n\leq q$, \\$GRS_{k'}(\bm{a},\bm{v})\subseteq GRS_{k'}(\bm{a},\bm{v})^{\perp_E}$ over $\F_{q}$} & $1\leq k\leq 1+\lfloor \frac{2k'-2}{p^e+1}\rfloor$ & Corollary \ref{cor relation}\\

        \cline{2-4}
       \multirow{9}{*}{}
       & $n=q+1$ & $1\leq k\leq \lfloor \frac{q+p^e-4}{p^e+1}\rfloor$ & Theorem \ref{th 2 q+1} (2)\\

       \hline

     \multirow{5}{*}{\tabincell{c}{$q=p^m$, $\frac{m}{s}$ even,\\ $s=\gcd(e,m)$}}
        & $1\leq n\leq p^s+1$ & $1\leq k\leq \lfloor \frac{n}{2}\rfloor$  & Corollary \ref{cor ps 2} (3)\\

        \cline{2-4}
        \multirow{5}{*}{}
       &  \tabincell{c}{ $1 \leq n\leq p^{2s}$, \\$GRS_{k'}(\bm{a},\bm{v})\subseteq GRS_{k'}(\bm{a},\bm{v})^{\perp_H}$ over $\F_{p^{2s}}$ }& $1\leq k\leq k'$ & Theorem \ref{th p2s 1} (3)\\

       \cline{2-4}

       \multirow{5}{*}{}
       & \tabincell{c}{ $2 \leq n\leq p^{2s}+1$, \\$GRS_{k'}(\bm{a},\bm{v},\infty)\subseteq GRS_{k'}(\bm{a},\bm{v},\infty)^{\perp_H}$ over $\F_{p^{2s}}$} & $1\leq k\leq k'$ & Theorem  \ref{th p2s 1} (3)\\

        \cline{2-4}
       \multirow{5}{*}{}
       & $n=q+1$ & \tabincell{c}{$1\leq k\leq \lfloor \frac{q+p^e-2(p^s+1)}{p^e+1}\rfloor$ or \\ $k=\frac{q-1}{p^e+1}+1$ $(s=e)$} & Theorem \ref{th 2 q+1} (3)\\

        \hline

	\end{tabular}}
\end{center}
\end{table}

\subsection{Our results}

In this paper, we focus on constructing new Galois self-orthogonal MDS codes, in particular, $e$-Galois self-orthogonal MDS codes with dimension greater than $\lfloor \frac{n+p^e-1}{p^e+1}\rfloor$.
Moreover, we give some direct and effective methods for the construction of Galois self-orthogonal codes.
The main contributions of this paper can be summarized as follows:

\begin{itemize}
\item
For a vector $\bm{v}\in (\F_q^*)^n$ and a $q$-ary linear code $\C$ with generator matrix $G$,
we give some necessary and sufficient conditions for the equivalent code $\Phi_{\bm{v}}(\C)$ of $\C$ and the extended code $\widetilde{\Phi_{\bm{v}}(\C)}(G diag(\bm{v}),\infty)$ (see Definition \ref{def Extend}) of $\Phi_{\bm{v}}(\C)$ to
be $e$-Galois self-orthogonal (see Lemmas \ref{lem galois oa} and \ref{lem galois oa extend}).
This provides a general framework for constructing Galois self-orthogonal codes.
From this, we directly obtain some necessary and sufficient conditions for GRS and EGRS codes to be Galois self-orthogonal (see Lemmas \ref{lem <a,v>} and \ref{lem <a,v> EG}).

\item
Necessary and sufficient conditions for GRS and EGRS code to
be Galois self-orthogonal are presented when $p^e\leq k\leq \lfloor \frac{n+p^e-1}{p^e+1}\rfloor$, where $2e\leq m$ (see Lemmas \ref{lem lam(x)} and \ref{lem lam(x) 2}).
This is a generalisation of the results in \cite{RefJ (2021) Fang It} and an improvement of the results in \cite{RefJ (2023) Yang FFA}.
From this, we give a class of $e$-Galois self-orthogonal MDS codes obtained from $e'$-Galois self-orthogonal GRS codes,
where $\gcd(e,m)=\gcd(e',m)$ (see Lemma \ref{lem relation}).

\item
In order to construct $e$-Galois self-orthogonal MDS codes with dimension greater than  $\lfloor \frac{n+p^e-1}{p^e+1}\rfloor$, we consider $e$-Galois self-orthogonal (extended) GRS codes of length $n\leq p^{2e}$.

\begin{itemize}
\item
\textbf{Case of length} $1\leq n\leq p^s+1$: in this case, we construct some new $e$-Galois self-orthogonal MDS codes
that can have dimensions from $1$ to $\lfloor \frac{n}{2}\rfloor$
(see Theorem \ref{th 1 ax+b} and Corollary \ref{cor ps 2}).
Therefore, when $\frac{m}{s}$ is odd and $p$ is even, or $\frac{m}{s}$ is even, we completely determine the $q$-ary $e$-Galois self-orthogonal MDS codes of length $1\leq n\leq p^s+1$ for all possible parameters;

\item
\textbf{Case of length} $p^s+1 \leq n\leq p^{2s}+1$:
in this case, we construct some new $e$-Galois self-orthogonal MDS codes
that can have dimensions greater than $\lfloor \frac{n+p^e-1}{p^e+1}\rfloor$
(see Theorems \ref{th p2s 1} and \ref{th 11 3333}).
In particular, when $s=e$, the dimension of some $e$-Galois self-orthogonal (extended) GRS codes can take the value $p^e-1$.

\end{itemize}

\item
Moreover, we construct some new $q$-ary $e$-Galois self-orthogonal MDS codes of \textbf{length} $q+1$ (see Theorem \ref{th 2 q+1}).
The dimension of our  $e$-Galois self-orthogonal EGRS code is close to or equal to the value $\lceil\frac{q+p^e+1}{p^e+1}\rceil-1$.

\item
We list the new Galois self-orthogonal MDS codes we constructed in Table \ref{tab:2} of Section \ref{sec5}.
By comparison, it can be shown that the Galois self-orthogonal MDS codes obtained in this paper have new parameters.
Moreover, by propagation rules, we obtain some new MDS codes
with Galois hulls of arbitrary dimensions.
Then many quantum codes can be obtained from these MDS codes with Galois
hulls.

\end{itemize}

\subsection{Organization of this paper}

The rest of this paper is organized as follows.
In Section \ref{sec2}, we recall some basic notations about Galois self-orthogonal (extended) GRS codes.
In Section \ref{sec3},
we give the main methods of this paper.
In Section \ref{sec4}, we construct many new classes of Galois self-orthogonal MDS codes.
In Section \ref{sec5}, we compare the results of this paper with previous results and use propagation rules to obtain new MDS codes with Galois hulls.
Finally, Section \ref{sec6} concludes this paper.

\section{Preliminaries}\label{sec2}

Throughout this paper, we fix some notations as follows for convenience.

\begin{itemize}
\item
Let $q=p^m$, where $p$ is prime and $\F_q$ be the finite field with $q$ elements.
Set $\F_q^*=\F_q\backslash \{0\}$.
\item
$k$ and $n$ are both positive integers with $1\leq k\leq n$.
\item
$e$ is an integer with $0\leq e\leq m-1$. Let $s=\gcd(e,m)=\gcd(m-e,m)$.

\item
For any finite set $S$, we denote $|S|$ as the number of all elements in $S$.

\item
Fix $w$ as a primitive element of $\F_q$, i.e. $\F_q^*=\langle w\rangle$.
The multiplicative order of $a$ in $\F_q^*$ is denoted by $ord(a)$.
Let $H=\langle w^{p^e+1}\rangle$ be the cyclic group generated by $w^{p^e+1}$,
it follows that $|H|=ord(w^{p^e+1})=\frac{q-1}{\gcd(q-1,p^e+1)}$.
Let $QR$ be the set of nonzero squares of $\F_q$.

\item
Let $\bm{0}=(0,0,\dots,0)$, $\bm{1}=(1,1,\dots,1)$, $\infty=(0,\dots,0,1)$
and the lengths of $\bm{0}$, $\bm{1}$ and $\infty$ depend on the
context.
\item

For an $k\times n$ matrix $G$ over $\F_q$ with row vectors $\bm{g}_1,\bm{g}_2,\dots,\bm{g}_k\in \F_q^n$, write $G=(\bm{g}_1,\bm{g}_2,\dots,\bm{g}_k)^{\mathcal{T}}$.

\item
For any vector $\bm{c}=(c_1,c_2,\dots,c_n)\in \F_{q}^n$ and integer $i$, denote
$\bm{c}^i=\begin{cases}
(1,1,\dots,1), & {\rm if}\ i=0;\\
(c_1^i,c_2^i,\dots,c_n^i), & {\rm if}\ i\neq 0.\\
\end{cases}$

\item
For any vector $\bm{a}=(a_{1},a_{2},\dots,a_{n})\in \F_q^n$
with $a_1,a_2,\dots,a_n$ distinct elements, denote
$$ u_i=\prod_{1\leq j\leq n,j\neq i}(a_i-a_j)^{-1}\quad {\rm and}\quad \bm{u}=(u_1,u_2,\dots,u_n)\in (\F_q^*)^n \ {\rm for}\ 1\leq i\leq n.$$

\item
Let $\sigma$: $\F_q\rightarrow\F_q$, $a\mapsto a^p$ be the \emph{Frobenius automorphism} of $\F_q$.
For any vector $\bm{x}=(x_1,x_2,\dots,x_n)\in \F_q^n$ and any matrix $G=(g_{ij})_{k\times n}$ over $\F_q$,
we denote
$$\sigma(\bm{x})=(\sigma(x_1),\sigma(x_2),\dots,\sigma(x_n))\quad {\rm and}\quad \sigma(G)=(\sigma(g_{ij}))_{k\times n}.$$
It follows that the mapping $\sigma^e$: $\F_q\rightarrow\F_q$, $\sigma^e(a)=a^{p^e}$, $\forall\ a\in \F_q$
is an automorphism of $\F_q$.
Moreover, the inverse of the mapping
$\sigma^e$ is denoted by $\sigma^{m-e}$: $\sigma^{m-e}(a)=a^{p^{m-e}}$.

\item

For any two vectors $\bm{x}=(x_1,x_2,\dots,x_n)\in \F_{q}^n$ and $\bm{y}=(y_1,y_2,\dots,y_n)\in \F_{q}^n$,
the \emph{Schur product} between $\bm{x}$ and $\bm{y}$ is defined as
$$\bm{x}\star \bm{y}=(x_1y_1,x_2y_2,\dots,x_ny_n)\in \F_q^n.$$
Moreover, the \emph{Schur product} of two codes $\C$ and $\C'$ with length $n$ is defined as
$$\C\star \C'=\big \langle \bm{c}\star \bm{c}': \bm{c}\in \C,\bm{c}'\in \C' \big \rangle.$$

\item
For a $q$-ary linear code $\C$ with length $n$ and vector $\bm{v}\in (\F_q^*)^n$,
denote
$$\Phi_{\bm{v}}(\C)= \big\{\bm{v}\star \bm{c}: \bm{c}\in \C \big\}.$$
It is easy to check that $\C$ and $\Phi_{\bm{v}}(\C)$ are \emph{linearly equivalent}.
\end{itemize}

\subsection{Galois dual codes and Galois hulls}

For any two vectors $\bm{x}=(x_1,x_2,\dots,x_n),\ \bm{y}=(y_1,y_2,\dots,y_n)\in \F_{q}^n$,
the \emph{$e$-Galois inner product} of vectors $\bm{x}$ and $\bm{y}$ is defined as
$$\langle \bm{x},\bm{y} \rangle_{e}=\sum_{i=1}^{n}x_iy_i^{p^e},$$
where $e$ is an integer with $0\leq e\leq m-1$.
In a natural view, we can regard the $m$-Galois inner product as the $0$-Galois inner product.
The $e$-Galois inner product is a generalization of the Euclidean inner product ($e=0$) and the Hermitian inner product ($e=\frac{m}{2}$ with even $m$).
For convenience, we use $\langle - ,- \rangle_E$ and $\langle -,- \rangle_H$ to denote $\langle -, -\rangle_{0}$ and $\langle -,-\rangle_{\frac{m}{2}}$ ($m$ even), respectively.
Let $\C$ be a linear code over $\F_q$, then the \emph{$e$-Galois dual code} of $\C$ is defined as
$$\C^{\perp_{e}}=\Big\{\bm{x}\in \F_{q}^n:\langle \bm{y}, \bm{x}\rangle_{e}=0, \ {\rm for \ all}\ \bm{y}\in \C \Big \}.$$
It follows that $\C^{\bot_0}$ and $\C^{\bot_{\frac{m}{2}}}$ ($m$ even) are just the Euclidean and Hermitian dual codes of $\C$, respectively.
For convenience, we denote them as $\C^{\bot_E}$ and $\C^{\bot_H}$, respectively.
Denote $\Hull_e(\C)=\C\cap \C^{\bot_e}$ as the \emph{$e$-Galois hull} of $\C$.
In particular, if $\C\subseteq \C^{\bot_e}$ (i.e., $\Hull_e(\C)=\C$),
we say that $\C$ is \emph{$e$-Galois self-orthogonal.}
It is worth noting that Galois self-orthogonal codes have the following properties.

\begin{lemma}\label{lem duichen}
(\cite[Lemma 2.4]{RefJ (2023) Yang DM})
Let $\C$ be an $[n,k,d]_q$ code. Then $\C$ is an $e$-Galois self-orthogonal code if and only if $\C$ is an $(m-e)$-Galois self-orthogonal code.
\end{lemma}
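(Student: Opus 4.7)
The plan is to transfer one self-orthogonality condition to the other by applying a suitable power of the Frobenius automorphism $\sigma\colon a\mapsto a^p$ to the defining identity of the $e$-Galois inner product, then using the fact that $\sigma^m$ is the identity on $\F_q$.

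Concretely, I would unpack the hypothesis $\C\subseteq\C^{\perp_e}$ as: for every $\bm{x},\bm{y}\in\C$,
\[
\sum_{i=1}^n x_i\, y_i^{p^e}=0.
\]
Since this is an equality in $\F_q$ and $\sigma^{m-e}$ is a field automorphism, I may apply $\sigma^{m-e}$ to both sides to obtain
\[
\sum_{i=1}^n x_i^{p^{m-e}}\, y_i^{p^{m-e}\cdot p^e}=0.
\]
The $y_i$-exponent simplifies to $p^m$, and $y_i^{p^m}=y_i^q=y_i$ for every $y_i\in\F_q$. What remains is $\sum_{i=1}^n y_i\, x_i^{p^{m-e}}=0$, i.e.\ $\langle\bm{y},\bm{x}\rangle_{m-e}=0$. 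Since $\bm{x},\bm{y}$ range independently over $\C$, this yields $\C\subseteq\C^{\perp_{m-e}}$.

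The converse direction follows by running the same argument with the roles of $e$ and $m-e$ swapped, which is legitimate because $m-(m-e)=e$ and $\gcd(m-e,m)=\gcd(e,m)=s$, so the two situations are entirely symmetric. I do not expect any genuine obstacle here: the only technical ingredients are that a field automorphism commutes with finite sums and that $a^{p^m}=a$ for $a\in\F_q$, both of which are immediate. Notably, no structure of $\C$ beyond linearity (and containment in $\F_q^n$) enters, so the lemma is essentially bookkeeping with Frobenius exponents.
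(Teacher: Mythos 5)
Your proof is correct: applying $\sigma^{m-e}$ to the identity $\sum_i x_i y_i^{p^e}=0$ and using $a^{p^m}=a$ is exactly the right bookkeeping, and the quantification over $\bm{x},\bm{y}\in\C$ is handled properly. The paper itself offers no proof of this lemma (it is quoted from the cited reference), so there is nothing to contrast with; your argument is the standard one and is what the cited source does.
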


\begin{remark}\label{rem 1}
By Lemma \ref{lem duichen}, we have the following results.
\begin{itemize}
\item

For the construction of $e$-Galois self-orthogonal codes, we only discuss the case $2e\leq m$ without loss of generality.
\item
If there exists an $[n_{e},k_{e}]_q$ $e$-Galois self-orthogonal MDS code, where $n_{e}$, $k_{e}$ are related to $e$,
then there exists an $[n_{e},k_{e}]_q$ $(m-e)$-Galois self-orthogonal MDS code.
\end{itemize}
\end{remark}

\subsection{GRS and EGRS codes}

\begin{definition}\label{def GRS}
Let $\bm{a}=(a_{1},a_{2},\dots,a_{n})\in \F_q^n$
with $a_1,a_2,\dots,a_n$ distinct elements and $\bm{v}=(v_1,v_2,\dots,v_n)\in (\F_{q}^*)^n$.
For an integer $1\leq k\leq n$, the GRS code associated with $\bm{a}$ and $\bm{v}$ is defined as
$$GRS_{k}(\bm{a},\bm{v})=\Big\{(v_{1}f(a_{1}),v_{2}f(a_{2}),\dots,v_{n}f(a_{n})): f(x)\in \F_{q}[x],\ \deg(f(x))\leq k-1 \Big\}.$$
The code $GRS_k(\bm{a},\bm{v})$ has a generator matrix
$$ G_{k}(\bm{a},\bm{v})=(\bm{a}^0\star \bm{v},\bm{a}^1\star \bm{v},\dots,\bm{a}^{k-1}\star \bm{v})^\mathcal{T}. $$

Moreover, the $k$-dimensional EGRS code of length $n+1$ associated to $\bm{a}$ and $\bm{v}$ is defined as
$$GRS_{k}(\bm{a},\bm{v},\infty)=\Big \{(v_{1}f(a_{1}),v_{2}f(a_{2}),\dots,v_{n}f(a_{n}),f_{k-1}):\ f(x)\in \F_{q}[x],\ \deg(f(x))\leq k-1 \Big\},$$
where $f_{k-1}$ is the coefficient of $x^{k-1}$ in $f(x)$.
The code $GRS_k(\bm{a},\bm{v},\infty)$ has a generator matrix
$$G_{k}(\bm{a},\bm{v},\infty)=[(\bm{a}^0\star \bm{v},\bm{a}^1\star \bm{v},\dots,\bm{a}^{k-1}\star \bm{v})^{\mathcal{T}}|\infty^T],$$
where $\infty=(0,0,\dots,1)\in \F_{q}^k$.
\end{definition}

The elements $a_1, a_2,\dots,a_n$ are called the \emph{code locators} of
$GRS_k(\bm{a},\bm{v})$ (resp. $GRS_k(\bm{a},\bm{v},\infty)$).
It is well known that GRS codes, EGRS codes and their dual codes are MDS codes.
There are some important results in the literature on GRS codes and EGRS codes. We review them here.

\begin{lemma}\label{lem equiv}
(\cite[Corollary 3.3]{RefJ (2023) Liu Galois})
Let the two vectors $\bm{a}$ and $\bm{v}$ be defined as in Definition \ref{def GRS}.
For any $a\in \F_q^*$, $b\in \F_q$ and $\lambda\in \F_q^*$,
we have $$GRS_k(\bm{a},\bm{v})=GRS_k(a\bm{a}+b\bm{1},\lambda \bm{v})$$
and
$$GRS_k(\bm{a},\bm{v},\infty)=GRS_k(a\bm{a}+b\bm{1},a^{1-k}\bm{v},\infty),$$
where $a\bm{a}+b\bm{1}=(aa_1+b,aa_2+b,\dots,aa_n+b)\in \F_q^n$ and $\lambda\bm{v}=(\lambda v_1, \lambda v_2, \dots, \lambda v_n)\in (\F_q^*)^n$.
\end{lemma}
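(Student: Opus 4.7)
My plan is to prove both equalities by a change-of-variables argument on the defining polynomial, exploiting the fact that the map $x \mapsto ax+b$ is an affine bijection on $\F_q$ whenever $a \in \F_q^*$.

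First I would note that $a\bm{a}+b\bm{1}$ has pairwise distinct entries (since $\bm{a}$ does and $a \neq 0$), so both right-hand sides are legitimate GRS/EGRS codes. For the first equality, I would show the two codes have the same set of codewords. A generic codeword on the left has the form $(v_i f(a_i))_{i=1}^n$ with $\deg f \le k-1$. To produce the same vector on the right, I would take the polynomial $g(y) = \lambda^{-1} f\!\left(\tfrac{y-b}{a}\right)$, which is again of degree $\le k-1$, and observe that $\lambda v_i\, g(a a_i + b) = v_i f(a_i)$ for every $i$. Conversely, starting from $g$ on the right, the polynomial $f(x) = \lambda g(ax+b)$ recovers the left-hand codeword. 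Since $f \leftrightarrow g$ is a bijection on $\F_q[x]_{<k}$, the two codes coincide.

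For the extended case the argument is essentially the same, but the scaling of $\bm{v}$ has been rigged precisely to preserve the ``infinity coordinate.'' The subtle point is that the last coordinate is not $g(\infty)$ but the coefficient of $y^{k-1}$ in $g$. Writing $f(x) = \sum_{i=0}^{k-1} f_i x^i$, I would now take
\[
g(y) \;=\; a^{k-1}\, f\!\left(\tfrac{y-b}{a}\right) \;=\; \sum_{i=0}^{k-1} f_i\, a^{k-1-i}(y-b)^i.
\]
Only the $i=k-1$ summand contributes to the coefficient of $y^{k-1}$, and it contributes exactly $f_{k-1}$; thus $g_{k-1}=f_{k-1}$, so the last coordinates agree. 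For the first $n$ coordinates,
\[
a^{1-k} (\lambda\text{-scaled } v_i)\, g(a a_i + b)
\;=\; a^{1-k} v_i \cdot a^{k-1} f(a_i) \;=\; v_i f(a_i),
\]
so these agree as well. Reversing the construction via $f(x) = a^{1-k} g(ax+b)$ (of degree $\le k-1$, with leading coefficient $f_{k-1} = g_{k-1}$) gives the opposite containment.

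The only mildly tricky point is the power of $a$: one must choose the normalization $g(y) = a^{k-1} f(\tfrac{y-b}{a})$ so that the leading coefficient is preserved, and then compensate in the ``$\bm{v}$-slot'' with the factor $a^{1-k}$ to keep the first $n$ coordinates unchanged. Once this bookkeeping is in place, both equalities follow from the bijectivity of $f \leftrightarrow g$ on polynomials of degree less than $k$, with no extra case analysis required.
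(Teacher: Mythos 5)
Your proof is correct and complete: the affine change of variables $f(x)\mapsto g(y)=\lambda^{-1}f\bigl(\tfrac{y-b}{a}\bigr)$ (resp.\ $g(y)=a^{k-1}f\bigl(\tfrac{y-b}{a}\bigr)$) is a linear bijection on polynomials of degree less than $k$, and you verify both the evaluation coordinates and, in the extended case, the preservation of the leading coefficient, which is exactly where the normalization $a^{1-k}$ on $\bm{v}$ is needed. The paper itself gives no proof of this lemma --- it is quoted from \cite[Corollary 3.3]{RefJ (2023) Liu Galois} --- so there is nothing to compare against; your argument is the standard one and stands on its own.
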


\begin{remark}\label{rem 2}
By Lemma \ref{lem equiv}, we have the following results.

\begin{itemize}
\item
By taking $a=1$, $b=-a_n$ and $\lambda=v_n^{-1}$
in Lemma \ref{lem equiv}, then $aa_n-b=0$ and $\lambda v_n=1$, thus for any GRS code $GRS_k(\bm{a},\bm{v})$, we can always assume $a_n=0$ and $v_n=1$
without loss of generality.
\item
For an EGRS code $GRS_k(\bm{a},\bm{v},\infty)$ of length $n+1\leq q$, there exists $b\in \F_q$ such that $\bm{a}+b\bm{1}\in (\F_q^*)^{n}$, thus we can always assume that $\bm{a}\in (\F_q^*)^{n}$ without loss of generality.
\end{itemize}
\end{remark}

\begin{lemma}\label{lem GRS=EGRS}
(\cite{RefJ (2022) equivalence})
Let $\bm{a}=(a_{1},a_{2},\dots,a_{n})\in (\F_q^*)^n$
with $a_1,a_2,\dots,a_n$ distinct elements and $\bm{v}=(v_1,v_2,\dots,v_n)\in (\F_{q}^*)^n$.
Then we have
$$GRS_k(\bm{a},\bm{v},\infty)=GRS_k((\bm{a}^{-1},0),(\bm{v}\star \bm{a}^{k-1} ,1)).$$
\end{lemma}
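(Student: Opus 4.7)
The plan is to exhibit an explicit bijection between the generating polynomials of the two sides, so that every codeword of $GRS_k(\bm{a},\bm{v},\infty)$ appears as a codeword of $GRS_k((\bm{a}^{-1},0),(\bm{v}\star \bm{a}^{k-1},1))$ and vice versa. Since $\bm{a}\in(\F_q^*)^n$ has distinct entries, $\bm{a}^{-1}$ also has distinct entries, and adjoining the new evaluation point $0$ is legitimate; hence the right-hand side is a bona fide GRS code of the correct length $n+1$ and dimension $k$.

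The natural candidate for the bijection is the \emph{reciprocal polynomial} map $f(x)\mapsto g(x):=x^{k-1}f(1/x)$. If $f(x)=\sum_{j=0}^{k-1}f_jx^j$ has $\deg f\leq k-1$, then $g(x)=\sum_{j=0}^{k-1}f_{k-1-j}x^j$ also has $\deg g\leq k-1$, and this assignment is an $\F_q$-linear involution (up to relabelling coefficients) on the space of polynomials of degree at most $k-1$. First I would record the two key identities: $g(0)=f_{k-1}$ and, for each $1\leq i\leq n$, $a_i^{k-1}g(a_i^{-1})=a_i^{k-1}\cdot a_i^{-(k-1)}f(a_i)=f(a_i)$.

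Multiplying the second identity by $v_i$ yields $v_i a_i^{k-1}g(a_i^{-1})=v_if(a_i)$, which is exactly the matching of the first $n$ coordinates of the two codewords, while the first identity matches the last coordinate ($f_{k-1}$ on the left, $g(0)$ on the right, since $0$ is the extra locator with column multiplier $1$). Thus, reading off Definition \ref{def GRS}, the codeword of $GRS_k(\bm{a},\bm{v},\infty)$ determined by $f$ coincides with the codeword of $GRS_k((\bm{a}^{-1},0),(\bm{v}\star\bm{a}^{k-1},1))$ determined by $g$, proving the inclusion $GRS_k(\bm{a},\bm{v},\infty)\subseteq GRS_k((\bm{a}^{-1},0),(\bm{v}\star\bm{a}^{k-1},1))$.

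Since the reciprocal map is a bijection on the space $\{h\in\F_q[x]:\deg h\leq k-1\}$, the reverse inclusion is obtained by running the same argument with $f$ and $g$ interchanged, or one may simply note that both codes have the same dimension $k$ and deduce equality from the containment. There is no real obstacle here; the only point to be a little careful about is making sure the degree bound is preserved by the reciprocal map (which it is, because $k-1\geq\deg f$ guarantees $g=x^{k-1}f(1/x)$ is a polynomial of degree $\leq k-1$) and that $a_i^{-1}$ makes sense for every $i$, which uses the hypothesis $\bm{a}\in(\F_q^*)^n$.
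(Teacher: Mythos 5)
Your proof is correct: the reciprocal-polynomial substitution $f(x)\mapsto x^{k-1}f(1/x)$ is exactly the standard argument for this equivalence, and you verify both the coordinate-matching identities and the bijectivity of the map on polynomials of degree at most $k-1$. The paper itself offers no proof of this lemma (it only cites an external reference), so there is nothing to contrast your argument with; it stands as a complete and correct justification.
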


From Lemma \ref{lem GRS=EGRS}, we have the following corollary which gives the equivalence of GRS codes and EGRS codes when $n\leq q$.

\begin{corollary}\label{cor GRS=EGRS}
(\cite{RefJ (1996) Pellikaan})
If $n\leq q$, then a linear code with length $n$ is GRS if and only if it is EGRS.
\end{corollary}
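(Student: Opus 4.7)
The plan is to prove both directions by reduction to Lemma \ref{lem GRS=EGRS}, which already provides the bridge between a GRS code with $0$ as one of its code locators and an EGRS code.

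For the direction EGRS $\Rightarrow$ GRS, I start with an EGRS code of length $n$ with $n \leq q$, so the underlying vector $\bm{a}$ has $n-1$ distinct entries in $\F_q$. Since $n - 1 \leq q - 1$, the set $\{-a_1,\ldots,-a_{n-1}\}$ is a proper subset of $\F_q$, so there exists $b \in \F_q$ with $a_i + b \neq 0$ for all $i$. By Lemma \ref{lem equiv} (and Remark \ref{rem 2}), I may therefore assume without loss of generality that $\bm{a} \in (\F_q^*)^{n-1}$. Then Lemma \ref{lem GRS=EGRS} applied directly expresses the EGRS code as the GRS code $GRS_k((\bm{a}^{-1},0),(\bm{v}\star\bm{a}^{k-1},1))$ of length $n$.

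For the direction GRS $\Rightarrow$ EGRS, I take a GRS code $GRS_k(\bm{b},\bm{u})$ of length $n \leq q$. Using Remark \ref{rem 2}, I can assume $b_n = 0$ and $u_n = 1$. Then I want to exhibit vectors $\bm{a} \in (\F_q^*)^{n-1}$ and $\bm{v} \in (\F_q^*)^{n-1}$ such that Lemma \ref{lem GRS=EGRS} yields $GRS_k(\bm{a},\bm{v},\infty) = GRS_k(\bm{b},\bm{u})$. Setting $\bm{a} = (b_1^{-1},\ldots,b_{n-1}^{-1})$ matches the code locators; then I read off $\bm{v}$ from the requirement $\bm{v}\star\bm{a}^{k-1} = (u_1,\ldots,u_{n-1})$, i.e., $v_i = u_i b_i^{k-1}$, which lies in $(\F_q^*)^{n-1}$. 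Lemma \ref{lem GRS=EGRS} then gives the desired identification.

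The main work is essentially bookkeeping: matching the two parametrizations through Lemma \ref{lem GRS=EGRS}. The only place where the hypothesis $n \leq q$ is essential is in the first direction, where it guarantees that the affine shift $b$ exists so we can move all code locators into $\F_q^*$; in the other direction, $n \leq q$ is automatic because a GRS code of length $n$ requires $n$ distinct elements of $\F_q$. No step presents a genuine obstacle; the only subtlety to keep track of is verifying that every scaling and shift used is actually covered by Remark \ref{rem 2} and Lemma \ref{lem equiv}.
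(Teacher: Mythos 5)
Your proposal is correct and follows the same route the paper intends: the paper derives the corollary directly from Lemma \ref{lem GRS=EGRS} (together with the affine-shift freedom of Lemma \ref{lem equiv} and Remark \ref{rem 2}), and you have simply written out the routine bookkeeping for both directions. The only caveat, which the paper also glosses over, is the degenerate case $k=n$ (the full space), where the EGRS parametrization of length $n$ is not defined; for $k<n$ your argument is complete.
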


It is easy to verify the following lemma, which gives some subcodes of GRS codes and these subcodes are still GRS codes.

\begin{lemma}\label{lem kkkk}
For $1\leq k \leq k'\leq n$, the following statements hold.
\begin{itemize}
\item[(1)]  $GRS_k(\bm{a},\bm{v})\subseteq GRS_{k'}(\bm{a},\bm{v})$.
\item[(2)] If $\bm{a}\in (\F_q^*)^n$, $GRS_k(\bm{a},\bm{a}^{i}\star \bm{v})\subseteq GRS_{k'}(\bm{a},\bm{v})$, where $0\leq i\leq k'-k$.
\item[(3)] If $\bm{a}\in (\F_q^*)^n$,  $GRS_k(\bm{a}, \bm{a}^{k'-k}\star \bm{v}, \infty)\subseteq GRS_{k'}(\bm{a},\bm{v},\infty).$
\end{itemize}
\end{lemma}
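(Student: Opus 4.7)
The plan is to verify each of the three inclusions by taking an arbitrary codeword in the smaller code, exhibiting it as coming from a polynomial of the correct degree on the right-hand side, and hence identifying it as a codeword in the larger code. Since these are evaluation-style statements, the whole argument should reduce to polynomial-degree bookkeeping, and I do not expect any real obstacle.

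For part (1), I would simply observe that any polynomial $f(x) \in \F_q[x]$ with $\deg f \leq k-1$ automatically satisfies $\deg f \leq k'-1$ when $k \leq k'$. Thus the defining evaluation codeword $(v_1 f(a_1), \ldots, v_n f(a_n))$ lies in $GRS_{k'}(\bm{a}, \bm{v})$ by the very definition in Definition \ref{def GRS}.

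For part (2), I would take a typical codeword of $GRS_k(\bm{a}, \bm{a}^i \star \bm{v})$, namely $(v_j a_j^{\,i} f(a_j))_{j=1}^n$ with $\deg f \leq k-1$, and set $g(x) := x^{i} f(x)$. Since $\bm{a}\in (\F_q^*)^n$, evaluation at $a_j$ gives $g(a_j) = a_j^{\,i} f(a_j)$, and $\deg g \leq i + k - 1 \leq k' - 1$ by the hypothesis $0 \leq i \leq k'-k$. Therefore $(v_j g(a_j))_j \in GRS_{k'}(\bm{a}, \bm{v})$, which is exactly the chosen codeword.

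For part (3), the only additional point compared with (2) is keeping track of the extended coordinate. I would take a generic codeword of $GRS_k(\bm{a}, \bm{a}^{k'-k}\star \bm{v}, \infty)$, which has the form $(v_j a_j^{\,k'-k} f(a_j),\, f_{k-1})$ for $\deg f \leq k-1$, and again set $g(x) := x^{\,k'-k} f(x)$, so that $\deg g \leq k'-1$. The key bookkeeping step is that the coefficient $g_{k'-1}$ of $x^{k'-1}$ in $g$ equals the coefficient $f_{k-1}$ of $x^{k-1}$ in $f$; together with $g(a_j) = a_j^{\,k'-k} f(a_j)$, this shows that the EGRS codeword associated to $g$ in $GRS_{k'}(\bm{a}, \bm{v}, \infty)$ coincides with the chosen codeword. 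Hence the stated inclusion holds.
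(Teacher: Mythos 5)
Your proof is correct; the paper states Lemma \ref{lem kkkk} without proof (``It is easy to verify...''), and your degree-bookkeeping argument via $g(x)=x^{i}f(x)$ (resp.\ $g(x)=x^{k'-k}f(x)$, noting $g_{k'-1}=f_{k-1}$) is exactly the standard verification the authors intend.
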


Note that for $1\leq k \leq k'\leq n$, if $GRS_{k'}(\bm{a},\bm{v})$ is $e$-Galois self-orthogonal, then $GRS_k(\bm{a},\bm{v})$ is also $e$-Galois self-orthogonal. By Lemma \ref{lem kkkk}, we have the following corollary.

\begin{corollary}\label{cor kkkk}
If there exists an $[n,k']_{q}$
$e$-Galois self-orthogonal (extended) GRS code, where $n\leq q$,
then for $1\leq k\leq k'$, there exists an $[n,k]_q$ $e$-Galois self-orthogonal (extended) GRS code.
\end{corollary}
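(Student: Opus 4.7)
The proof plan rests on a single general observation about Galois duality: if $\C \subseteq \C'$, then $\C'^{\perp_e} \subseteq \C^{\perp_e}$, so whenever $\C' \subseteq \C'^{\perp_e}$, one automatically obtains
$$\C \subseteq \C' \subseteq \C'^{\perp_e} \subseteq \C^{\perp_e},$$
i.e.\ $e$-Galois self-orthogonality is inherited by subcodes. Consequently, to prove Corollary~\ref{cor kkkk} it suffices, given an $[n,k']_q$ $e$-Galois self-orthogonal (extended) GRS code, to exhibit an $[n,k]_q$ (extended) GRS subcode for every $1\le k\le k'$, and invoke Lemma~\ref{lem kkkk}.

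First I would handle the GRS case. Starting from an $[n,k']_q$ $e$-Galois self-orthogonal code $GRS_{k'}(\bm{a},\bm{v})$, I would apply part (1) of Lemma~\ref{lem kkkk} directly, which gives $GRS_k(\bm{a},\bm{v}) \subseteq GRS_{k'}(\bm{a},\bm{v})$ for every $1\le k\le k'$. Since a subcode of an $e$-Galois self-orthogonal code is $e$-Galois self-orthogonal, $GRS_k(\bm{a},\bm{v})$ is an $[n,k]_q$ $e$-Galois self-orthogonal GRS code, as required.

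Next I would treat the EGRS case. Given an $[n,k']_q$ $e$-Galois self-orthogonal code $GRS_{k'}(\bm{a},\bm{v},\infty)$, note that its length $n$ satisfies $n\le q$, so the evaluation vector $\bm{a}$ has length $n-1\le q-1$. By Remark~\ref{rem 2}, after replacing $\bm{a}$ by $\bm{a}+b\bm{1}$ for a suitable $b\in \F_q$ (which preserves the code by Lemma~\ref{lem equiv}), I may assume $\bm{a}\in (\F_q^*)^{n-1}$. Then part (3) of Lemma~\ref{lem kkkk} yields
$$GRS_k\bigl(\bm{a},\bm{a}^{k'-k}\star \bm{v},\infty\bigr) \subseteq GRS_{k'}(\bm{a},\bm{v},\infty),$$
and the closure of $e$-Galois self-orthogonality under subcodes finishes the argument.

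There is essentially no obstacle in this proof: the only mild bookkeeping step is the use of Remark~\ref{rem 2} in the EGRS case to guarantee $\bm{a}\in(\F_q^*)^{n-1}$ so that part (3) of Lemma~\ref{lem kkkk} applies, and this is precisely where the hypothesis $n\le q$ is consumed.
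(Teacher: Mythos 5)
Your proof is correct and takes essentially the same route as the paper: the corollary is derived from the observation that a subcode of an $e$-Galois self-orthogonal code is again $e$-Galois self-orthogonal, combined with Lemma \ref{lem kkkk}. Your explicit treatment of the EGRS case via Remark \ref{rem 2} and Lemma \ref{lem equiv} (which is where the hypothesis $n\leq q$ is consumed) merely spells out what the paper leaves implicit.
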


\begin{remark}
Corollary \ref{cor kkkk} states that for the construction of an $e$-Galois self-orthogonal (extended) GRS code of length $n\leq q$, we only need to determine the maximum number of dimensions $k$ that can be attained with length $n$.
\end{remark}

\subsection{Extended code}

Inspired by the GRS and EGRS codes, we give the following code extending method. This is a common extending method.

\begin{definition}\label{def Extend}
For a given $[n,k]_q$ linear code $\C$ with generator matrix $G$,
construct a $k\times(n+1)$ matrix
 $$\widetilde{G}(G,\infty)=[G|\infty^T],$$
 where $\infty=(0,\dots,0,1)\in \F_q^k$.
Let $\widetilde{\C}(G,\infty)$ be the linear code with generator matrix $\widetilde{G}(G,\infty)$.
By definition, the extended code $\widetilde{\C}(G,\infty)$ has parameters $[n,k,\tilde{d}]_q$ where $\tilde{d}=d$ or $\tilde{d}=d+1$.
\end{definition}

\begin{remark}
By the definition of extended code, it is easy to notice the following:
\begin{itemize}
\item Two extended codes $\widetilde{\C}(G_1,\infty)$ and $\widetilde{\C}(G_2,\infty)$ may not be equivalent, where $G_1$ and $G_2$ are two generator matrices of $\C$.
\item $GRS_k(\bm{a},\bm{v},\infty)=\widetilde{GRS_k(\bm{a},\bm{v})}(G_k(\bm{a},\bm{v}),\infty)$.
\item Let $\bm{v}=(v_1,v_2,\dots,v_n)\in (\F_q^*)^n$ and $(\bm{v},v_{n+1})\in (\F_q^*)^{n+1}$.
For a given $[n,k]_q$ linear code $\C$ with generator matrix $G$, we have
 $$\Phi_{(\bm{v},v_{n+1})}(\widetilde{\C}(G,\infty))=\widetilde{\Phi_{v_{n+1}^{-1}\bm{v}}(\C)}(v_{n+1}^{-1} G diag(\bm{v}),\infty).$$
\end{itemize}
\end{remark}

\section{Main methods}\label{sec3}

In this section, we give many methods to construct Galois self-orthogonal codes.
Our approach provides a general framework that effectively unifies similar known techniques for constructing Galois self-orthogonal codes.
We first give the following lemmas, which are useful in the proof of the main results.

\begin{lemma}\label{lem gcd()}
(\cite[Lemma 3]{RefJ (2017) gcd Li})
Let $m\geq 1$ and $p>1$ be two integers. Then
\[\gcd(p^e+1,p^m-1)=\begin{cases}
1, & if\ \frac{m}{s}\ is\ odd\ and\ p\ is\ even;\\
2, & if\ \frac{m}{s}\ and\ p\ are\ odd;\\
p^{s}+1, & if\ \frac{m}{s}\ is\ even.\\
\end{cases}\]
\end{lemma}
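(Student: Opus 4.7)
Proof plan. The argument rests on the classical identity $\gcd(p^a - 1, p^b - 1) = p^{\gcd(a, b)} - 1$. Since $p^e + 1$ divides $p^{2e} - 1$, the first step gives
\[
d := \gcd(p^e + 1, p^m - 1) \mid \gcd(p^{2e} - 1, p^m - 1) = p^{\gcd(2e, m)} - 1.
\]
Writing $e = s e'$ and $m = s m'$ with $\gcd(e', m') = 1$, one has $\gcd(2e, m) = s \cdot \gcd(2e', m')$; since $\gcd(e', m') = 1$, any common divisor of $2e'$ and $m'$ must divide $2$, so the inner gcd is $1$ when $m'$ is odd and $2$ when $m'$ is even (the latter forcing $e'$ to be odd).

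In the case $m/s$ odd, the above gives $d \mid p^s - 1$. Reducing $p^e + 1$ modulo $p^s - 1$ gives $p^e + 1 \equiv 2$, so $d = \gcd(p^e + 1, p^s - 1) = \gcd(2, p^s - 1)$, which is $1$ for even $p$ and $2$ for odd $p$ (in the odd subcase $2 \mid p^e + 1$ and $2 \mid p^m - 1$, so the value is really attained). In the case $m/s$ even, I would first establish $p^s + 1 \mid d$ using the odd-exponent factorization $x^{e'} + 1 = (x + 1)(x^{e'-1} - x^{e'-2} + \cdots + 1)$, which yields $p^s + 1 \mid p^e + 1$, together with $p^s + 1 \mid p^{2s} - 1 \mid p^m - 1$ (since $2s \mid m$). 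Combined with $d \mid p^{2s} - 1 = (p^s - 1)(p^s + 1)$ from Step~1, the task reduces to showing $d$ picks up no extra factor beyond $p^s + 1$.

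The main obstacle is this final containment when $p$ is odd, since $p^s - 1$ and $p^s + 1$ then share a factor of $2$ (for even $p$ they are coprime and the computation is immediate). I would handle it via $\ell$-adic valuations. For each odd prime $\ell$, either $\ell \mid p^s + 1$, in which case lifting-the-exponent yields $v_\ell(p^e + 1) = v_\ell(p^s + 1) + v_\ell(e')$ while $v_\ell((p^s - 1)(p^s + 1)) = v_\ell(p^s + 1)$, forcing $v_\ell(d) = v_\ell(p^s + 1)$; or $\ell \mid p^s - 1$, in which case $p^e + 1 \equiv 2 \pmod{\ell}$ gives $v_\ell(d) = 0 = v_\ell(p^s + 1)$. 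For $\ell = 2$, the quotient $(p^e + 1)/(p^s + 1)$ is odd (its alternating sum of $e'$ terms reduces modulo $2$ to $1$ since each $(p^s)^i \equiv 1 \pmod{2}$), hence $v_2(p^e + 1) = v_2(p^s + 1)$ and again $v_2(d) = v_2(p^s + 1)$. Summing these prime-power contributions gives $d = p^s + 1$, completing the proof.
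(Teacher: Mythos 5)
The paper does not prove this lemma at all --- it is quoted verbatim as Lemma~3 of the cited reference (Li, Ding, Li, \emph{IEEE Trans.\ Inf.\ Theory} 2017), so there is no in-paper argument to compare against. Your blind proof is a correct, self-contained derivation. The key steps all check out: writing $e=se'$, $m=sm'$ with $\gcd(e',m')=1$ gives $\gcd(2e,m)=s$ or $2s$ according to the parity of $m'$, and $d:=\gcd(p^e+1,p^m-1)$ divides $p^{\gcd(2e,m)}-1$; in the odd case the Euclidean reduction $p^e+1\equiv 2\pmod{p^s-1}$ (together with $d=\gcd(p^e+1,p^s-1)$, which follows since $p^s-1\mid p^m-1$) immediately yields $\gcd(2,p^s-1)$; in the even case $e'$ is forced to be odd, so $p^s+1\mid p^e+1$ and $p^s+1\mid p^{2s}-1\mid p^m-1$ give the lower bound, and your valuation argument (lifting the exponent at odd primes dividing $p^s+1$, the congruence $p^e+1\equiv 2$ at odd primes dividing $p^s-1$, and the parity of the $e'$-term alternating cofactor at $\ell=2$) correctly pins $v_\ell(d)=v_\ell(p^s+1)$ for every prime. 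The only cosmetic omission is the case of primes $\ell$ dividing neither $p^s-1$ nor $p^s+1$, where $v_\ell(d)=0$ trivially because $d\mid p^{2s}-1$; this does not affect correctness. Your approach is somewhat heavier machinery than strictly necessary (one can finish the even case by showing directly that $\gcd\bigl((p^e+1)/(p^s+1),\,p^s-1\bigr)\mid 2$ and that the relevant factor of $2$ is already absorbed into $p^s+1$), but it is clean and complete, and it supplies a proof the present paper leaves to a citation.
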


By Lemma \ref{lem gcd()}, we give the following lemma, which gives the specific $H$.

\begin{lemma}\label{lem HHH}
Let $m\geq 1$ and $p>1$ be two integers. Then
\[ H=\begin{cases}
\F_q^*, & if\ \frac{m}{s}\ is\  odd\ and\ p\ is\  even;\\
QR, & if\ \frac{m}{s}\ and\ p\ are\ odd;\\
\langle w^{p^{s}+1}\rangle , & if\ \frac{m}{s}\ is\ even.\\
\end{cases}\]
\end{lemma}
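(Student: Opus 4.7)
The plan is to compute the order $|H|=\frac{q-1}{\gcd(q-1,p^e+1)}$ in each of the three cases using Lemma \ref{lem gcd()}, and then identify $H$ as the unique subgroup of $\F_q^*$ of that order (using the fact that $\F_q^*$ is cyclic, so it admits exactly one subgroup of each order dividing $q-1$).

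For Case 1 ($\frac{m}{s}$ odd, $p$ even), Lemma \ref{lem gcd()} yields $\gcd(p^e+1,p^m-1)=1$, so $|H|=q-1=|\F_q^*|$, forcing $H=\F_q^*$. For Case 2 ($\frac{m}{s}$ odd, $p$ odd), the gcd equals $2$ so $|H|=\frac{q-1}{2}$. Since $p$ is odd, the squaring map on $\F_q^*$ is $2$-to-$1$, so $|QR|=\frac{q-1}{2}$ as well; uniqueness of subgroups of given order in a cyclic group then gives $H=QR$.

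For Case 3 ($\frac{m}{s}$ even), the gcd is $p^s+1$, so $|H|=\frac{q-1}{p^s+1}$. I would then verify that $\langle w^{p^s+1}\rangle$ has the same order, i.e. that $\gcd(q-1,p^s+1)=p^s+1$, or equivalently $(p^s+1)\mid(p^m-1)$. Since $\frac{m}{s}$ is even, we have $2s\mid m$, hence $p^{2s}-1\mid p^m-1$, and $(p^s+1)\mid(p^{2s}-1)$ gives the desired divisibility. Again invoking the uniqueness of subgroups of a cyclic group, $H=\langle w^{p^s+1}\rangle$.

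There is no serious obstacle here; the entire argument is a direct consequence of Lemma \ref{lem gcd()} together with the standard structure theorem for cyclic groups. The only slightly non-routine step is the divisibility check $(p^s+1)\mid(p^m-1)$ in Case 3, which reduces to the parity of $\frac{m}{s}$ and is handled by the chain $p^s+1\mid p^{2s}-1\mid p^m-1$.
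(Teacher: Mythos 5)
Your proposal is correct and follows essentially the same route as the paper: both compute $|H|=\frac{q-1}{\gcd(q-1,p^e+1)}$ via Lemma \ref{lem gcd()} and then identify $H$ with the stated subgroup by matching orders (the paper finishes by exhibiting the containment $H\subseteq\F_q^*$, $QR$, or $\langle w^{p^s+1}\rangle$ and comparing cardinalities, while you invoke uniqueness of subgroups of a given order in a cyclic group --- an equivalent step). Your explicit check that $(p^s+1)\mid(p^m-1)$ in the third case is a detail the paper leaves implicit, and it is handled correctly.
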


\begin{proof}
By Lemma \ref{lem gcd()},
we have that
$$|H|=ord(w^{p^e+1})=\frac{q-1}{\gcd(q-1,p^e+1)}=\begin{cases}
q-1=|\F_q^*|, & {\rm if}\ \frac{m}{s}\ {\rm is\  odd\ and}\ p\ {\rm is\  even};\\
\frac{q-1}{2}=|QR|, & {\rm if}\ \frac{m}{s}\ {\rm and}\ p\ {\rm are\ odd};\\
\frac{q-1}{p^s+1}=|\langle w^{p^s+1}\rangle|, & {\rm if}\ \frac{m}{s}\ {\rm is\ even}.\\
\end{cases}$$
Note that
$$H= \langle w^{p^e+1}\rangle \subseteq \begin{cases}
\langle w\rangle= \F_{q}^*, & {\rm if}\ \frac{m}{s}\ {\rm is\  odd\ and}\ p\ {\rm is\  even};\\
\langle w^2\rangle= QR, & {\rm if}\ \frac{m}{s}\ {\rm and}\ p\ {\rm are\ odd};\\
\langle w^{p^s+1}\rangle, & {\rm if}\ \frac{m}{s}\ {\rm is\ even},\\
\end{cases}$$
it is not difficult to deduce the desired result.
This completes the proof.
\end{proof}

\subsection{Necessary and sufficient
conditions for the code $\Phi_{\bm{v}}(\C)$ to be Galois self-orthogonal}

In this subsection, we give necessary and sufficient conditions for the code $\Phi_{\bm{v}}(\C)$ to be Galois self-orthogonal.
Then we directly obtain necessary and sufficient conditions for GRS codes to be Galois self-orthogonal.

\begin{lemma}\label{lem galois oa}
Let $\C$ be an $[n,k]_{q}$ linear code with its dimension $k\leq \frac{n}{2}$.
Let $\C=\langle \bm{c}_i: 1\leq i\leq k\rangle$ and $\bm{v}\in (\F_{q}^*)^n$, where $\bm{c}_i\in \F_{q}^n\ (1\leq i\leq k)$.
Suppose that
$$C=(\bm{c}_1\star \sigma^e(\bm{c}_1),\dots,\bm{c}_1\star \sigma^e(\bm{c}_k),
\bm{c}_2\star \sigma^e(\bm{c}_1),\dots,\bm{c}_2\star \sigma^e(\bm{c}_k), \dots,
\bm{c}_k\star \sigma^e(\bm{c}_1),\dots,\bm{c}_k\star \sigma^e(\bm{c}_k))^{\mathcal{T}}$$
is a $k^2\times n$ matrix over $\F_{q}$.
Then the following statements are equivalent.
\begin{itemize}
\item[(1)] $\Phi_{\bm{v}}(\C)$ is $e$-Galois self-orthogonal.
\item[(2)] $\Phi_{\bm{v}}(\C)$ is $(m-e)$-Galois self-orthogonal.
\item[(3)] $\bm{v}^{p^e+1}\in (\C\star \sigma^e(\C))^{\bot_E}$.
\item[(4)] $\bm{v}^{p^{m-e}+1}\in (\C\star \sigma^{m-e}(\C))^{\bot_E}$.
\item[(5)] The  equation $C\bm{x}^T=\bm{0}^T$ has a solution $\bm{x}=(x_1,x_2,\dots,x_{n})\in H^{n}$, where $\bm{x}=\bm{v}^{p^e+1}$.
\item[(6)] The equation $\sigma^{m-e}(C)\bm{y}^T=\bm{0}^T$
has a solution $\bm{y}=(y_1,y_2,\dots,y_{n})\in H^{n}$, where $\bm{y}=\sigma^{m-e}(\bm{x})$.
\end{itemize}
\end{lemma}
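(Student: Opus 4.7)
The plan is to establish the cycle $(1)\Leftrightarrow(3)\Leftrightarrow(5)\Leftrightarrow(6)\Leftrightarrow(4)\Leftrightarrow(2)\Leftrightarrow(1)$, with the final link $(2)\Leftrightarrow(1)$ a free invocation of Lemma~\ref{lem duichen}. Each remaining implication is a single unfolding of definitions, so the work is pure bookkeeping; the two nontrivial identities I will use are the standard Schur-product rewrite of the Galois inner product and the action of $\sigma^{m-e}$ on $\bm{v}^{p^e+1}$.

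The heart of the argument is $(1)\Leftrightarrow(3)$. Since $\Phi_{\bm{v}}(\C)$ is spanned by $\{\bm{v}\star\bm{c}_i\}_{i=1}^k$, its $e$-Galois self-orthogonality is equivalent to $\langle\bm{v}\star\bm{c}_i,\bm{v}\star\bm{c}_j\rangle_e=0$ for all $i,j$. A coordinatewise expansion gives
\begin{equation*}
\langle\bm{v}\star\bm{c}_i,\bm{v}\star\bm{c}_j\rangle_e
=\sum_{\ell=1}^n v_\ell^{p^e+1}\,c_{i,\ell}\,\sigma^e(c_{j,\ell})
=\langle \bm{v}^{p^e+1},\bm{c}_i\star\sigma^e(\bm{c}_j)\rangle_E.
\end{equation*}
Since the products $\bm{c}_i\star\sigma^e(\bm{c}_j)$ generate $\C\star\sigma^e(\C)$, this vanishing system is exactly $\bm{v}^{p^e+1}\in(\C\star\sigma^e(\C))^{\bot_E}$. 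The step $(3)\Leftrightarrow(5)$ is then immediate: those same products are the rows of the matrix $C$, so $(3)$ reads $C(\bm{v}^{p^e+1})^T=\bm{0}^T$; and the membership $\bm{v}^{p^e+1}\in H^n$ comes for free because each coordinate $v_\ell^{p^e+1}$ lies in $\langle w^{p^e+1}\rangle=H$.

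For $(5)\Leftrightarrow(6)$ I apply $\sigma^{m-e}$ coordinatewise to $C\bm{x}^T=\bm{0}^T$; since $\sigma^{m-e}$ is a field automorphism, the resulting equation $\sigma^{m-e}(C)\,\sigma^{m-e}(\bm{x})^T=\bm{0}^T$ is equivalent, and
\begin{equation*}
\sigma^{m-e}(\bm{v}^{p^e+1})=\bm{v}^{p^{m-e}(p^e+1)}=\bm{v}^{p^m+p^{m-e}}=\bm{v}^{p^{m-e}+1},
\end{equation*}
using $v_\ell^{p^m}=v_\ell$ in $\F_q^*$. This identifies $\bm{y}$ of $(6)$ with the vector in $(4)$. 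The remaining equivalences $(4)\Leftrightarrow(6)$ and $(2)\Leftrightarrow(4)$ are the $(m-e)$-analogues of $(3)\Leftrightarrow(5)$ and $(1)\Leftrightarrow(3)$; crucially, by Lemmas~\ref{lem gcd()} and~\ref{lem HHH}, $\gcd(e,m)=\gcd(m-e,m)=s$ forces $H=\langle w^{p^e+1}\rangle=\langle w^{p^{m-e}+1}\rangle$, so the target set $H^n$ is unchanged when passing between the two indices.

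The only mild obstacle is exactly this bookkeeping: matching the Schur-product generators of $\C\star\sigma^e(\C)$ with the rows of $C$, verifying the exponent identity $p^{m-e}(p^e+1)\equiv p^{m-e}+1\pmod{p^m-1}$, and confirming that the subgroup $H$ is intrinsic to $s$ rather than to $e$ itself. None of these steps requires a new idea beyond the preliminary lemmas already available.
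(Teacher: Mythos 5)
Your proof is correct and follows essentially the same route as the paper: the core is the identity $\langle\bm{v}\star\bm{c}_i,\bm{v}\star\bm{c}_j\rangle_e=\langle \bm{c}_i\star\sigma^e(\bm{c}_j),\bm{v}^{p^e+1}\rangle_E$ giving $(1)\Leftrightarrow(3)\Leftrightarrow(5)$, with the $(m-e)$ statements obtained from Lemma~\ref{lem duichen}. Your explicit verification of $(5)\Leftrightarrow(6)$ via $\sigma^{m-e}(\bm{v}^{p^e+1})=\bm{v}^{p^{m-e}+1}$ and the observation that $H$ depends only on $s$ merely spell out what the paper leaves to symmetry; both are sound.
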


\begin{proof}

By the definition,
$\Phi_{\bm{v}}(\C)=\langle \bm{c}_i\star \bm{v}: 1\leq i\leq k\rangle$.
Then $\Phi_{\bm{v}}(\C)$ is $e$-Galois self-orthogonal if and only if
$$\langle \bm{c}_i\star \bm{v},\bm{c}_j\star \bm{v}  \rangle_e=0,\ {\rm for\ any}\ 1\leq i,j\leq k,$$
if and only if
$$\langle \bm{c}_i\star \bm{v}, \sigma^e(\bm{c}_j\star \bm{v})  \rangle_E=\langle \bm{c}_i\star \sigma^e(\bm{c}_j),\bm{v}^{p^e+1}  \rangle_E=0,\ {\rm for\ any}\ 1\leq i,j\leq k.$$
Note that
$$\C\star \sigma^e(\C)= \langle \bm{c}\star \bm{c}': \bm{c}\in \C,\bm{c}'\in \sigma^e(\C) \rangle=
\langle \bm{c}_i\star \sigma^e(\bm{c}_j): 1\leq i,j\leq k\rangle,$$ it follows that
 $\Phi_{\bm{v}}(\C)$ is $e$-Galois self-orthogonal
if and only if $\bm{v}^{p^e+1}\in (\C \star \sigma^e(\C))^{\bot_E}$.
Since $H=\langle w^{p^e+1}\rangle$,
then for any $x\in \F_q^*$, $x\in H$ if and only if there exists $v\in \F_q^*$ such that $x=v^{p^e+1}$.
Hence, we have that $\Phi_{\bm{v}}(\C)$ is $e$-Galois self-orthogonal
if and only if $C\bm{x}^T=\bm{0}^T$, where $\bm{x}=\bm{v}^{p^e+1}$.
The desired result follows from Lemma \ref{lem duichen}.
This completes the proof.
\end{proof}

Note that $GRS_k(\bm{a},\bm{v})=\Phi_{\bm{v}}(GRS_k(\bm{a},\bm{1}))$ and
$GRS_k(\bm{a},\bm{1})=\langle \bm{a}^0,\bm{a}^1,\dots,\bm{a}^{k-1}\rangle$.
By Lemma \ref{lem galois oa}, we can directly obtain the following lemma, which gives necessary and sufficient conditions for GRS codes to be Galois self-orthogonal.

\begin{lemma}\label{lem <a,v>}
Let the two vectors $\bm{a}$ and $\bm{v}$ be defined as in Definition \ref{def GRS}.
Suppose that
$$A=(\bm{a}^0,\dots,\bm{a}^{k-1},\bm{a}^{p^e},\dots,\bm{a}^{p^e+k-1},\dots,\bm{a}^{(k-1)p^e},\dots,\bm{a}^{(k-1)p^e+k-1})^\mathcal{T}$$
is a $k^2\times n$ matrix over $\F_{q}$.
Then the following statements are equivalent.
\begin{itemize}
\item[(1)] $GRS_k(\bm{a},\bm{v})$ is $e$-Galois self-orthogonal.
\item[(2)] $GRS_k(\bm{a},\bm{v})$ is $(m-e)$-Galois self-orthogonal.
\item[(3)] $\langle \bm{a}^{p^ei+j}, \bm{v}^{p^e+1} \rangle_E=0$, for $0\leq i,j \leq k-1$.
\item[(4)] $\langle \bm{a}^{p^{m-e}i+j}, \bm{v}^{p^{m-e}+1} \rangle_E=0$, for $0\leq i,j \leq k-1$.
\item[(5)] The equation $A\bm{x}^T=\bm{0}^T$ has a solution $\bm{x}=(x_1,x_2,\dots,x_{n})\in H^{n}$, where $\bm{x}=\bm{v}^{p^e+1}.$
\item[(6)] The equation $\sigma^{m-e}(A)\bm{y}^T=\bm{0}^T$
has a solution $\bm{y}=(y_1,y_2,\dots,y_{n})\in H^{n}$, where $\bm{y}=\sigma^{m-e}(\bm{x})$.
\end{itemize}
\end{lemma}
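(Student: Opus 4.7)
The plan is to realize Lemma \ref{lem <a,v>} as a direct specialization of Lemma \ref{lem galois oa}. First I would take $\C = GRS_k(\bm{a},\bm{1}) = \langle \bm{a}^0, \bm{a}^1, \dots, \bm{a}^{k-1}\rangle$ and note that $\Phi_{\bm{v}}(\C) = GRS_k(\bm{a},\bm{v})$, so the code appearing in Lemma \ref{lem galois oa} is exactly the GRS code of interest. Then I would choose the natural generating set $\bm{c}_i = \bm{a}^{i-1}$ for $1 \le i \le k$.

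The crucial computation is that $\sigma^e$ acts entrywise by raising to the $p^e$-th power, so $\sigma^e(\bm{c}_j) = \sigma^e(\bm{a}^{j-1}) = \bm{a}^{(j-1)p^e}$, and hence
$$\bm{c}_i \star \sigma^e(\bm{c}_j) = \bm{a}^{(i-1)+(j-1)p^e}, \qquad 1 \le i,j \le k.$$
As $(i,j)$ ranges over $\{1,\dots,k\}^2$, these vectors are precisely the rows of the matrix $A$ (up to reordering). Consequently the matrix $C$ of Lemma \ref{lem galois oa} has the same row space as $A$, so $\C\star\sigma^e(\C)$ is spanned by the rows of $A$, and the null space of $A$ coincides with $(\C\star\sigma^e(\C))^{\bot_E}$.

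Under this identification, each of the six equivalent conditions of Lemma \ref{lem galois oa} translates directly into the corresponding one in Lemma \ref{lem <a,v>}: parts (1) and (2) are unchanged; part (3) of Lemma \ref{lem galois oa}, namely $\bm{v}^{p^e+1}\in(\C\star\sigma^e(\C))^{\bot_E}$, unpacks row-by-row to $\langle \bm{a}^{p^ei+j},\bm{v}^{p^e+1}\rangle_E=0$ for all $0\le i,j\le k-1$, which is (3) here; part (4) translates analogously with $\sigma^{m-e}$; and parts (5), (6) are just the matrix-equation reformulations, where the membership $\bm{v}^{p^e+1}\in H^n$ is automatic since $H=\langle w^{p^e+1}\rangle$ and $\bm{v}\in(\F_q^*)^n$.

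There is no genuine obstacle: the argument is a routine translation once the generating set is fixed. The only minor bookkeeping point is matching the row ordering of $A$ (outer index $i$ carrying the factor $p^e$) with the nested-loop listing of $C$ in Lemma \ref{lem galois oa}; since both row systems span the same subspace of $\F_q^n$, this cosmetic discrepancy does not affect any of the six equivalences.
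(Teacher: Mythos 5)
Your proposal is correct and matches the paper's approach exactly: the paper also obtains Lemma \ref{lem <a,v>} as a direct specialization of Lemma \ref{lem galois oa} by writing $GRS_k(\bm{a},\bm{v})=\Phi_{\bm{v}}(GRS_k(\bm{a},\bm{1}))$ with generating vectors $\bm{a}^0,\dots,\bm{a}^{k-1}$, so that $\bm{c}_i\star\sigma^e(\bm{c}_j)=\bm{a}^{(i-1)+(j-1)p^e}$ and the matrix $C$ becomes $A$ up to a row permutation. Your remark that the row reordering is immaterial is the only bookkeeping needed, and it is handled correctly.
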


\begin{remark}
If we take $e=\frac{m}{2}$ ($m$ even) in Lemma \ref{lem <a,v>} (3), we obtain the results presented in
\cite[Lemma 3.1]{RefJ (2014) L.jin}.
In other words, Lemma \ref{lem <a,v>} is a generalization of \cite[ Lemma 3.1]{RefJ (2014) L.jin}.
\end{remark}

\subsection{Necessary and sufficient
conditions for the code $\widetilde{\Phi_{\bm{v}}(\C)}(G diag(\bm{v}),\infty)$ to be Galois self-orthogonal}

In this subsection, we give necessary and sufficient conditions for the code $\widetilde{\Phi_{\bm{v}}(\C)}(G diag(\bm{v}),\infty)$ to be Galois self-orthogonal.
Then we obtain necessary and sufficient conditions for EGRS codes to be Galois self-orthogonal.

\begin{lemma}\label{lem galois oa extend}
Let $\C$ be an $[n,k]_q$ linear code with
generator matrix $G=(\bm{c}_1,\bm{c}_2,\dots,\bm{c}_k)^{\mathcal{T}}$ and $\bm{v}\in (\F_{q}^*)^n$,
where $\bm{c}_i\in \F_{q}^n\ (1\leq i\leq k)$.
Suppose that the matrix $C$ is defined as in Lemma \ref{lem galois oa} and $\bm{b}=(0,\dots,0,-1)\in \F_q^{k^2}$.
Then the following statements are equivalent.
\begin{itemize}
\item[(1)] $\widetilde{\Phi_{\bm{v}}(\C)}(G diag(\bm{v}),\infty)$ is $e$-Galois self-orthogonal.
\item[(2)] $\widetilde{\Phi_{\bm{v}}(\C)}(G diag(\bm{v}),\infty)$ is $(m-e)$-Galois self-orthogonal.
\item[(3)] $\left\{
             \begin{array}{lr}
            \langle \bm{c}_i\star \sigma^e(\bm{c}_j), \bm{v}^{p^e+1}  \rangle_E=0,\ for\  0\leq i,\ j\leq k-1,\  except\  i=j=k-1;&  \\
             \langle \bm{c}_{k+1}^{p^e+1},\bm{v}^{p^e+1}  \rangle_E=-1.&
             \end{array}
\right.$
\item[(4)] $\left\{
             \begin{array}{lr}
            \langle \bm{c}_i\star \sigma^{m-e}(\bm{c}_j), \bm{v}^{p^{m-e}+1}  \rangle_E=0,\ for\ 0\leq i,\ j\leq k-1,\  except\  i=j=k-1;&  \\
             \langle \bm{c}_{k+1}^{p^{m-e}+1},\bm{v}^{p^{m-e}+1}  \rangle_E=-1.&
             \end{array}
\right.$
\item[(5)] The equation $C\bm{x}^T=\bm{b}^T$ has a solution $\bm{x}=(x_1,x_2,\dots,x_{n})\in H^{n}$, where $\bm{x}=\bm{v}^{p^e+1}.$
\item[(6)] The equation $\sigma^{m-e}(C)\bm{y}^T=\bm{b}^T$
has a solution $\bm{y}=(y_1,y_2,\dots,y_{n})\in H^{n}$, where $\bm{y}=\sigma^{m-e}(\bm{x})$.
\end{itemize}
\end{lemma}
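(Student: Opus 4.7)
The plan is to mirror the proof of Lemma \ref{lem galois oa}, carefully accounting for the extra last column $\infty^T$. First I would write the generator matrix of $\widetilde{\Phi_{\bm{v}}(\C)}(G\,diag(\bm{v}),\infty)$ row by row: the first $k-1$ rows are $(\bm{c}_i\star\bm{v},0)$ and the last row is $(\bm{c}_k\star\bm{v},1)$. The extended code being $e$-Galois self-orthogonal is equivalent to the vanishing of all $k^2$ pairwise $e$-Galois inner products of these rows.

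Next I would split the $k^2$ scalar conditions into two cases. For any pair $(i,j)\neq(k,k)$, at least one of the two rows has a $0$ in the last coordinate, so the $\infty^T$ column contributes nothing and the condition reduces to $\langle\bm{c}_i\star\bm{v},\bm{c}_j\star\bm{v}\rangle_e=0$. Using the identity $\langle\bm{x},\bm{y}\rangle_e=\langle\bm{x}\star\sigma^e(\bm{y}),\bm{1}\rangle_E$ already exploited in Lemma \ref{lem galois oa}, this becomes $\langle\bm{c}_i\star\sigma^e(\bm{c}_j),\bm{v}^{p^e+1}\rangle_E=0$. For the remaining pair $(k,k)$ the extra column contributes $1\cdot1^{p^e}=1$, so the condition reads $\langle\bm{c}_k\star\sigma^e(\bm{c}_k),\bm{v}^{p^e+1}\rangle_E+1=0$, i.e., $\langle\bm{c}_k^{p^e+1},\bm{v}^{p^e+1}\rangle_E=-1$. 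This gives (1)$\Leftrightarrow$(3) (up to the $0$-indexing convention used in the statement, where the exceptional pair is $i=j=k-1$ and $\bm{c}_{k+1}$ should be read as the last generator).

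Stacking these $k^2$ scalar equations in the order used to define the matrix $C$ of Lemma \ref{lem galois oa} gives exactly $C\bm{x}^T=\bm{b}^T$ with $\bm{x}=\bm{v}^{p^e+1}$ and $\bm{b}=(0,\dots,0,-1)$. Since $H=\langle w^{p^e+1}\rangle$, an element $x\in\F_q^*$ lies in $H$ if and only if $x=v^{p^e+1}$ for some $v\in\F_q^*$; therefore the solvability of $C\bm{x}^T=\bm{b}^T$ with $\bm{x}\in H^n$ is equivalent to the existence of $\bm{v}\in(\F_q^*)^n$ making $\widetilde{\Phi_{\bm{v}}(\C)}(G\,diag(\bm{v}),\infty)$ $e$-Galois self-orthogonal, yielding (3)$\Leftrightarrow$(5).

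Finally, (1)$\Leftrightarrow$(2) follows by applying Lemma \ref{lem duichen} to the extended code, and then running the same chain with $e$ replaced by $m-e$ (so $\sigma^e$ becomes $\sigma^{m-e}$, $\bm{v}^{p^e+1}$ becomes $\bm{v}^{p^{m-e}+1}$, and $C$ becomes $\sigma^{m-e}(C)$) gives (2)$\Leftrightarrow$(4)$\Leftrightarrow$(6). No single step is a real obstacle; the two items that need care are the sign of $-1$ in $\bm{b}$ and the fact that the unique inner product picking up the constant term from $\infty^T$ is the one indexed by $(k,k)$, which corresponds precisely to the last row of $C$.
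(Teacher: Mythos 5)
Your proposal is correct and follows essentially the same route as the paper: expand the generator rows of the extended code, observe that only the $(k,k)$ inner product picks up the contribution $1\cdot 1^{p^e}=1$ from the $\infty^T$ column (forcing $\langle \bm{c}_k^{p^e+1},\bm{v}^{p^e+1}\rangle_E=-1$), rewrite everything via $\langle\bm{x},\bm{y}\rangle_e=\langle\bm{x}\star\sigma^e(\bm{y}),\bm{1}\rangle_E$, stack into $C\bm{x}^T=\bm{b}^T$ using $H=\langle w^{p^e+1}\rangle$, and invoke Lemma \ref{lem duichen} for the $(m-e)$ versions. Your remark about the indexing (the exceptional pair and the symbol $\bm{c}_{k+1}$ denoting the last generator) correctly identifies the only notational wrinkle in the statement.
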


\begin{proof}
By the definition, we know that
$\widetilde{\Phi_{\bm{v}}(\C)}(G diag(\bm{v}),\infty)=\langle (\bm{c}_1\star \bm{v},0),\dots,(\bm{c}_{k-1}\star \bm{v},0),(\bm{c}_k\star \bm{v},1)\rangle$.
Then $\widetilde{\Phi_{\bm{v}}(\C)}(G diag(\bm{v}),\infty)$ is $e$-Galois self-orthogonal if and only if
$$\left\{
             \begin{array}{lr}
            \langle \bm{c}_i\star \bm{v},\bm{c}_j\star \bm{v}  \rangle_e=0,\ {\rm for}\ 0\leq i,\ j\leq k-1,\  {\rm except}\  i=j=k-1;&  \\
             \langle \bm{c}_{k+1}\star \bm{v},\bm{c}_{k+1}\star \bm{v}  \rangle_e=-1,&
             \end{array}
\right.$$
if and only if
$$\left\{
             \begin{array}{lr}
            \langle \bm{c}_i\star \sigma^e(\bm{c}_j), \bm{v}^{p^e+1}  \rangle_E=0,\ {\rm for }\ 0\leq i,\ j\leq k-1,\  {\rm except}\  i=j=k-1;&  \\
             \langle \bm{c}_{k+1}^{p^e+1},\bm{v}^{p^e+1}  \rangle_E=-1.&
             \end{array}
\right.$$
The rest of the proof is similar to Lemma \ref{lem galois oa}, and we omit the details.
This completes the proof.
\end{proof}

Note that $GRS_k(\bm{a},\bm{v},\infty)=\widetilde{GRS_k(\bm{a},\bm{v})}(G_k(\bm{a},\bm{v}),\infty)=
\widetilde{\Phi_{\bm{v}}(GRS_k(\bm{a},\bm{1}))}(G_k(\bm{a},\bm{1}) diag(\bm{v}),\infty)$ and
$G_k(\bm{a},\bm{1})=( \bm{a}^0,\bm{a}^1,\dots,\bm{a}^{k-1})^{\mathcal{T}}$.
By Lemma \ref{lem galois oa extend}, we can obtain the following lemma, which gives necessary and sufficient conditions for EGRS codes to be Galois self-orthogonal.

\begin{lemma}\label{lem <a,v> EG}
Let the two vectors $\bm{a}$ and $\bm{v}$ be defined as in Definition \ref{def GRS}.
Suppose that the matrix $A$ is defined as in Lemma \ref{lem <a,v>} and $\bm{b}=(0,\dots,0,-1)\in \F_q^{k^2}$.
Then the following statements are equivalent.
\begin{itemize}
\item[(1)] $GRS_k(\bm{a},\bm{v},\infty)$ is $e$-Galois self-orthogonal.
\item[(2)] $GRS_k(\bm{a},\bm{v},\infty)$ is $(m-e)$-Galois self-orthogonal.
\item[(3)] $\left\{
             \begin{array}{lr}
             \langle \bm{a}^{p^ei+j}, \bm{v}^{p^e+1} \rangle_E=0,\ 0\leq i,\ j\leq k-1,\  except\  i=j=k-1;&  \\
             \langle \bm{a}^{(p^e+1)(k-1)}, \bm{v}^{p^e+1} \rangle_E=-1.&
             \end{array}
\right.$
\item[(4)] $\left\{
             \begin{array}{lr}
             \langle \bm{a}^{p^{m-e}i+j}, \bm{v}^{p^{m-e}+1} \rangle_E=0,\ 0\leq i,\ j\leq k-1,\  except\  i=j=k-1;&  \\
             \langle \bm{a}^{(p^{m-e}+1)(k-1)}, \bm{v}^{p^{m-e}+1} \rangle_E=-1.&
             \end{array}
\right.$
\item[(5)] The equation $A\bm{x}^T=\bm{b}^T$ has a solution $\bm{x}=(x_1,x_2,\dots,x_{n})\in H^{n}$, where $\bm{x}=\bm{v}^{p^e+1}.$
\item[(6)] The equation $\sigma^{m-e}(A)\bm{y}^T=\bm{b}^T$
has a solution $\bm{y}=(y_1,y_2,\dots,y_{n})\in H^{n}$, where $\bm{y}=\sigma^{m-e}(\bm{x})$.
\end{itemize}
\end{lemma}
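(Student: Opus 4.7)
The plan is to derive Lemma \ref{lem <a,v> EG} as a direct specialization of the general framework established in Lemma \ref{lem galois oa extend}. First I would record the identification
\[ GRS_k(\bm{a},\bm{v},\infty) = \widetilde{\Phi_{\bm{v}}(GRS_k(\bm{a},\bm{1}))}\bigl(G_k(\bm{a},\bm{1})\, diag(\bm{v}),\infty\bigr), \]
which is valid because $G_k(\bm{a},\bm{v}) = G_k(\bm{a},\bm{1})\, diag(\bm{v})$ and $GRS_k(\bm{a},\bm{v},\infty)$ is by definition the extension of $GRS_k(\bm{a},\bm{v})$ by the column $\infty^T$. This places the statement squarely in the setting of Lemma \ref{lem galois oa extend}, with the natural choice of basis $\bm{c}_i = \bm{a}^{i-1}$ for $i = 1, 2, \ldots, k$.

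Next I would compute the Schur products that populate the matrix $C$ of Lemma \ref{lem galois oa extend}. For $1 \leq i, j \leq k$,
\[ \bm{c}_i \star \sigma^e(\bm{c}_j) = \bm{a}^{i-1} \star \bm{a}^{(j-1)p^e} = \bm{a}^{(i-1)+(j-1)p^e}, \]
so, after reindexing $i, j$ over $\{0, 1, \ldots, k-1\}$, the rows of $C$ coincide with the rows of the matrix $A$ of Lemma \ref{lem <a,v>}. In particular, the ``exceptional'' entry coming from the appended column, namely $\bm{c}_k \star \sigma^e(\bm{c}_k) = \bm{a}^{(k-1)(p^e+1)}$, corresponds precisely to the last row of $A$ and receives the right-hand side $-1$ in the system $A\bm{x}^T = \bm{b}^T$.

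Substituting these specializations into the six equivalent statements of Lemma \ref{lem galois oa extend} then yields conditions (1)--(6) of Lemma \ref{lem <a,v> EG} term by term: (3) and (4) become the inner-product conditions displayed in the lemma, with the single exceptional pair $(i, j) = (k-1, k-1)$ producing the value $-1$; (5) and (6) are the linear systems $A\bm{x}^T = \bm{b}^T$ and $\sigma^{m-e}(A)\bm{y}^T = \bm{b}^T$; and (1)$\Leftrightarrow$(2) is inherited from the symmetric role of $e$ and $m-e$, or equivalently from Lemma \ref{lem duichen}. The characterization of $H$ in Lemma \ref{lem HHH} guarantees that the condition $\bm{x} \in H^n$ in (5) is equivalent to the existence of $\bm{v} \in (\F_q^*)^n$ with $\bm{x} = \bm{v}^{p^e+1}$, and analogously for (6) via the Frobenius twist $\sigma^{m-e}$.

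This is a mechanical specialization rather than a deep argument, so I do not expect a genuine obstacle. The only point that requires care --- and which I would flag as the main bookkeeping issue --- is consistently tracking the exceptional pair $(i, j) = (k-1, k-1)$ throughout the translation, so that exactly $k^2 - 1$ of the equations carry right-hand side $0$ and the single remaining equation has right-hand side $-1$. Once that indexing is set up correctly, the six-way equivalence transfers verbatim from Lemma \ref{lem galois oa extend}, and the details that appear in the proof of that lemma (in particular the equality $\langle \bm{c}_i \star \bm{v}, \bm{c}_j \star \bm{v}\rangle_e = \langle \bm{c}_i \star \sigma^e(\bm{c}_j), \bm{v}^{p^e+1}\rangle_E$) do not need to be repeated.
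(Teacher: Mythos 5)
Your proposal is correct and follows exactly the route the paper takes: the paper derives Lemma \ref{lem <a,v> EG} as an immediate specialization of Lemma \ref{lem galois oa extend} via the identification $GRS_k(\bm{a},\bm{v},\infty)=\widetilde{\Phi_{\bm{v}}(GRS_k(\bm{a},\bm{1}))}(G_k(\bm{a},\bm{1})\,diag(\bm{v}),\infty)$ with $\bm{c}_i=\bm{a}^{i-1}$, so that the matrix $C$ becomes $A$ and the exceptional pair $(i,j)=(k-1,k-1)$ carries the right-hand side $-1$. Your bookkeeping of the Schur products and of the single exceptional equation matches the paper's (unwritten) argument verbatim.
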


\subsection{Necessary and sufficient
conditions for (extended) GRS codes of length $n$ to be Galois self-orthogonal when $p^e\leq k\leq \lfloor \frac{n+p^e-1}{p^e+1}\rfloor$, where $2e\leq m$.}

In \cite{RefJ (2023) Yang FFA}, Li et al. gave some sufficient conditions for GRS and EGRS codes to be self-orthogonal. In this subsection, we prove that these conditions are also necessary when $p^e\leq k\leq \lfloor \frac{n+p^e-1}{p^e+1}\rfloor$, where $2e\leq m$.
Moreover, our results can be regarded as generalizations of Lemmas 2.2 and 2.3 in \cite{RefJ (2021) Fang It}, and be applied to construct new
Galois self-orthogonal MDS codes.

\begin{lemma}\label{lem yang 1}
(\cite[Lemma 2.4]{RefJ (2023) Yang FFA})
Let the two vectors $\bm{a}$ and $\bm{v}$ be defined as in Definition \ref{def GRS}.
If there exists a nonzero polynomial $\lambda(x)\in \F_q[x]$ with $\deg(\lambda(x))\leq n-(k-1)(p^e+1)-2$ such that
$\lambda(a_i)u_i=v_i^{p^e+1}\in H$, for $1\leq i\leq n$.
Then  $GRS_k(\bm{a},\bm{v})$ is $e$-Galois self-orthogonal.
\end{lemma}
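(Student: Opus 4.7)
My plan is to reduce the claim to the criterion in Lemma \ref{lem <a,v>}(3) and then finish by a standard Lagrange-interpolation identity on the dual basis of $\F_q[x]_{<n}$ associated to the nodes $a_1,\dots,a_n$.

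First I would recall that by Lemma \ref{lem <a,v>}(3), $GRS_k(\bm{a},\bm{v})$ is $e$-Galois self-orthogonal if and only if
\[
\langle \bm{a}^{p^e i+j},\bm{v}^{p^e+1}\rangle_E=\sum_{l=1}^{n}a_l^{p^e i+j}\,v_l^{p^e+1}=0\quad\text{for all }0\le i,j\le k-1.
\]
Substituting the hypothesis $v_l^{p^e+1}=\lambda(a_l)u_l$, each of these sums becomes $\sum_{l=1}^{n} g(a_l)\,u_l$, where $g(x)=x^{p^e i+j}\lambda(x)\in\F_q[x]$. (Note that the hypothesis also silently asserts $\lambda(a_l)u_l\in H$, which is exactly the condition ensuring such a $v_l\in\F_q^*$ exists, since $H=\{t^{p^e+1}:t\in\F_q^*\}$; this is used to make the statement consistent but not in the orthogonality computation itself.)

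Next I would bound the degree: since $0\le i,j\le k-1$,
\[
\deg g\le p^e(k-1)+(k-1)+\deg\lambda=(k-1)(p^e+1)+\deg\lambda\le n-2,
\]
by the hypothesis on $\deg\lambda$. So it suffices to prove the well-known fact that for any polynomial $g\in\F_q[x]$ with $\deg g\le n-2$ one has $\sum_{l=1}^{n}g(a_l)u_l=0$. I would derive this from Lagrange interpolation: with $L(x)=\prod_{l=1}^n(x-a_l)$ we have $u_l=1/L'(a_l)$, and any polynomial $g$ of degree $\le n-1$ equals $\sum_{l=1}^{n}g(a_l)\frac{L(x)}{(x-a_l)L'(a_l)}$; comparing the coefficient of $x^{n-1}$ on both sides gives $\sum_{l=1}^{n}g(a_l)u_l=0$ whenever $\deg g<n-1$, which covers our case.

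Combining these observations, every required inner product vanishes, so Lemma \ref{lem <a,v>} yields the $e$-Galois self-orthogonality of $GRS_k(\bm{a},\bm{v})$. I do not foresee a real obstacle here: the only nontrivial ingredient is the Lagrange-interpolation identity, and the degree bookkeeping is straightforward once the hypothesis on $\deg\lambda$ is matched to the worst-case pair $(i,j)=(k-1,k-1)$ producing degree $(k-1)(p^e+1)$ from the monomial factor.
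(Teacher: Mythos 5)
Your proof is correct. Note that the paper itself does not prove this lemma --- it is quoted verbatim from \cite[Lemma 2.4]{RefJ (2023) Yang FFA} --- but your argument (reduce to the criterion $\langle \bm{a}^{p^ei+j},\bm{v}^{p^e+1}\rangle_E=0$ of Lemma \ref{lem <a,v>}, substitute $v_l^{p^e+1}=\lambda(a_l)u_l$, and kill each sum $\sum_l g(a_l)u_l$ with $\deg g\le n-2$ via the Lagrange-interpolation identity $u_l=1/L'(a_l)$) is exactly the standard route, and it is the same mechanism the paper itself invokes in the converse direction in Lemma \ref{lem lam(x)}, where the fact that $\bm{u},\bm{a}\star\bm{u},\dots,\bm{a}^{n-(k-1)(p^e+1)-2}\star\bm{u}$ span the relevant solution space (\cite[Lemma 5]{RefJ (2008) Li}) is just a restatement of your interpolation identity. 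Your degree bookkeeping, including matching the worst case $(i,j)=(k-1,k-1)$ to the hypothesis $\deg\lambda\le n-(k-1)(p^e+1)-2$, is accurate.
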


\begin{lemma}\label{lem yang 2}
(\cite[Lemma 2.5]{RefJ (2023) Yang FFA})
Let the two vectors $\bm{a}$ and $\bm{v}$ be defined as in Definition \ref{def GRS}.
If there exists a monic polynomial $\lambda(x)\in \F_q[x]$
 with $\deg(\lambda(x))= n-(k-1)(p^e+1)-1$ such that
$-\lambda(a_i)u_i=v_i^{p^e+1}\in H$, for $1\leq i\leq n$.
Then $GRS_k(\bm{a},\bm{v},\infty)$ is $e$-Galois self-orthogonal.
\end{lemma}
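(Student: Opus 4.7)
The plan is to verify condition (3) of Lemma \ref{lem <a,v> EG} directly, by substituting the hypothesis $v_\ell^{p^e+1}=-\lambda(a_\ell)u_\ell$ into the relevant Euclidean inner products and evaluating them through a standard Lagrange interpolation identity.

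First I would compute, for any $0\le i,j\le k-1$,
\[ \langle \bm{a}^{p^e i+j},\bm{v}^{p^e+1}\rangle_E \;=\; \sum_{\ell=1}^n a_\ell^{p^e i+j}\,v_\ell^{p^e+1} \;=\; -\sum_{\ell=1}^n a_\ell^{p^e i+j}\,\lambda(a_\ell)\,u_\ell. \]
Then I would invoke the Lagrange interpolation identity
\[ \sum_{\ell=1}^n f(a_\ell)\,u_\ell \;=\; [x^{n-1}]\,f(x) \qquad\text{for every } f(x)\in\F_q[x]\text{ with }\deg f\le n-1, \]
which follows from reading off the $x^{n-1}$-coefficient of $\sum_\ell f(a_\ell)\,\ell_\ell(x)=f(x)$, where $\ell_\ell(x)=u_\ell\prod_{j\ne\ell}(x-a_j)$ has leading coefficient exactly $u_\ell$. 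Applying this identity to $f(x)=x^{p^e i+j}\lambda(x)$ reduces everything to tracking degrees and leading coefficients.

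The key bookkeeping is that $\deg(x^{p^e i+j}\lambda(x))=p^e i+j+n-(k-1)(p^e+1)-1$, so the inequality $p^e i+j\le(k-1)(p^e+1)-1$ is equivalent to $\deg f\le n-2$. For $(i,j)\in\{0,\dots,k-1\}^2$ with $(i,j)\ne(k-1,k-1)$, this inequality is immediate in the two subcases $i<k-1$ (since $p^e(k-2)+(k-1)<p^e(k-1)+(k-1)$) and $i=k-1,\,j<k-1$; in these cases $[x^{n-1}]f=0$, giving $\langle\bm{a}^{p^e i+j},\bm{v}^{p^e+1}\rangle_E=0$. For the exceptional index $(i,j)=(k-1,k-1)$, the degree is exactly $n-1$, and since $\lambda$ is monic, the leading coefficient of $x^{(p^e+1)(k-1)}\lambda(x)$ equals $1$, yielding $\langle\bm{a}^{(p^e+1)(k-1)},\bm{v}^{p^e+1}\rangle_E=-1$. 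Both requirements of Lemma \ref{lem <a,v> EG}(3) are therefore met, so $GRS_k(\bm{a},\bm{v},\infty)$ is $e$-Galois self-orthogonal.

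The only real obstacle, and a mild one, is the careful degree accounting around the exceptional index $(k-1,k-1)$: the choice $\deg\lambda=n-(k-1)(p^e+1)-1$ is tight, being exactly what forces the $(k-1,k-1)$ term to contribute $-1$ while leaving every other pair $(i,j)$ with $\deg f\le n-2$. The condition $-\lambda(a_\ell)u_\ell\in H$ then ensures that the required $v_\ell\in\F_q^*$ with $v_\ell^{p^e+1}=-\lambda(a_\ell)u_\ell$ actually exists, so the construction is consistent.
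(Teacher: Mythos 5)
Your proof is correct. Note that the paper itself does not prove this statement -- it is imported verbatim as \cite[Lemma 2.5]{RefJ (2023) Yang FFA} -- but your argument is exactly the standard one underlying that source and the paper's surrounding machinery: you verify condition (3) of Lemma \ref{lem <a,v> EG} via the identity $\sum_{\ell}f(a_\ell)u_\ell=[x^{n-1}]f(x)$ for $\deg f\le n-1$ (this is the content of the cited Lemma 5 of \cite{RefJ (2008) Li}, which the paper invokes when proving the converse statement, Lemma \ref{lem lam(x) 2}). The degree bookkeeping around the exceptional index $(k-1,k-1)$ and the role of monicity in producing the value $-1$ are handled correctly, so nothing is missing.
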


\begin{lemma}\label{lem lam(x)}
Let the two vectors $\bm{a}$ and $\bm{v}$ be defined as in Definition \ref{def GRS}.
Assume $p^e\leq k\leq \lfloor \frac{n+p^e-1}{p^e+1}\rfloor$ and $2e\leq m$, then
$GRS_k(\bm{a},\bm{v})$ is $e$-Galois self-orthogonal if and only if
there exists a nonzero polynomial $\lambda(x)\in \F_q[x]$ with $\deg(\lambda(x))\leq n-(k-1)(p^e+1)-2$ such that
$\lambda(a_i)u_i=v_i^{p^e+1}\in H$, for $1\leq i\leq n$.
\end{lemma}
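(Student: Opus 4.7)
The plan is to exploit Reed-Solomon duality together with the characterization from Lemma \ref{lem <a,v>}. The ``if'' direction is already Lemma \ref{lem yang 1}, so the work is in the converse. Assume $GRS_k(\bm{a},\bm{v})$ is $e$-Galois self-orthogonal. By Lemma \ref{lem <a,v>}(3), this gives
\[
\langle \bm{a}^{p^e i + j},\ \bm{v}^{p^e+1}\rangle_E = 0, \qquad 0 \leq i,\, j \leq k-1.
\]
Setting $\bm{b} = \bm{v}^{p^e+1} \in H^{n}$, this says $\sum_{l=1}^{n} a_l^{t}\, b_l = 0$ for every exponent $t$ in the set $S = \{p^e i + j : 0 \leq i,\, j \leq k-1\}$.

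The key combinatorial step is to show that, under the hypothesis $p^e \leq k$, the set $S$ equals the full integer interval $\{0,1,2,\ldots,(k-1)(p^e+1)\}$. For each fixed $i$, as $j$ runs over $\{0,1,\ldots,k-1\}$ the exponents fill the block $[p^e i,\, p^e i + k - 1]$. The next block starts at $p^e(i+1) = p^e i + p^e$, and the inequality $k \geq p^e$ ensures that consecutive blocks overlap or abut; iterating from $i = 0$ to $i = k-1$, the union is a contiguous integer interval from $0$ up to the maximum value $(k-1)(p^e+1)$. The upper hypothesis $k \leq \lfloor (n+p^e-1)/(p^e+1)\rfloor$ further yields $(k-1)(p^e+1) \leq n - 2$, so $S \subseteq \{0,1,\ldots,n-2\}$.

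Now I translate the orthogonality into dual-code membership. The vanishing of $\langle \bm{a}^t, \bm{b}\rangle_E$ for every $t \in \{0,1,\ldots,(k-1)(p^e+1)\}$ is equivalent to $\bm{b} \in GRS_{(k-1)(p^e+1)+1}(\bm{a},\bm{1})^{\perp_E}$. Applying the standard GRS duality $GRS_{k'}(\bm{a},\bm{1})^{\perp_E} = GRS_{n-k'}(\bm{a},\bm{u})$, valid because $1 \leq (k-1)(p^e+1)+1 \leq n-1$, gives $\bm{b} \in GRS_{n-(k-1)(p^e+1)-1}(\bm{a},\bm{u})$. Hence there exists a polynomial $\lambda(x) \in \F_q[x]$ with $\deg(\lambda(x)) \leq n - (k-1)(p^e+1) - 2$ such that $\lambda(a_l)\, u_l = v_l^{p^e+1}$ for every $l$. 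Since $v_l \in \F_q^*$, each value $\lambda(a_l) u_l$ is nonzero, so $\lambda \not\equiv 0$; membership $v_l^{p^e+1} \in H$ is automatic from $H = \langle w^{p^e+1}\rangle$.

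The only real obstacle is the covering claim $S = \{0,1,\ldots,(k-1)(p^e+1)\}$; once $S$ is seen to be a consecutive block of integers, everything reduces to standard Reed-Solomon duality applied to $\bm{v}^{p^e+1}$. The hypothesis $2e \leq m$ plays no intrinsic role in this argument and is simply the WLOG normalization provided by Lemma \ref{lem duichen} and Remark \ref{rem 1}.
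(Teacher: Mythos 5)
Your proposal is correct and follows essentially the same route as the paper: the paper also reduces, via $k\geq p^e$, to the vanishing of $\langle \bm{a}^{t},\bm{v}^{p^e+1}\rangle_E$ for all $t\in\{0,1,\dots,(k-1)(p^e+1)\}$ and then identifies the solution space with the span of $\bm{u},\bm{a}\star\bm{u},\dots,\bm{a}^{n-(k-1)(p^e+1)-2}\star\bm{u}$ (citing a lemma of Li et al., which is exactly your GRS duality statement). The only difference is that you spell out the block-covering argument for the exponent set $S$, which the paper leaves implicit.
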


\begin{proof}
 By Lemma \ref{lem yang 1}, we need only prove necessity.
Since $k\geq p^e$, by Lemma \ref{lem <a,v>},
 $GRS_k(\bm{a},\bm{v})$ is $e$-Galois self-orthogonal
 if and only if the equation
 $$B\bm{x}^{T}=(\bm{a}^0,\bm{a}^1,\dots,\bm{a}^{(k-1)(p^e+1)})^{\mathcal{T}}\bm{x}^T=\bm{0}^T$$
has a solution $\bm{x}=\bm{v}^{p^e+1}=(x_1,x_2,\dots,x_n)\in H^n$.
Note that $rank(B)=(k-1)(p^e+1)+1$.
By \cite[Lemma 5]{RefJ (2008) Li},
the vectors
$$\bm{u}, \bm{a}\star \bm{u},\dots, \bm{a}^{n-(k-1)(p^e+1)-2}\star \bm{u}$$
form a basis of the solution space of $B\bm{x}^{T}=\bm{0}^T$.
Thus there exists $\lambda_l\in \F_q$ for $0\leq l\leq n-(k-1)(p^e+1)-2$ such that
$$x_i=\sum_{0\leq l\leq n-(k-1)(p^e+1)-2}\lambda_la_i^l u_i,\ {\rm for}\ 1\leq i\leq n,$$
Let $\lambda(x)=\sum_{l=0}^{n-(k-1)(p^e+1)-2}\lambda_lx^l$, then $\lambda(a_i)u_i=v_i^{p^e+1}$ for $1\leq i\leq n$.
This completes the proof.
\end{proof}

\begin{lemma}\label{lem lam(x) 2}
Let the two vectors $\bm{a}$ and $\bm{v}$ be defined as in Definition \ref{def GRS}.
Assume $p^e\leq k\leq \lfloor \frac{n+p^e}{p^e+1}\rfloor$ and $2e\leq m$, then
$GRS_k(\bm{a},\bm{v},\infty)$ is $e$-Galois self-orthogonal if and only if
there exists a monic polynomial $\lambda(x)\in \F_q[x]$ with $\deg(\lambda(x))= n-(k-1)(p^e+1)-1$ such that
$-\lambda(a_i)u_i=v_i^{p^e+1}\in H$, for $1\leq i\leq n$.
\end{lemma}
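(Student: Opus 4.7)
The plan is to mirror the argument of Lemma \ref{lem lam(x)} with one extra inhomogeneous constraint coming from the $\infty$-coordinate. Sufficiency is already Lemma \ref{lem yang 2}, so only necessity requires work. I would start by applying Lemma \ref{lem <a,v> EG}(3) to translate $e$-Galois self-orthogonality of $GRS_k(\bm{a},\bm{v},\infty)$ into the linear conditions $\langle \bm{a}^{p^e i+j}, \bm{x}\rangle_E=0$ for $0\leq i,j\leq k-1$ with $(i,j)\neq(k-1,k-1)$, together with $\langle \bm{a}^{(p^e+1)(k-1)},\bm{x}\rangle_E=-1$, where $\bm{x}=\bm{v}^{p^e+1}$. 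The hypothesis $k\geq p^e$ then forces the map $(i,j)\mapsto p^e i+j$ from $\{0,\dots,k-1\}^2$ to surject onto $[0,(k-1)(p^e+1)]$, with the maximum $(k-1)(p^e+1)$ attained \emph{uniquely} by $(k-1,k-1)$. Thus the system collapses, without contradiction, to
\[
\langle \bm{a}^t, \bm{x}\rangle_E=0 \ \text{ for }\ 0\leq t\leq (k-1)(p^e+1)-1, \quad \langle \bm{a}^{(k-1)(p^e+1)},\bm{x}\rangle_E=-1.
\]

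Next I would solve the homogeneous block by the same basis used in Lemma \ref{lem lam(x)}: by the Lagrange-interpolation argument of \cite[Lemma 5]{RefJ (2008) Li}, the coefficient matrix has rank $(k-1)(p^e+1)$ (one less row than in Lemma \ref{lem lam(x)} since $2e\leq m$ and $k\leq\lfloor\frac{n+p^e}{p^e+1}\rfloor$ force $(k-1)(p^e+1)\leq n-1$), and its solution space is spanned by
\[
\bm{u},\ \bm{a}\star\bm{u},\ \dots,\ \bm{a}^{n-(k-1)(p^e+1)-1}\star\bm{u}.
\]
Hence there exist $\lambda_0,\dots,\lambda_{n-(k-1)(p^e+1)-1}\in\F_q$ such that $v_i^{p^e+1}=x_i=\lambda(a_i)u_i$ for $1\leq i\leq n$, where $\lambda(x):=\sum_l \lambda_l x^l$. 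The membership $v_i^{p^e+1}\in H$ is automatic from $v_i\in\F_q^*$.

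The main (and only nontrivial) step is to extract the leading coefficient of $\lambda(x)$ from the inhomogeneous equation. Substituting $x_i=\lambda(a_i)u_i$ into $\sum_i a_i^{(k-1)(p^e+1)}x_i=-1$ and setting $\mu(x):=x^{(k-1)(p^e+1)}\lambda(x)$, which has degree at most $n-1$, one gets $\sum_{i=1}^n \mu(a_i)u_i=-1$. The standard identity $\sum_{i=1}^n a_i^t u_i=0$ for $0\leq t\leq n-2$ and $\sum_{i=1}^n a_i^{n-1}u_i=1$ then shows that $\sum_i \mu(a_i)u_i$ equals the coefficient of $x^{n-1}$ in $\mu(x)$, which is precisely $\lambda_{n-(k-1)(p^e+1)-1}$, the leading coefficient of $\lambda$. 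Thus $\lambda_{n-(k-1)(p^e+1)-1}=-1$, so $\deg(\lambda)=n-(k-1)(p^e+1)-1$ exactly; then $-\lambda$ is monic of the stated degree and satisfies $-(-\lambda)(a_i)u_i=v_i^{p^e+1}\in H$, and renaming $-\lambda$ as $\lambda$ yields the statement. The only delicate point in the whole argument is the combinatorial observation that $k\geq p^e$ makes the exponent set $\{p^e i+j:0\leq i,j\leq k-1\}$ cover the full interval $[0,(k-1)(p^e+1)]$ with unique maximum, without which one could not cleanly reduce the nonhomogeneous system of Lemma \ref{lem <a,v> EG}(3) to a single Vandermonde-type block plus one prescribed moment.
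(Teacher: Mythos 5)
Your proposal is correct and follows essentially the same route as the paper's proof: reduce to necessity via Lemma \ref{lem yang 2}, use $k\geq p^e$ to collapse the system of Lemma \ref{lem <a,v> EG} into the Vandermonde-type block $(\bm{a}^0,\dots,\bm{a}^{(k-1)(p^e+1)})^{\mathcal{T}}\bm{x}^T=\bm{b}^T$, solve the homogeneous part with the basis $\bm{u},\bm{a}\star\bm{u},\dots$ from \cite[Lemma 5]{RefJ (2008) Li}, and extract the leading coefficient $-1$ from the last equation. Your $\mu(x)=x^{(k-1)(p^e+1)}\lambda(x)$ computation just makes explicit the moment identities that the paper invokes implicitly in that final substitution.
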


\begin{proof}
Similarly, by Lemma \ref{lem yang 2}, we need only prove necessity.
Since $k\geq p^e$, by Lemma \ref{lem <a,v> EG},
$GRS_k(\bm{a},\bm{v},\infty)$ is $e$-Galois self-orthogonal
 if and only if the equation\begin{equation}\label{eq EGRS 1}
B\bm{x}^{T}=(\bm{a}^0,\bm{a}^1,\dots,\bm{a}^{(k-1)(p^e+1)})^{\mathcal{T}}\bm{x}^T=\bm{b}^T
\end{equation}
has a solution $\bm{x}=\bm{v}^{p^e+1}=(x_1,x_2,\dots,x_n)\in H^n$,
where $\bm{b}=(0,\dots,0,-1)\in \F_q^{(k-1)(p^e+1)+1}$.
Similarly as in Lemma \ref{lem lam(x)}, from the first $(k-1)(p^e+1)$ lines of Eq. (\ref{eq EGRS 1}),
there exists $\lambda_l\in \F_q$ for $0\leq l\leq n-(k-1)(p^e+1)-1$ such that
$$x_i=\sum_{0\leq l\leq n-(k-1)(p^e+1)-1}\lambda_la_i^l u_i\ for\ 1\leq i\leq n.$$
Substituting $x_i$ to the last line of Eq. (\ref{eq EGRS 1}), we have
$$\lambda_{n-(k-1)(p^e+1)-1}\sum_{i=1}^n a_i^{n-1}u_i=-1,$$
which implies $\lambda_{n-(k-1)(p^e+1)-1}=-1$ from \cite[Lemma 5]{RefJ (2008) Li}.
Let
$$\lambda(x)=x^{n-(k-1)(p^e+1)-1}-\sum_{l=0}^{n-(k-1)(p^e+1)-2}\lambda_lx^l,$$
it follows that $-\lambda(a_i)u_i=v_i^{p^e+1}$ for $1\leq i\leq n$.
This completes the proof.
\end{proof}

 By Lemma \ref{lem lam(x)}, we can obtain the following relation between Galois self-orthogonal GRS codes.

\begin{lemma}\label{lem relation}
Let $\gcd(e,m)=\gcd(e',m)$. If there exists an $[n,k']_{q}$
$e'$-Galois self-orthogonal GRS code, for some $p^{e'}\leq k'\leq \lfloor \frac{n+p^{e'}-1}{p^{e'}+1}\rfloor$,
then for $1\leq k\leq \lfloor \frac{n+p^e-1-\deg(\lambda(x))}{p^e+1}\rfloor$, where $\deg(\lambda(x))\leq n-(k'-1)(p^{e'}+1)-2$, there exists an $[n,k]_q$ $e$-Galois self-orthogonal MDS code.
\end{lemma}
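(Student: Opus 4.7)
The plan is to exploit the fact that, by Lemma \ref{lem HHH}, the multiplicative subgroup $H=\langle w^{p^e+1}\rangle$ depends only on $s=\gcd(e,m)$ and on whether $m/s$ and $p$ are even or odd. Since $\gcd(e,m)=\gcd(e',m)=s$, the groups $\langle w^{p^e+1}\rangle$ and $\langle w^{p^{e'}+1}\rangle$ coincide; in particular, $H=\{x^{p^e+1}:x\in\F_q^*\}=\{x^{p^{e'}+1}:x\in\F_q^*\}$. This is the observation that lets us transfer a self-orthogonality certificate from one Galois inner product to another.

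First, I apply Lemma \ref{lem lam(x)} to the hypothesized $[n,k']_q$ $e'$-Galois self-orthogonal GRS code $GRS_{k'}(\bm{a},\bm{v})$ (which we may, by Remark \ref{rem 2}, normalize so that $\bm{a}$, $\bm{v}$, and the associated $\bm{u}$ are in the standard form). Since $p^{e'}\leq k'\leq\lfloor(n+p^{e'}-1)/(p^{e'}+1)\rfloor$, the necessity direction of Lemma \ref{lem lam(x)} gives a nonzero polynomial $\lambda(x)\in\F_q[x]$ with $\deg(\lambda(x))\leq n-(k'-1)(p^{e'}+1)-2$ such that
\[
\lambda(a_i)u_i=v_i^{p^{e'}+1}\in H,\qquad 1\leq i\leq n.
\]

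Next, using that $H=\{x^{p^e+1}:x\in\F_q^*\}$, for each $i$ I choose $v'_i\in\F_q^*$ with $(v'_i)^{p^e+1}=\lambda(a_i)u_i$; set $\bm{v}'=(v'_1,\dots,v'_n)\in(\F_q^*)^n$. By construction, the same polynomial $\lambda(x)$ now certifies the $e$-Galois condition, namely $\lambda(a_i)u_i=(v'_i)^{p^e+1}\in H$ for all $i$.

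Finally, for any $k$ in the stated range $1\leq k\leq\lfloor(n+p^e-1-\deg(\lambda(x)))/(p^e+1)\rfloor$, a direct rearrangement yields $\deg(\lambda(x))\leq n-(k-1)(p^e+1)-2$, so the sufficiency direction (Lemma \ref{lem yang 1}) applied to $\bm{a}$, $\bm{v}'$, and $\lambda(x)$ shows that $GRS_k(\bm{a},\bm{v}')$ is $e$-Galois self-orthogonal; being a GRS code it is automatically MDS with parameters $[n,k]_q$. The only conceptual step is the equality of the image groups of the $(p^e+1)$-th and $(p^{e'}+1)$-th power maps under $\gcd(e,m)=\gcd(e',m)$; the remaining work is bookkeeping in the degree inequality, so I do not anticipate a real obstacle beyond verifying that numerical bound.
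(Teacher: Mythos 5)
Your proposal is correct and follows essentially the same route as the paper: both identify $H=\langle w^{p^e+1}\rangle=\langle w^{p^{e'}+1}\rangle$ from $\gcd(e,m)=\gcd(e',m)$, extract the polynomial $\lambda(x)$ via the necessity direction of Lemma \ref{lem lam(x)}, verify the degree bound $\deg(\lambda(x))\leq n-(k-1)(p^e+1)-2$, and conclude by the sufficiency direction (Lemma \ref{lem yang 1}). Your explicit construction of the new multiplier vector $\bm{v}'$ and your use of Lemma \ref{lem yang 1} (which does not require $k\geq p^e$) for the final step are slightly more careful than the paper's wording, but the argument is the same.
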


\begin{proof}
Note that $H=\langle w^{p^s+1}\rangle=\langle w^{p^e+1}\rangle=\langle w^{p^{e'}+1}\rangle$.
If there exists an $[n,k']_{q}$
$e'$-Galois self-orthogonal GRS code, by Lemma \ref{lem lam(x)},
there exists a nonzero polynomial $\lambda(x)\in \F_q[x]$ such that $\lambda(a_i)u_i\in H$, where $\deg(\lambda(x))\leq n-(k'-1)(p^{e'}+1)-2$.
Since $k\leq \lfloor \frac{n+p^e-1-\deg(\lambda(x))}{p^e+1}\rfloor$,
it follows that $\deg(\lambda(x))\leq n-(k-1)(p^e+1)-2$.
Hence we have $\lambda(a_i)u_i\in H$,
where $\deg(\lambda(x))\leq n-(k'-1)(p^s+1)-2$.
The result follows from Lemma \ref{lem lam(x)}.
 This completes the proof.
\end{proof}

By Lemma \ref{lem relation}, we have the following corollary.

\begin{corollary}\label{cor relation}
Suppose $\frac{m}{s}$ and $p$ are odd.
If there exists an $[n,k']_{q}$
Euclidean self-orthogonal GRS code,
then
for $1\leq k\leq 1+\lfloor \frac{2k'-2}{p^e+1}\rfloor$,
there exists an $[n,k]_q$ $e$-Galois self-orthogonal MDS code.
\end{corollary}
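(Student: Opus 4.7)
The plan is to specialize the argument behind Lemma \ref{lem relation} to the case $e'=0$ (the Euclidean case). The single subtlety is that the literal hypothesis $\gcd(e,m)=\gcd(e',m)$ of Lemma \ref{lem relation} would force $e=0$ if we took $e'=0$; however, what the proof of that lemma genuinely requires is the equality of the multiplicative subgroups $H_e=\langle w^{p^e+1}\rangle$ and $H_{e'}=\langle w^{p^{e'}+1}\rangle$. Under the present hypothesis that $\frac{m}{s}$ and $p$ are both odd, Lemma \ref{lem HHH} yields $H_e=QR$, while for any odd $p$ one also has $H_0=\langle w^2\rangle=QR$. Hence $H_0=H_e$, and the argument of Lemma \ref{lem relation} transfers.

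Concretely, starting from the $[n,k']_q$ Euclidean self-orthogonal $GRS_{k'}(\bm{a},\bm{v})$, I would apply Lemma \ref{lem lam(x)} at $e'=0$ (the hypothesis $p^{e'}=1\le k'$ is automatic and $2\cdot 0\le m$ is trivial) to extract a nonzero $\lambda(x)\in \F_q[x]$ with $\deg(\lambda(x))\le n-2k'$ satisfying $\lambda(a_i)u_i=v_i^2$ for $1\le i\le n$. Since every $v_i^2$ lies in $QR$, and $QR=H_e$ by the paragraph above, we obtain $\lambda(a_i)u_i\in H_e$ for each $i$, i.e. there exists a choice of $v_i''\in \F_q^*$ with $(v_i'')^{p^e+1}=\lambda(a_i)u_i$.

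At this point Lemma \ref{lem yang 1}, applied at exponent $e$ and target dimension $k$, produces an $[n,k]_q$ $e$-Galois self-orthogonal GRS code provided the degree condition $\deg(\lambda(x))\le n-(k-1)(p^e+1)-2$ holds. Since $\deg(\lambda(x))\le n-2k'$ is already in hand, this reduces to the arithmetic inequality $(k-1)(p^e+1)\le 2k'-2$, which is exactly the stated range $1\le k\le 1+\lfloor (2k'-2)/(p^e+1)\rfloor$. As GRS codes are MDS, this yields the claimed $e$-Galois self-orthogonal MDS code.

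The only real obstacle in this proof is recognizing that the gcd condition in Lemma \ref{lem relation} is stronger than what its proof actually needs, and that the genuinely required equality $H_0=H_e$ follows here from Lemma \ref{lem HHH} together with the elementary fact that $\langle w^2\rangle=QR$ for odd $p$. Once this identification is made, the remaining steps are a direct substitution into the proof of Lemma \ref{lem relation} and a one-line degree bookkeeping.
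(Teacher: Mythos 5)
Your proof is correct and follows essentially the same route as the paper: the paper's own proof simply invokes Lemma \ref{lem relation} (with $e'=0$) and then performs the same floor-function bookkeeping $\lfloor \frac{n+p^e-1-\deg(\lambda(x))}{p^e+1}\rfloor\geq 1+\lfloor \frac{2k'-2}{p^e+1}\rfloor$ that you carry out, the underlying mechanism in both cases being the transfer of the polynomial $\lambda(x)$ with $\deg(\lambda(x))\le n-2k'$ from the Euclidean characterization (Lemma \ref{lem lam(x)} at $e'=0$) to the $e$-Galois sufficient condition (Lemma \ref{lem yang 1}) via $H_0=\langle w^2\rangle=QR=H_e$. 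Your observation that the literal hypothesis $\gcd(e,m)=\gcd(e',m)$ of Lemma \ref{lem relation} would force $e=0$ when $e'=0$, and that the proof only actually needs $H_{e'}=H_e$, is a valid and careful point that the paper glosses over when it cites that lemma here.
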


\begin{proof}
By Lemma \ref{lem relation},
it follows that
for $1\leq k\leq \lfloor \frac{n+p^e-1-\deg(\lambda(x))}{p^e+1}\rfloor$, where $\deg(\lambda(x))\leq n-2(k'-1)-2$, there exists an $[n,k]_q$ $e$-Galois self-orthogonal MDS code.
Note that
$$\lfloor \frac{n+p^e-1-\deg(\lambda(x))}{p^e+1}\rfloor\geq \lfloor \frac{p^e+1+2(k'-1)}{p^e+1}\rfloor=1+\lfloor \frac{2k'-2}{p^e+1}\rfloor,$$
by Corollary \ref{cor kkkk}, the required result can be obtained.
 This completes the proof.
\end{proof}

\begin{remark}
Corollary \ref{cor relation} generalises the previous results.
\begin{itemize}
\item
By \cite[Corollary 2.1]{RefJ (2023) Chen},
if there exists an $[n,k]_{q}$ $e$-Galois self-orthogonal GRS code, then there exists an $[n,k]_{q}$ MDS code with $l$-dimensional $e$-Galois hull, where $0\leq l\leq k$, where $q\geq 3$.
Therefore, if we take $e=s$ and $2k'=n$ in Corollary \ref{cor relation}, we obtain the results presented in \cite[Theorem 4.1]{RefJ (2022) Fang CC}.
\end{itemize}
\end{remark}

\section{Construction of Galois self-orthogonal MDS code}\label{sec4}

In this section, we use the methods introduced in Section \ref{sec3} to
construct some new classes of Galois self-orthogonal MDS codes.
 By Remark \ref{rem 1}, we only discuss the case $2e\leq m$.
 Depending on the length of the Galois self-orthogonal MDS codes, we divide this section into the following three subsections.

\subsection{$e$-Galois self-orthogonal MDS code of length $1\leq n\leq p^s+1$}

In this subsection, we construct some new classes of $e$-Galois self-orthogonal MDS code of length $1\leq n\leq p^{s}+1$.
It is worth noting that
the dimension of these $e$-Galois self-orthogonal MDS codes can be taken from $1$ to $\lfloor\frac{n}{2}\rfloor$.

\begin{lemma}\label{lem ax+b}
Let the two vectors $\bm{a}$ and $\bm{v}$ be defined as in Definition \ref{def GRS}.
If there exist $a\in \F_q^*$ and $b\in \F_q$ such that $\sigma^e(\bm{a})=a\bm{a}+b\bm{1}$ or $\sigma^{m-e}(\bm{a})=a\bm{a}+b\bm{1}$,
then the following statements hold.
\begin{itemize}
\item[(1)]
$GRS_k(\bm{a},\bm{v})$ is $e$-Galois self-orthogonal if and only if
there exists a nonzero polynomial $\lambda(x)\in \F_q[x]$ with $\deg(\lambda(x))\leq n-2k$ such that
$\lambda(a_i)u_i=v_i^{p^e+1}\in H$, for $1\leq i\leq n$.
\item[(2)]
$GRS_k(\bm{a},\bm{v},\infty)$ is $e$-Galois self-orthogonal if and only if
there exists a monic polynomial $\lambda(x)\in \F_q[x]$ with $\deg(\lambda(x))= n-2k+1$ such that
$-\lambda(a_i)u_i=v_i^{p^e+1}\in H$, for $1\leq i\leq n$.
\end{itemize}
\end{lemma}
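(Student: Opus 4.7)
The plan is to reduce the $k^2$-row systems appearing in Lemmas~\ref{lem <a,v>}(5) and~\ref{lem <a,v> EG}(5), whose rows are indexed by the sparse set $\{p^ei+j:0\le i,j\le k-1\}$, to dense $(2k-1)$-row systems involving only the consecutive exponents $0,1,\dots,2k-2$; then the standard parity-check basis of GRS codes from \cite[Lemma~5]{RefJ (2008) Li} finishes the argument exactly as in Lemmas~\ref{lem lam(x)} and~\ref{lem lam(x) 2}.

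First I would dispose of the disjunction in the hypothesis. Applying $\sigma^{m-e}$ to $\sigma^e(\bm{a})=a\bm{a}+b\bm{1}$ yields $\sigma^{m-e}(\bm{a})=\sigma^{m-e}(a)^{-1}\bm{a}-\sigma^{m-e}(b/a)\bm{1}$, of the same affine form, so the two alternative hypotheses are equivalent up to relabelling. Combined with the $e\leftrightarrow m-e$ symmetry supplied by Lemma~\ref{lem duichen}, it suffices to work under $\sigma^e(\bm{a})=a\bm{a}+b\bm{1}$ componentwise, i.e.\ $a_i^{p^e}=aa_i+b$ for $1\le i\le n$.

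The core step is a row-span calculation. By the binomial theorem,
\[
\bm{a}^{p^ei+j}=(\sigma^e(\bm{a}))^i\star\bm{a}^j=\sum_{l=0}^{i}\binom{i}{l}a^l b^{i-l}\bm{a}^{l+j},
\]
so every row of the matrix $A$ from Lemma~\ref{lem <a,v>} lies in $V:=\mathrm{span}\{\bm{a}^0,\bm{a}^1,\dots,\bm{a}^{2k-2}\}$. Conversely, fixing $j=k-1$ and varying $i=0,1,\dots,k-1$, the expansion of $\bm{a}^{p^ei+k-1}$ has leading monomial $a^i\bm{a}^{k-1+i}$ with $a^i\neq 0$, so upper-triangular elimination recovers $\bm{a}^{k-1},\bm{a}^k,\dots,\bm{a}^{2k-2}$; together with $\bm{a}^0,\dots,\bm{a}^{k-1}$ from the $i=0$ rows, the row span of $A$ equals exactly $V$. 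For part~(1) this makes $A\bm{x}^T=\bm{0}^T$ equivalent to $B\bm{x}^T=\bm{0}^T$ with $B=(\bm{a}^0,\dots,\bm{a}^{2k-2})^{\mathcal{T}}$; \cite[Lemma~5]{RefJ (2008) Li} describes its solution space as the span of $\bm{u},\bm{a}\star\bm{u},\dots,\bm{a}^{n-2k}\star\bm{u}$, so any solution takes the form $x_i=\lambda(a_i)u_i$ with $\deg\lambda\le n-2k$, and coupling with $\bm{x}=\bm{v}^{p^e+1}\in H^n$ yields both directions. For part~(2), the same row-span argument applied to $A$ with the row $(i,j)=(k-1,k-1)$ removed shows (since all other indices satisfy $i+j\le 2k-3$) that the homogeneous constraints in Lemma~\ref{lem <a,v> EG}(5) span only $\mathrm{span}\{\bm{a}^0,\dots,\bm{a}^{2k-3}\}$, so $x_i=\mu(a_i)u_i$ with $\deg\mu\le n-2k+1$; substituting into $\bm{a}^{(k-1)(p^e+1)}\bm{x}^T=-1$ and invoking $\sum_i a_i^ju_i=\delta_{j,n-1}$ for $0\le j\le n-1$ pins down the leading coefficient of $\mu$, after which renormalisation produces the required monic $\lambda$.

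The main obstacle I anticipate is the normalisation in part~(2). Expanding $(a\bm{a}+b\bm{1})^{k-1}\star\bm{a}^{k-1}$, only the top term $a^{k-1}\bm{a}^{2k-2}$ contributes to the inhomogeneous equation once the homogeneous constraints are imposed, so the derived scalar identity is $a^{k-1}\mu_{n-2k+1}=-1$ rather than the cleaner $\mu_{n-2k+1}=-1$ from Lemma~\ref{lem lam(x) 2}. The factor $a^{k-1}$ must then be absorbed via the equivalence in Lemma~\ref{lem equiv} (replacing $(\bm{a},\bm{v})$ by $(a'\bm{a}+b'\bm{1},(a')^{1-k}\bm{v})$) so that the resulting $\lambda$ becomes monic of degree $n-2k+1$, and one has to verify that this rescaling is compatible with the constraint $v_i^{p^e+1}\in H$; this compatibility is the technical point that turns the bare reduction into the clean monic statement claimed.
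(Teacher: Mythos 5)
Your overall route is the same as the paper's: use Lemma \ref{lem <a,v>} (resp.\ Lemma \ref{lem <a,v> EG}) to translate self-orthogonality into the linear system $A\bm{x}^T=\bm{0}^T$ (resp.\ $A\bm{x}^T=\bm{b}^T$) with $\bm{x}=\bm{v}^{p^e+1}\in H^n$, reduce $A$ to $B=(\bm{a}^0,\dots,\bm{a}^{2k-2})^{\mathcal{T}}$ via $\bm{a}^{p^ei+j}=(a\bm{a}+b\bm{1})^i\star\bm{a}^j$, and then invoke the parity-check basis $\bm{u},\bm{a}\star\bm{u},\dots$ from \cite[Lemma 5]{RefJ (2008) Li} as in Lemmas \ref{lem lam(x)} and \ref{lem lam(x) 2}. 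Your binomial-expansion/triangular-elimination argument is exactly the content of the paper's one-line claim that ``the converse also holds,'' and your handling of the disjunction in the hypothesis (applying $\sigma^{m-e}$ to turn one affine relation into the other) is a valid alternative to the paper's appeal to Lemma \ref{lem duichen}. Part (1) of your proposal is complete and correct.

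The normalisation issue you flag in part (2) is real, and in fact it is a gap in the paper's own proof rather than only in yours: the paper asserts that $A\bm{x}^T=\bm{b}^T$ is equivalent to $B\bm{x}^T=\tilde{\bm{b}}^T$ with $\tilde{\bm{b}}=(0,\dots,0,-1)$, but eliminating the lower-order terms of the row $(i,j)=(k-1,k-1)$ leaves $a^{k-1}\langle\bm{a}^{2k-2},\bm{x}\rangle_E=-1$, so the last entry should be $-a^{-(k-1)}$ and the resulting $\lambda$ has leading coefficient $a^{1-k}$, not $1$. Since a polynomial of degree at most $n-2k+1<n$ interpolating the values $-v_i^{p^e+1}u_i^{-1}$ at the $n$ points $a_i$ is unique, its leading coefficient is forced, so ``monic'' and ``leading coefficient $a^{1-k}$'' are mutually exclusive conditions on $\bm{v}$ unless $a^{k-1}=1$; the statement is therefore literally correct only in that case (e.g.\ $a=1$, which is what Corollary \ref{cor ps 2} and the comparison with \cite{RefJ (2021) Fang It} actually use, while in Theorem \ref{th 1 ax+b}(2) the discrepancy is harmless because $H=\F_q^*$ there and $\lambda$ can be rescaled inside $H$). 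Your proposed repair via Lemma \ref{lem equiv} is not carried out and I do not believe it closes in general: replacing $\bm{a}$ by $a'\bm{a}+b'\bm{1}$ changes the affine parameter to $(a')^{p^e-1}a$, so killing the factor requires solving $(a')^{(p^e-1)(k-1)}=a^{1-k}$, which need not be possible in $\F_q^*$, and the accompanying rescalings of $u_i$ and $v_i$ must also be tracked. The clean fix is simply to state (2) with leading coefficient $a^{1-k}$ in place of ``monic.''
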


\begin{proof}
By Lemma \ref{lem duichen}, we only need to prove the case $\sigma^e(\bm{a})=a\bm{a}+b\bm{1}$.

(1) By Lemma \ref{lem <a,v>},
$GRS_k(\bm{a},\bm{v})$ is $e$-Galois self-orthogonal
 if and only if the equation
 $A\bm{x}^T=\bm{0}^T$ has a solution $\bm{x}=\bm{v}^{p^e+1}=(x_1,x_2,\dots,x_n)\in H^n$,
 where the matrix $A$ be defined as in Lemma \ref{lem <a,v>}.
 Let
 $$B=(\bm{a}^0,\bm{a}^1,\dots,\bm{a}^{2k-2})^\mathcal{T}$$
be a $(2k-1)\times n$ matrix over $\F_{q}$.
Since $a\neq 0$ and
 \[\begin{split}
 a_l^{p^ei+j}&=(\sigma^e(a_l))^ia_l^j\\
 &=(aa_l+b)^ia_l^j,
\end{split}\]
where $1\leq l\leq n$ and $0\leq i,j\leq k-1$, all row vectors of $A$ are represented by all row vectors of $B$.
It is easy to check that the converse is also holds.
Then the matrix $A$ is row equivalent to the matrix $B$.
It follows that $GRS_k(\bm{a},\bm{v})$ is $e$-Galois self-orthogonal
 if and only if the equation
 $B\bm{x}^T=\bm{0}^T$ has a solution $\bm{x}=\bm{v}^{p^e+1}\in H^n$.
Similar to the proof of Lemma \ref{lem lam(x)}, we have that
the equation
 $B\bm{x}^T=\bm{0}^T$ has a solution $\bm{x}=\bm{v}^{p^e+1}\in H^n$
 if and only if
 there exists a nonzero polynomial $\lambda(x)\in \F_q[x]$ with $\deg(\lambda(x))\leq n-2k$ such that
$\lambda(a_i)u_i=v_i^{p^e+1}\in H$, for $1\leq i\leq n$.

(2) Similar to the proof of (1), we have that
 \[\begin{split}  & GRS_k(\bm{a},\bm{v},\infty)\ {\rm is}\ e\mbox{-}{\rm Galois\  self}\mbox{-}{\rm orthogonal}\\\
\Leftrightarrow\ &{\rm the\ equation\ }
 A\bm{x}^T=\bm{b}^T\ {\rm has\ a\ solution}\ \bm{x}=\bm{v}^{p^e+1}\in H^n, \\
 \Leftrightarrow\ &{\rm the\ equation\ }
 B\bm{x}^T=\tilde{\bm{b}}^T\ {\rm has\ a\ solution}\ \bm{x}=\bm{v}^{p^e+1}\in H^n, \\
\Leftrightarrow\ & {\rm there\ exists\ a\ monic\ polynomial}\ \lambda(x)\in \F_q[x]\ {\rm with}\ \deg(\lambda(x))= n-2k+1\\
 & {\rm such\ that}\ -\lambda(a_i)u_i=v_i^{p^e+1}\in H,\ {\rm for}\ 1\leq i\leq n,\\
\end{split}\]
where $\bm{b}=(0,\dots,0,-1)\in \F_q^{k^2}$ and $\tilde{\bm{b}}=(0,\dots,0,-1)\in \F_q^{2k-1}$.
 This completes the proof.
\end{proof}

\begin{remark}
If we take $e=0$, $a=1$ and $b=0$ in Lemma \ref{lem ax+b}, we obtain the results presented in  \cite[Lemmas 2.2 and 2.3]{RefJ (2021) Fang It}.
In other words, Lemma \ref{lem ax+b} is a generalization of \cite[Lemmas 2.2 and 2.3]{RefJ (2021) Fang It}.
\end{remark}

\begin{theorem}\label{th 1 ax+b}
If there exist $n$ distinct elements $a_1,a_2,\dots,a_{n}\in \F_q$ such that $\sigma^e(a_i)=aa_i+b$
or $\sigma^{m-e}(a_i)=aa_i+b$ $(1\leq i\leq n)$ for some $a\in \F_q^*$ and $b\in \F_q$, then the following statements hold.
\begin{itemize}
\item[(1)]
Suppose $\frac{m}{s}$ is odd and $p$ is even, then for $1\leq k\leq \lfloor \frac{n}{2}\rfloor$,
there exists an $[n,k]_q$ $e$-Galois self-orthogonal MDS code.
\item[(2)]
Suppose $\frac{m}{s}$ is odd and $p$ is even, then
for $n<q$ and $1\leq k\leq \lfloor \frac{n+1}{2}\rfloor$,
there exists an $[n+1,k]_q$ $e$-Galois self-orthogonal MDS code.
\item[(3)]
Suppose $\frac{m}{s}$ and $p$ are odd.
If there exists an $[n,k]_{q}$
Euclidean self-orthogonal (extended) GRS code with code locators $a_1,a_2,\dots,a_n$,
then there exists an $[n,k]_q$ $e$-Galois self-orthogonal MDS code.
\end{itemize}
\end{theorem}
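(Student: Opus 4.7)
The plan is to reduce each of the three cases to Lemma \ref{lem ax+b}, which under the hypothesis $\sigma^e(\bm{a}) = a\bm{a}+b\bm{1}$ (or $\sigma^{m-e}(\bm{a}) = a\bm{a}+b\bm{1}$) converts $e$-Galois self-orthogonality of $GRS_k(\bm{a},\bm{v})$ and $GRS_k(\bm{a},\bm{v},\infty)$ into the purely combinatorial problem of finding a polynomial $\lambda(x)$ of prescribed degree with $\pm\lambda(a_i)u_i \in H$ for all $i$. Lemma \ref{lem HHH} then makes the set $H$ explicit in each case, after which the existence of $\lambda$ is immediate.

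For case (1), Lemma \ref{lem HHH} gives $H = \F_q^*$. Since $1 \leq k \leq \lfloor n/2\rfloor$ forces $n-2k \geq 0$, I take the constant polynomial $\lambda(x) = 1$; then $\lambda(a_i)u_i = u_i \in \F_q^* = H$. Because $H = \F_q^*$, every nonzero element is a $(p^e+1)$-th power, so I can pick $v_i \in \F_q^*$ with $v_i^{p^e+1} = u_i$, and Lemma \ref{lem ax+b}(1) yields an $[n,k]_q$ $e$-Galois self-orthogonal GRS code.

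For case (2), $H = \F_q^*$ as well, but Lemma \ref{lem ax+b}(2) demands a \emph{monic} $\lambda$ of \emph{exact} degree $n-2k+1 \geq 0$ with no roots among the $a_i$'s. Here the hypothesis $n < q$ is used: the complement $\F_q \setminus \{a_1,\dots,a_n\}$ is nonempty, so I pick any $c$ in it and set $\lambda(x) = (x-c)^{n-2k+1}$. Then $-\lambda(a_i)u_i \in \F_q^* = H$, and Lemma \ref{lem ax+b}(2) produces the required $[n+1,k]_q$ $e$-Galois self-orthogonal EGRS code.

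For case (3), Lemma \ref{lem HHH} gives $H = QR$. From an $[n,k]_q$ Euclidean self-orthogonal (extended) GRS code with locators $a_1,\dots,a_n$, the $e=0$ specialization of Lemma \ref{lem ax+b} (where the automorphism hypothesis is trivially satisfied with $a=1$, $b=0$) furnishes a polynomial $\lambda(x)$ of the appropriate degree satisfying $\pm\lambda(a_i)u_i = v_i^2 \in QR$ for every $i$. Since $p$ is odd, $p^e+1$ is even, so the image of the $(p^e+1)$-th power map on $\F_q^*$ is exactly $QR = H$; hence I can pick $v_i' \in \F_q^*$ with $(v_i')^{p^e+1} = \pm\lambda(a_i)u_i$, and re-applying Lemma \ref{lem ax+b} in the $e$-Galois direction (using now the hypothesis on $\sigma^e(a_i)$ or $\sigma^{m-e}(a_i)$) yields the desired $[n,k]_q$ $e$-Galois self-orthogonal MDS code. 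There is no serious obstacle in any of the three cases; the only care required is in (2), where the need for a monic polynomial of an exact degree is handled by translating to a root outside $\{a_1,\dots,a_n\}$, which is possible precisely because $n < q$.
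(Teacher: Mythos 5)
Your proposal is correct and follows essentially the same route as the paper: reduce all three cases to Lemma \ref{lem ax+b}, identify $H$ via Lemma \ref{lem HHH}, and exhibit an explicit $\lambda(x)$ (constant in case (1), a power of a linear factor avoiding the locators in case (2), and the one supplied by the $e=0$ instance of Lemma \ref{lem ax+b} in case (3)). The only cosmetic difference is in case (2), where the paper translates the locators so that $\bm{a}\in(\F_q^*)^n$ and takes $\lambda(x)=x^{n-2k+1}$, whereas you keep the locators fixed and take $\lambda(x)=(x-c)^{n-2k+1}$ with $c\notin\{a_1,\dots,a_n\}$; both uses of $n<q$ are equivalent.
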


\begin{proof}
Similarly,
we only need to prove the case $\sigma^e(a_i)=aa_i+b$, $1\leq i\leq n$.

Let $\bm{a}=(a_1,a_2,\dots,a_n)\in \F_q^n$,
it follows that $\sigma^e(\bm{a})=a\bm{a}+b\bm{1}$.

(1)
For $1\leq k\leq \lfloor\frac{n}{2}\rfloor$,
let $\lambda(x)=1$ with $\deg(\lambda(x))=0\leq n-2k$.
Since $\frac{m}{s}$ is odd and is $p$ even, by Lemma \ref{lem HHH}, we have $H=\F_q^*$.
It follows that $\lambda(a_i)u_i=u_i\in H=\F_q^*$, for $1\leq i\leq n$.
Then by Lemma \ref{lem ax+b},
there exists a vector $\bm{v}\in (\F_{q}^*)^n$
such that $GRS_{k}(\bm{a},\bm{v})$ is $e$-Galois self-orthogonal.

(2)
Since $n<q$, there exists $c\in \F_q$ such that $\bm{a}+c\bm{1}\in (\F_q^*)^n$.
Note that $\sigma^e(\bm{a}+c\bm{1})=a\bm{a}+(b+\sigma^e(c))\bm{1}$.
Then we always assume that $\bm{a}\in (\F_q^*)^n$ without loss of generality.
For $1\leq k\leq \lfloor\frac{n+1}{2}\rfloor$, let $\lambda(x)=x^{n-2k+1}$ with $\deg(\lambda(x))=n-2k+1$.
It follows that $-\lambda(a_i)u_i=-a_i^{n-2k+1}u_i\in H=\F_q^*$, for $1\leq i\leq n$.
Then by Lemma \ref{lem ax+b},
there exists a vector $\bm{v}\in (\F_{q}^*)^n$
such that $GRS_{k}(\bm{a},\bm{v},\infty)$ is $e$-Galois self-orthogonal.

(3)
We only prove the case for GRS codes, which can be obtained similarly for EGRS codes.
Since $\frac{m}{s}$ and $p$ are odd, by Lemma \ref{lem HHH}, we have $H=QR$.
If there exists an $[n,k]_{q}$ Euclidean self-orthogonal GRS code $GRS_k(\bm{a},\bm{v})$ with $\sigma^e(\bm{a})=a\bm{a}+b\bm{1}$,
 then by Lemma \ref{lem ax+b}, there exists a nonzero polynomial $\lambda(x)\in \F_q[x]$ with $\deg(\lambda(x))\leq n-2k$ such that
$\lambda(a_i)u_i\in QR=H$, for $1\leq i\leq n$.
The desired result follows from Lemma \ref{lem ax+b}.
This completes the proof.
\end{proof}

When $a=1$ and $b=0$,
we can take $n$ distinct elements $a_1,a_2,\dots,a_n$ from $\F_{p^{s}}$ such that
$\sigma^e(a_i)=a_i$ for $1\leq i\leq n$.
By Theorem \ref{th 1 ax+b} and the next Theorem \ref{th p2s 1}, we have the following corollary.

\begin{corollary}\label{cor ps 2}
\begin{itemize}
\item[(1)]
Suppose $\frac{m}{s}$ is odd and $p$ is even,
then for $1\leq n\leq   \min\{p^{s}+1,q\}$ and $1\leq k\leq \lfloor \frac{n}{2}\rfloor$, there exists an $[n,k]_q$ $e$-Galois self-orthogonal MDS code.

\item[(2)]
Suppose $\frac{m}{s}$ and $p$ are odd.
For $1\leq n\leq p^{s}+1$.
If there exists an $[n,k]_{p^{s}}$
Euclidean self-orthogonal (extended) GRS code,
then there exists an $[n,k]_q$ $e$-Galois self-orthogonal MDS code.

\item[(3)]
Suppose $\frac{m}{s}$ is even,
then for $1\leq n\leq p^{s}+1$ and $1\leq k\leq \lfloor \frac{n}{2}\rfloor$, there exists an $[n,k]_q$ $e$-Galois self-orthogonal MDS code.
\end{itemize}
\end{corollary}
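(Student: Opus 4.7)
The plan is to handle each part by applying the appropriate theorem to the simplest canonical choice of code locators.

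For parts (1) and (2), I would invoke Theorem \ref{th 1 ax+b} with $a=1$ and $b=0$. Since $s=\gcd(e,m)$ divides $e$, the Frobenius $\sigma^{e}$ fixes $\F_{p^{s}}$ pointwise, so any $n$ distinct elements $a_{1},\dots,a_{n}\in\F_{p^{s}}\subseteq\F_{q}$ automatically satisfy $\sigma^{e}(a_{i})=1\cdot a_{i}+0$. For part (1), if $n\leq p^{s}$ I apply Theorem \ref{th 1 ax+b}(1) directly. If $n=p^{s}+1$, the constraint $n\leq\min\{p^{s}+1,q\}$ forces $p^{s}+1\leq q$, i.e., $s<m$, so I take all $p^{s}$ elements of $\F_{p^{s}}$ as code locators and apply Theorem \ref{th 1 ax+b}(2) to produce an extended GRS code of length $p^{s}+1$. (When $s=m$, $\min\{p^{s}+1,q\}=q=p^{s}$ and this sub-case does not arise.) For part (2), the code locators of the assumed $[n,k]_{p^{s}}$ Euclidean self-orthogonal (extended) GRS code lie in $\F_{p^{s}}$, hence satisfy $\sigma^{e}(a_{i})=a_{i}$; viewing the code as an $\F_{q}$-code under $\F_{p^{s}}\subseteq\F_{q}$, Theorem \ref{th 1 ax+b}(3) delivers the desired $[n,k]_{q}$ $e$-Galois self-orthogonal MDS code.

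For part (3), the hypothesis of Theorem \ref{th 1 ax+b} on the parity of $\frac{m}{s}$ is not met, so my plan is to invoke Theorem \ref{th p2s 1}(3) instead. Since $\frac{m}{s}$ is even, $\F_{p^{2s}}$ is a subfield of $\F_{q}$, and Theorem \ref{th p2s 1}(3) lifts any $[n,k']_{p^{2s}}$ Hermitian self-orthogonal (extended) GRS code to an $[n,k]_{q}$ $e$-Galois self-orthogonal (extended) GRS code for all $1\leq k\leq k'$. It therefore suffices to produce, for each $1\leq n\leq p^{s}+1$, an $[n,\lfloor n/2\rfloor]_{p^{2s}}$ Hermitian self-orthogonal (extended) GRS code. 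This is a classical construction: taking code locators in $\F_{p^{s}}\subseteq\F_{p^{2s}}$ (so that $u_{i}\in\F_{p^{s}}^{*}$) and noting that the relevant image group $H_{p^{2s}}$ in the Hermitian setting over $\F_{p^{2s}}$ equals $\F_{p^{s}}^{*}$, the natural analogs of Lemmas \ref{lem lam(x)} and \ref{lem lam(x) 2} reduce the problem to choosing a polynomial $\lambda(x)\in\F_{p^{s}}[x]$ of the appropriate degree that is nonzero on the code locators, which always exists.

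The main obstacle I anticipate is arranging this in the boundary case $n=p^{s}+1$: the extended GRS case of Lemma \ref{lem lam(x) 2} forces $\lambda(x)$ to be monic of prescribed degree $n-2k+1$, so a preliminary linear shift via Lemma \ref{lem equiv} is needed to guarantee $0\notin\{a_{1},\dots,a_{n}\}$, after which $\lambda(x)=x^{n-2k+1}$ (resp.\ $\lambda(x)=1$ in the GRS case) verifies $\lambda(a_{i})u_{i}\in H_{p^{2s}}=\F_{p^{s}}^{*}$. Descending from $k=\lfloor n/2\rfloor$ to arbitrary smaller $k$ by Corollary \ref{cor kkkk} then finishes the proof.
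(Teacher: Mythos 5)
Your treatment of parts (1) and (2) coincides with the paper's. The paper's entire argument for these two parts is the one-line remark preceding the corollary (take the code locators in $\F_{p^s}$, so that $\sigma^e$ fixes them pointwise, and apply Theorem \ref{th 1 ax+b}); your case split between $n\le p^s$ and $n=p^s+1$, the observation that $s=m$ makes the latter sub-case vacuous, and the base-change step in part (2) merely make explicit what the paper leaves implicit. These parts are fine.

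For part (3) you correctly identify the paper's reduction, namely Theorem \ref{th p2s 1}(3), but then you diverge: the paper simply cites \cite{RefJ (2004) M. R,RefJ (2004) M. G} for the existence of $[n,\lfloor n/2\rfloor]_{p^{2s}}$ Hermitian self-orthogonal (extended) GRS codes for $n\le p^s+1$, whereas you try to construct them from scratch, and your sketch has two problems. First, the ``natural analogs of Lemmas \ref{lem lam(x)} and \ref{lem lam(x) 2}'' are the wrong tools: those lemmas assume $k\ge p^{e'}$ (here $k\ge p^s$), which never holds for $k\le\lfloor n/2\rfloor$; what you need is the analog of Lemma \ref{lem ax+b}, whose hypothesis $\sigma^{s}(\bm{a})=\bm{a}$ is precisely what improves the degree bound to $n-2k$ and makes $k=\lfloor n/2\rfloor$ reachable. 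Second, and more seriously, your fix for the boundary case $n=p^s+1$ fails when $p$ is even: the $p^s$ locators of the extended code must then be all of $\F_{p^s}$, so the shift removing $0$ necessarily moves them outside $\F_{p^s}$, after which $\lambda(a_i)u_i=a_i^{\,n-2k+1}u_i$ with exponent $n-2k+1=1$ no longer lies in $H_{p^{2s}}=\F_{p^s}^{*}$. (When $p$ is odd the exponent is $0$, $\lambda=1$ works and no shift is needed, so your recipe is fine there.) This boundary case is genuinely delicate --- for $p^s=2$ one checks directly that no $[3,1]_4$ Hermitian self-orthogonal MDS code exists at all --- so it cannot be absorbed into the generic argument; the paper sidesteps the issue entirely by citation, and if you want a self-contained proof you must supply a different construction (e.g.\ via $(p^s+1)$-th roots of unity) for $n=p^s+1$ with $p$ even.
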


\begin{proof}
In \cite{RefJ (2004) M. R,RefJ (2004) M. G}, all possible parameters of $p^{2s}$-ary
Hermitian self-orthogonal (extended) GRS codes of length $1\leq n\leq p^s+1$ are completely determined.
Then the proof of Corollary \ref{cor ps 2} (3) follows from the next Theorem \ref{th p2s 1} (3).
This completes the proof.
\end{proof}

\begin{example}
In this example,
we give some $e$-Galois self-orthogonal MDS codes from Corollary \ref{cor ps 2}.
Although our length is small compared to the field, the dimension can be taken to the maximum.
In particular, when $n$ is even, we can obtain some $e$-Galois self-dual MDS codes.
\begin{itemize}
\item[(1)]
Take $(p,m)=(2,3s)$ in Corollary \ref{cor ps 2} (1), then all the cases that satisfy that $\frac{m}{s}$ is odd are $e=s$ and $e=2s$.
Then we obtain some $[n,k]_{p^{3s}}$ $e$-Galois self-orthogonal MDS codes
 with the following parameters.
 \begin{center}
\begin{tabular}{ccc}
\hline
$e$ & $n$ & $k$ \\
\hline
$\{s,2s\}$ & $1\leq n\leq p^s+1$ &  $1\leq k\leq \lfloor \frac{n}{2}\rfloor$ \\
\hline
\end{tabular}
\end{center}

\item[(2)]
Take $m=3s$, $s$ even and $p$ odd in Corollary \ref{cor ps 2} (2).
By Lemma \ref{lem rule} (2), if there exists an $[n,\frac{n}{2}]_{p^s}$ Euclidean self-dual (extended) GRS code,
where $n$ is even, then there exists an $[n-1,\frac{n}{2}-1]_{p^s}$ Euclidean self-orthogonal (extended) GRS code.
From the results in \cite{RefJ (2020) Feng MDS it} and \cite{RefJ (2021) Ning MDS it}, there exist Euclidean self-orthogonal (extended) GRS codes with parameters $[n,k]_{p^s}$ for $1\leq n\leq 3p^{\frac{s}{2}}-3$ and $1\leq k\leq \lfloor\frac{n}{2}\rfloor$.
Then we obtain some $[n,k]_{p^{3s}}$ $e$-Galois self-orthogonal MDS codes
 with the following parameters.
 \begin{center}
\begin{tabular}{ccc}
\hline
$e$ & $n$ & $k$ \\
\hline
$\{s,2s\}$ & $1\leq n\leq 3p^{\frac{s}{2}}-3$ &  $1\leq k\leq \lfloor \frac{n}{2}\rfloor$ \\
\hline
\end{tabular}
\end{center}

\item[(3)]
Take $m=4s$ in Corollary \ref{cor ps 2} (3), then all the cases that satisfy that $\frac{m}{s}$ is even are $e=s$ and $e=3s$.
Then we obtain some $[n,k]_{p^{4s}}$ $e$-Galois self-orthogonal MDS codes
 with the following parameters.
 \begin{center}
\begin{tabular}{ccc}
\hline
$e$ & $n$ & $k$ \\
\hline
$\{s,3s\}$ & $1\leq n\leq p^s+1$ &  $1\leq k\leq \lfloor \frac{n}{2}\rfloor$ \\
\hline
\end{tabular}
\end{center}
\end{itemize}
\end{example}

\subsection{$e$-Galois self-orthogonal MDS code of length $p^s+1\leq n\leq p^{2s}+1$}

In this subsection, we construct some new classes of $e$-Galois self-orthogonal MDS code of length $p^s+1\leq n\leq p^{2s}+1$.
The $e$-Galois self-orthogonal MDS code we construct can have dimensions greater than $\lfloor \frac{n+p^e-1}{p^e+1}\rfloor$.

\begin{theorem}\label{th p2s 1}
\begin{itemize}
\item[(1)]
Suppose $\frac{m}{s}$ is odd and $p$ is even.
For $p^{s}+2\leq n\leq p^{2s}$,
let $n=\sum_{i=1}^tn_i$ for some positive integers $n_i\leq p^{s}$ and $t\leq p^{s}$.
Then for $1\leq k\leq \max\{\lfloor \frac{n+p^e-1}{p^e+1}\rfloor, \min\{\lfloor \frac{n_1}{2}\rfloor,\lfloor \frac{n_2}{2}\rfloor,\dots,\lfloor \frac{n_t}{2}\rfloor \} \}$,
 there exists an $[n,k]_q$ $e$-Galois self-orthogonal MDS code.

\item[(2)]
Suppose $\frac{m}{s}$ and $p$ are odd.
For $p^{s}+2\leq n\leq p^{2s}$,
let $n=\sum_{i=1}^tn_i$ for some positive integers $n_i\leq p^{s}$ and $t\leq p^{s}$.
For $1\leq i\leq t$,
if there exists an $[n_i,k_i]_{p^{s}}$
Euclidean self-orthogonal GRS code,
then for $1\leq k\leq \min\{k_1,k_2,\dots,k_t\} $,
 there exists an $[n,k]_q$ $e$-Galois self-orthogonal MDS code.

\item[(3)]
Suppose $\frac{m}{s}$ is even.
For $1\leq n\leq p^{2s}+1$, if there exists an $[n,k]_{p^{2s}}$
Hermitian self-orthogonal (extended) GRS code,
then there exists an $[n,k]_q$ $e$-Galois self-orthogonal MDS code.
\end{itemize}

\end{theorem}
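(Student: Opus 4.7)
The plan is to treat the three cases with two distinct constructions. In Cases (1) and (2) the length-$n$ code is built by concatenating $t$ blocks, each block of size $n_l$ drawn from a distinct additive coset of $\F_{p^s}$ in $\F_q$; the key observation is that $s\mid e$, so $\sigma^e$ fixes $\F_{p^s}$ pointwise and acts on each coset $c_l+\F_{p^s}$ as the translation $x\mapsto x+\delta_l$ with $\delta_l=\sigma^e(c_l)-c_l$. Hence on every block one has $\sigma^e(\bm{a}_l)=\bm{a}_l+\delta_l\bm{1}_{n_l}$, which is exactly the hypothesis of Lemma~\ref{lem ax+b} with $a=1$, $b=\delta_l$, reducing block-wise $e$-Galois self-orthogonality to a polynomial interpolation condition over $H$. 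Case (3) avoids blocks entirely: the idea is to take a Hermitian self-orthogonal (extended) GRS code over $\F_{p^{2s}}\subseteq\F_q$ and reinterpret it as an $e$-Galois self-orthogonal code over $\F_q$, exploiting that $e/s$ must be odd.

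For Cases (1) and (2), I would first decompose $n=\sum_{l=1}^{t}n_l$ with $n_l\le p^s$ and $t\le p^s$, choose $t$ distinct cosets $c_l+\F_{p^s}$ of $\F_{p^s}$ in $\F_q$ (which exist since $[\F_q:\F_{p^s}]=m/s$ is odd and at least $3$ in the nontrivial range, giving $p^{m-s}\ge p^{2s}$ cosets). In Case (1), Lemma~\ref{lem HHH} gives $H=\F_q^*$, so for $k\le\lfloor n_l/2\rfloor$ the choice $\lambda(x)=1$ in Lemma~\ref{lem ax+b}(1) automatically produces $\bm{v}_l\in(\F_q^*)^{n_l}$ making $GRS_k(\bm{a}_l,\bm{v}_l)$ $e$-Galois self-orthogonal, where $\bm{a}_l$ consists of any $n_l$ distinct elements of $c_l+\F_{p^s}$. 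In Case (2), $H=QR$; starting from the assumed $[n_l,k_l]_{p^s}$ Euclidean self-orthogonal GRS code with locators $\bm{a}'_l\in\F_{p^s}^{n_l}$, set $\bm{a}_l=\bm{a}'_l+c_l\bm{1}$. Applying Lemma~\ref{lem ax+b}(1) over $\F_{p^s}$ with $e=0$ gives a polynomial $\lambda_l(x)\in\F_{p^s}[x]$ of degree $\le n_l-2k_l$ with $\lambda_l(a'_{l,i})\,u_{l,i}$ a nonzero square in $\F_{p^s}$, hence a nonzero square in $\F_q$ (i.e.\ in $H$); substituting $x\mapsto x-c_l$ and reapplying Lemma~\ref{lem ax+b}(1) over $\F_q$ then furnishes, for every $k\le k_l$, a block $\bm{v}_l$ with $GRS_k(\bm{a}_l,\bm{v}_l)$ $e$-Galois self-orthogonal over $\F_q$. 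In both cases I concatenate $\bm{v}=(\bm{v}_1,\dots,\bm{v}_t)$ and verify global self-orthogonality via Lemma~\ref{lem <a,v>}(3): the inner products split blockwise as
\[
\langle\bm{a}^{p^ei+j},\bm{v}^{p^e+1}\rangle_E=\sum_{l=1}^{t}\langle\bm{a}_l^{p^ei+j},\bm{v}_l^{p^e+1}\rangle_E=0,
\]
each summand vanishing by block self-orthogonality. The alternative piece $k\le\lfloor(n+p^e-1)/(p^e+1)\rfloor$ in the $\max$ of (1) follows from existing results for $p$ even with $m/s$ odd (e.g.\ \cite{RefJ (2024) Qian dcc}) applied to $n\le p^{2s}\le q$.

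For Case (3), the conditions $\gcd(e/s,m/s)=1$ and $m/s$ even force $e/s$ to be odd, whence $e\equiv s\pmod{2s}$ and therefore $x^{p^e}=x^{p^s}$ for every $x\in\F_{p^{2s}}$. Given an $[n,k]_{p^{2s}}$ Hermitian self-orthogonal (extended) GRS code with all data in $\F_{p^{2s}}\subseteq\F_q$, I would view the same generator matrix as defining an $[n,k]_q$ (extended) GRS code, which remains MDS by the standard GRS argument. For any two codewords $\bm{c}_1,\bm{c}_2$ of this lifted code the entries lie in $\F_{p^{2s}}$, so $\sum_i c_{1,i}c_{2,i}^{p^e}=\sum_i c_{1,i}c_{2,i}^{p^s}=0$ by Hermitian self-orthogonality, giving the required $e$-Galois self-orthogonality over $\F_q$. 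The main obstacles I anticipate are administrative rather than conceptual: in (1) and (2) one must keep the block Lagrange coefficients distinct from the global ones and justify carefully that the block-level Lemma~\ref{lem ax+b} conditions do combine linearly into the global Lemma~\ref{lem <a,v>}(3) condition; in (3) one must double-check the residue computation $e\bmod 2s=s$ under the stated gcd hypotheses and that the lifted code is still MDS over the larger field.
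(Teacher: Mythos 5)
Your proposal is correct and follows essentially the same route as the paper: partition the code locators into $t\leq p^s$ additive cosets of $\F_{p^s}$ in $\F_q$, obtain $e$-Galois self-orthogonality on each block by reducing to the subfield situation (the paper does this via Corollary \ref{cor ps 2} together with Lemma \ref{lem equiv}, while you invoke Lemma \ref{lem ax+b} directly through the translation action of $\sigma^e$ on each coset — the same mechanism), and then concatenate using the blockwise additivity of the inner products in Lemma \ref{lem <a,v>}; for part (3) both arguments rest on the identity $x^{p^e}=x^{p^s}$ for $x\in\F_{p^{2s}}$ forced by $e/s$ being odd. The only cosmetic difference is that for the $\lfloor\frac{n+p^e-1}{p^e+1}\rfloor$ branch of the $\max$ in (1) you cite known results, whereas the paper gives a one-line argument with $\lambda(x)=1$ and Lemma \ref{lem yang 1}.
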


\begin{proof}
Label the elements of $\F_{p^s}$ as $\{0=b_1,b_2,\dots,b_{p^s}\}$.
Choose $\zeta\in \F_{q}\setminus \F_{p^s}$.
For $1\leq i\leq p^s$,
define
$$S_i=\F_{p^s}+b_i\zeta=\{s+b_i\zeta: s\in \F_{p^s}\}.$$
It is easy to check that $S_i\cap S_j=\emptyset$ for $i\neq j$ and $|S_i|=p^s$.

(1) On one hand,
since $n_i\leq p^{s}$ and $t\leq p^{s}$, let
$\bm{a}_i=(a_{i1},a_{i2},\dots,a_{in_i})\in (S_i)^{n_i}$ for $1\leq i\leq t$.
By Corollary \ref{cor ps 2} (1) and Lemma \ref{lem equiv}, there exists a vector $\bm{v}_i=(v_{i1},v_{i2},\dots,v_{in_i})\in (\F_{q}^*)^{n_i}$
such that $GRS_{k_i}(\bm{a}_i,\bm{v}_i)$ is an $e$-Galois self-orthogonal GRS code for
$1\leq i\leq t$ and $1\leq k_i\leq \lfloor \frac{n_i}{2} \rfloor$.
Let $\bm{a}=(\bm{a}_1,\bm{a}_2,\dots,\bm{a}_t)\in \F_q^n$ and $\bm{v}=(\bm{v}_1,\bm{v}_2,\dots,\bm{v}_t)\in (\F_q^*)^n$,
it follows that
$$\langle \bm{a}^{p^ei+j}, \bm{v}^{p^e+1} \rangle_E=\sum_{i=1}^{t}\langle \bm{a}_i^{p^ei+j}, \bm{v}_i^{p^e+1} \rangle_E=0,$$
for $0\leq i,j\leq k-1\leq \min\{\lfloor \frac{n_1}{2}\rfloor,\lfloor \frac{n_2}{2}\rfloor,\dots,\lfloor \frac{n_t}{2}\rfloor \}-1$.
The result follows from
Lemma \ref{lem <a,v>}.

On the other hand,
for $p^s+1 \leq n\leq q$, let $\bm{a}=(a_1,a_2,\dots,a_n)\in \F_{q}^n$,
where $a_1, a_2,\dots, a_n$ are distinct elements in $\F_{q}$.
Since $1\leq k \leq \lfloor \frac{n+p^e-1}{p^e+1}\rfloor$,
it follows that $n-(k-1)(p^e+1)-2\geq 0$.
Let $\lambda(x)=1$ with $\deg(\lambda(x))=0\leq n-(k-1)(p^e+1)-2$.
Then we have
$\lambda(a_i)u_i=u_i\in\F_q^*=H,\ 1\leq i\leq n$.
The result follows from
Lemma \ref{lem yang 1}.

(2)
 Since $n_i\leq p^{s}$ and $t\leq p^{s}$, let
$\bm{a}_i=(a_{i1},a_{i2},\dots,a_{in_i})\in (S_i)^{n_i}$ for $1\leq i\leq t$.
By Corollary \ref{cor ps 2} (2) and Lemma \ref{lem equiv}, there exists a vector $\bm{v}_i=(v_{i1},v_{i2},\dots,v_{in_i})\in (\F_{q}^*)^{n_i}$
such that $GRS_{k_i}(\bm{a}_i,\bm{v}_i)$ is an $e$-Galois self-orthogonal GRS code for
$1\leq i\leq t$.
The rest of the proof is similar to (1), which we omit.

(3)
We only prove the case for GRS codes, which can be obtained similarly for EGRS codes.
Since $2s\mid m$, then $\F_{p^{2s}}\subseteq \F_{q}$.
If there exists an $[n,k]_{p^{2s}}$
Hermitian self-orthogonal GRS code, by Lemma \ref{lem <a,v>}, then there exist vectors $\bm{a}\in \F_{p^{2s}}^n$, where $a_1, a_2,\dots, a_n$ are distinct elements and $\bm{v}\in (\F_{p^{2s}}^*)^n$ such that $\langle \bm{a}^{p^si+j}, \bm{v}^{p^s+1} \rangle_E=0$, for all $0\leq i,j \leq k-1$.
In the following we prove that $2s\nmid e$.
If $2s\mid e$, since $2s\mid m$, then $2s\mid \gcd(e,m)=s$.
Clearly, this is a contradiction.
Then we have $2s\nmid e$,
let $\frac{e}{s}=2\tau+1$, for some $\tau \geq 1$.
Note that $\bm{a}^{p^e}=\bm{a}^{p^{2s\tau+s}}=(\bm{a}^{p^{2s\tau}})^{p^s}=\bm{a}^{p^s}$
and $\bm{v}^{p^e}=\bm{v}^{p^s}$, then we have
$$\langle \bm{a}^{p^ei+j}, \bm{v}^{p^e+1} \rangle_E=\langle \bm{a}^{p^si+j}, \bm{v}^{p^s+1} \rangle_E=0,$$
 for all $0\leq i,j \leq k-1$.
The result follows from
Lemma \ref{lem <a,v>}.
 This completes the proof.
\end{proof}

\begin{example}
In this example,
we give some $e$-Galois self-orthogonal MDS codes from Theorem \ref{th p2s 1}.
\begin{itemize}
\item[(1)]
In Theorem \ref{th p2s 1} (1), taking $(p,m)=(2,3s)$, we obtain some $[n,k]_{p^{3s}}$ $e$-Galois self-orthogonal MDS codes
 with the following parameters.
\begin{center}
\begin{tabular}{ccc}
\hline
$e$ & $n$ & $k$ \\
\hline
$\{s,2s\}$ & $rp^s$, $2\leq r\leq p^s$ &  $1\leq k\leq \frac{p^s}{2}$ \\
$\{s,2s\}$ & $(p^s+1)r+2\leq n\leq (p^s+1)(r+1)+1$, $1\leq r\leq p^s-2$ &  $1\leq k\leq r+1$ \\
\hline
\end{tabular}
\end{center}

\item[(2)]
Take $m=3s$ and $p$ odd in Theorem \ref{th p2s 1} (2).
It is easy to see that there exist Euclidean self-orthogonal GRS codes with parameters $[p^s,k]_{p^s}$ for $1\leq k\leq \lfloor \frac{p^s}{2}\rfloor$.
Then we obtain some $[n,k]_{p^{3s}}$ $e$-Galois self-orthogonal MDS codes
 with the following parameters.
 \begin{center}
\begin{tabular}{ccc}
\hline
$e$ & $n$ & $k$ \\
\hline
$\{s,2s\}$ & $rp^s$, $2\leq r\leq p^s$ &  $1\leq k\leq \lfloor \frac{p^s}{2}\rfloor$ \\
\hline
\end{tabular}
\end{center}

\item[(3)]
Take $m=4s$ and $p$ odd in Theorem \ref{th p2s 1} (3).
From the known results of Hermitian self-orthogonal (extended) GRS codes, it follows that there exist Hermitian self-orthogonal (extended) GRS codes with parameters $[n,k]_{p^{2s}}$ for
$n\in \{p^{2s},p^{2s}+1,\frac{p^{2s}+1}{2},p^{2s}-1,\frac{p^{2s}-1}{2}\}$ and $1\leq k\leq p^s-1$.
Then we obtain some $[n,k]_{p^{4s}}$ $e$-Galois self-orthogonal MDS codes
 with the following parameters.
 \begin{center}
\begin{tabular}{ccc}
\hline
$e$ & $n$ & $k$ \\
\hline
$\{s,3s\}$ & $\{p^{2s},p^{2s}+1,\frac{p^{2s}+1}{2},p^{2s}-1,\frac{p^{2s}-1}{2}\}$ &  $1\leq k\leq p^s-1$ \\
\hline
\end{tabular}
\end{center}

\end{itemize}
\end{example}

In the following, we give a specific class of $e$-Galois self-orthogonal MDS codes.
Although the method is similar to the current construction of Hermitian self-orthogonal GRS codes,
it is shown that the method is feasible for $e$-Galois self-orthogonal GRS codes when $\frac{m}{e}$ is odd and $p$ is even.
Moreover, new $e$-Galois self-orthogonal MDS codes with dimension greater than $\lfloor \frac{n+p^e-1}{p^e+1}\rfloor$ are obtained.

For integers $s_1$ and $s_2$, we write
\[[s_1,s_2]=\begin{cases}
\emptyset, & if\ s_1>s_2; \\
\{s_1,s_1+1,\dots,s_2\}, & if\  s_1\leq s_2.
\end{cases}\]

\begin{lemma}\label{lem zhenchu}
Suppose $1\leq k\leq t\leq p^e-1$.
Then for any $0\leq i,j\leq k-1$, $p^ei+j=u(p^e-1)$ if and only if
$u\in\{0\}\cup [p^e-t+1,t]$.
\end{lemma}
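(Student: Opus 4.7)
The plan rests on a single algebraic rewriting. Since $p^e = (p^e-1)+1$, I would start from the identity
\[
p^e i + j \;=\; (p^e-1)i + (i+j),
\]
which turns the equation $p^ei+j=u(p^e-1)$ into
\[
(u-i)(p^e-1) \;=\; i+j.
\]
From this form, the value of $u-i$ is essentially forced by the size of $i+j$, and the rest is bookkeeping.

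Next I would use the hypotheses $0\leq i,j\leq k-1$ and $k\leq t\leq p^e-1$ to bound
\[
0 \;\leq\; i+j \;\leq\; 2(k-1) \;\leq\; 2(p^e-2) \;<\; 2(p^e-1).
\]
Because $(u-i)(p^e-1)=i+j$ is a non-negative integer strictly less than $2(p^e-1)$, the factor $u-i$ must belong to $\{0,1\}$. This is the key pinch point and the reason the hypothesis $k\leq p^e-1$ is built into the lemma.

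Then I would split into the two resulting cases. If $u-i=0$, then $i+j=0$ forces $i=j=0$ and hence $u=0$. If $u-i=1$, then $i+j=p^e-1$ and $u=i+1$; writing $j=p^e-1-i$, the constraints $0\leq i\leq k-1$ and $0\leq j\leq k-1$ together give $p^e-k\leq i\leq k-1$, so that $u=i+1$ ranges over $[p^e-k+1,k]$, which sits inside $[p^e-t+1,t]$ by the assumption $k\leq t$. Conversely, for each candidate $u$ in $\{0\}\cup[p^e-t+1,t]$ I would exhibit an explicit witness: $(i,j)=(0,0)$ when $u=0$, and $(i,j)=(u-1,\,p^e-u)$ when $u\in[p^e-t+1,t]$, checking in each case that both coordinates lie in $[0,k-1]$ (using $u\leq t\leq p^e-1$ and $u\geq p^e-t+1$, together with $k\leq t$).

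I do not expect a serious obstacle here; the argument is a one-line rearrangement followed by a two-case analysis. The only subtle point, and the place I would be most careful, is the boundary bookkeeping: the interval $[p^e-t+1,t]$ may be empty (precisely when $2t\leq p^e$), in which case only $u=0$ is realized, and the matching of the range $[p^e-k+1,k]$ attained for the particular $k$ with the wider envelope $[p^e-t+1,t]$ allowed by the lemma has to be stated consistently with the hypothesis $k\leq t$.
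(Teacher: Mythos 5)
Your proposal is correct and takes essentially the same route as the paper: the paper pins down $i=u-1$ and $j=p^e-u$ by writing $u(p^e-1)=(u-1)p^e+(p^e-u)$ and comparing base-$p^e$ digits, whereas you divide through by $p^e-1$ and force $u-i\in\{0,1\}$ from the bound $i+j<2(p^e-1)$, but both reduce to the same two-case bookkeeping and the same contradictions from $i\leq k-1$ and $j\leq k-1$. Your closing caveat about the converse is well taken --- the quotients actually realized form $\{0\}\cup[p^e-k+1,k]$, so the stated ``if and only if'' is literally only a containment when $k<t$; the paper's own proof likewise establishes only the forward direction, which is all that the application in Theorem \ref{th 11 3333} requires.
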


\begin{theorem}\label{th 11 3333}
Suppose $\frac{m}{e}$ is odd and $p$ is even.
Let $n=r(p^e-1)+1$, where $1\leq r\leq p^e+1$, then for $1\leq k \leq \lfloor \frac{p^e+r-1}{2}\rfloor$, there exists an $[n,k]_q$ $e$-Galois self-orthogonal MDS code.
\end{theorem}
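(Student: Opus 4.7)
The plan is to construct an explicit $e$-Galois self-orthogonal GRS code $GRS_k(\bm{a}, \bm{v})$ of length $n = r(p^e - 1) + 1$ at the maximal dimension $k = \lfloor (p^e + r - 1)/2 \rfloor$; the smaller $k$'s then follow immediately from Corollary \ref{cor kkkk}. Since $\frac{m}{e}$ is odd and $p$ is even, Lemma \ref{lem HHH} gives $H = \F_q^*$, so the $(p^e + 1)$-th power map on $\F_q^*$ is a bijection and the constraint $\bm{v}^{p^e + 1} \in H^n$ of Lemma \ref{lem <a,v>} is automatic. Let $\alpha \in \F_q$ be a primitive $(p^e - 1)$-th root of unity (exists because $e \mid m$), $w$ a generator of $\F_q^*$, and $\omega := w^{p^e - 1}$, which has order $(q - 1)/(p^e - 1) \geq p^e + 1$ whenever $m > e$. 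I would take $\beta_s := w^{s - 1}$ for $s = 1, \ldots, r$ as $r$ distinct coset representatives of $U := \langle \alpha \rangle$ in $\F_q^*$ and set
\[
\bm{a} := (\beta_1, \beta_1 \alpha, \ldots, \beta_1 \alpha^{p^e - 2}, \beta_2, \ldots, \beta_r \alpha^{p^e - 2}, 0) \in \F_q^n,
\]
and look for $\bm{v} \in (\F_q^*)^n$ in the coset-block form $v_\ell^{p^e + 1} = \gamma_{s(\ell)}$ whenever $a_\ell \in \beta_{s(\ell)} U$, together with $v_n^{p^e + 1} = w_0$, for scalars $\gamma_s, w_0 \in \F_q^*$ to be determined.

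Next I would apply Lemma \ref{lem <a,v>}, split the inner products by coset, and use the geometric-series identity
\[
\sum_{t = 0}^{p^e - 2} \alpha^{t N} = \begin{cases} -1 & \text{if } (p^e - 1) \mid N, \\ 0 & \text{otherwise}, \end{cases}
\]
together with $p^e \equiv 1 \pmod{p^e - 1}$, and then invoke Lemma \ref{lem zhenchu} with $t = k \leq p^e - 1$ to identify the surviving pairs $(i, j) \in [0, k-1]^2$ as exactly those with $p^e i + j = u(p^e - 1)$, $u \in \{0\} \cup [p^e - k + 1, k]$. Writing $\eta_s := \beta_s^{p^e - 1} = \omega^{s - 1}$ and exploiting $-1 = 1$ in characteristic $2$, the self-orthogonality system reduces to
\[
w_0 = \sum_{s = 1}^{r} \gamma_s \qquad \text{and} \qquad \sum_{s = 1}^{r} \gamma_s \omega^{(s - 1) u} = 0 \quad \text{for every } u \in [p^e - k + 1, k].
\]
The hypothesis $k \leq \lfloor (p^e + r - 1)/2 \rfloor$ is exactly the inequality $2k - p^e \leq r - 1$, so the Vandermonde-type matrix $M = (\omega^{(s - 1) u})_{u, s}$ has rank $2k - p^e$ and $\dim \ker M \geq 1$.

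The main task, and the principal obstacle, is to locate $\gamma \in \ker M$ with $\gamma_s \neq 0$ for every $s$ and $\sum_s \gamma_s \neq 0$; once this is achieved, setting $w_0 := \sum_s \gamma_s \in \F_q^*$ and solving $v_\ell^{p^e + 1} = \gamma_{s(\ell)}, v_n^{p^e + 1} = w_0$ (which is possible since the $(p^e+1)$-th power map is bijective on $\F_q^*$) finishes the construction. I plan to parametrize $\ker M$ by the polynomial $R(x) := \sum_{s = 1}^{r} \gamma_s x^{s - 1}$: the Vandermonde conditions become $R(\omega^u) = 0$ for $u \in [p^e - k + 1, k]$, whence
\[
R(x) = T(x) \prod_{u = p^e - k + 1}^{k}(x - \omega^u)
\]
with $T \in \F_q[x]$ a free polynomial of degree $\leq r - 1 - (2k - p^e)$. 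The $r + 1$ conditions "$[x^{s - 1}] R \neq 0$" $(s = 1, \ldots, r)$ and "$R(1) \neq 0$" each cut out a proper affine hyperplane in the coefficient space of $T$, each being a non-trivial linear form because $\prod_u(x - \omega^u)$ has both its leading coefficient $1$ and its constant term $\prod_u \omega^u$ nonzero (the latter using that $\omega^u \neq 1$ since $\omega$ has order exceeding $k$); a good $T$ therefore exists whenever $|\F_q| > r + 1$, which holds because $q = p^m \geq p^{2e} > p^e + 1 \geq r + 1$ as soon as $m \geq 2e$, while the degenerate case $m = e$ forces $r = 1$ and is settled by direct inspection. Corollary \ref{cor kkkk} then extends the construction to every dimension $1 \leq k \leq \lfloor (p^e + r - 1)/2 \rfloor$, completing the proof.
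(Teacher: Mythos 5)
Your construction is essentially the one in the paper's appendix: the same code locators (the point $0$ together with $r$ multiplicative cosets of the subgroup of order $p^e-1$), the same block-constant ansatz for $\bm{v}$, the same geometric-series collapse of the inner products, and the same use of Lemma \ref{lem zhenchu} to reduce $e$-Galois self-orthogonality to a linear system of size $(2k-p^e)+1$ in the $r+1$ unknowns $\gamma_1,\dots,\gamma_r,w_0$. The two arguments diverge only at the final step, where an all-nonzero solution of that system must be produced: the paper invokes a minors lemma from \cite{RefJ (2023) Wan 2} (an $r\times(r+1)$ matrix all of whose maximal minors are nonzero has a kernel vector with no zero coordinate), whereas you parametrize the kernel by $R=T\cdot P$ with $P(x)=\prod_{u}(x-\omega^u)$ and avoid $r+1$ hyperplanes in the coefficient space of $T$.

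That final step is where your write-up has a genuine, though repairable, gap. The condition $[x^{s-1}]R\neq 0$ is the linear form $\sum_j T_jP_{s-1-j}$ in the coefficients of $T$, and it is non-trivial only if some $P_i$ with $i$ in the window $[\max(0,\,s-1-\deg_{\max}T),\,\min(d,\,s-1)]$, $d=2k-p^e$, is nonzero. Your justification --- that $P$ has nonzero leading and constant coefficients --- only covers those $s$ for which the window reaches the index $0$ or the index $d$; when $2d>r$ (which does occur: at $d=r-1$ the polynomial $T$ is a constant and one needs \emph{every} coefficient of $P$ to be nonzero) there are values of $s$ whose window lies strictly in the interior of $[0,d]$, and nothing you have said rules out a run of vanishing interior coefficients. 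The fix: since the roots $\omega^{p^e-k+1},\dots,\omega^{k}$ form a geometric progression, $[x^{d-i}]P$ equals a power of $\omega$ times the Gaussian binomial coefficient $\binom{d}{i}_{\omega}$, which is nonzero because $\mathrm{ord}(\omega)=\frac{q-1}{p^e-1}>d$; hence all coefficients of $P$ are nonzero and every window contains a nonzero entry. With that observation supplied, the hyperplane count ($q>r+1$) goes through and your proof is complete. Two cosmetic points: the case $m=e$ that you set aside cannot occur, since $\frac{m}{e}$ odd together with $1\le e\le m-1$ forces $m\ge 3e$; and $r\le p^e+1$ gives $r+1\le p^e+2$ rather than $\le p^e+1$, though $q\ge p^{3e}>p^e+2$ still holds.
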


 The proofs of Lemma \ref{lem zhenchu} and Theorem \ref{th 11 3333} are similar to the construction of Hermitian self-orthogonal GRS codes, which we omit here, and the details are given in the Appendix.

\begin{example}
In this example,
we give some Galois self-orthogonal MDS codes from Theorem \ref{th 11 3333}.

\begin{itemize}
\item
In Theorem \ref{th 11 3333}, taking $m=3e$, we obtain some $[n,k]_{p^{3e}}$ $e$-Galois self-orthogonal MDS codes
 with the following parameters.
\begin{center}
\begin{tabular}{ccc}
\hline
$e$ & $n$ & $k$ \\
\hline
$\{e,2e\}$ & $r(p^e-1)+1$, $1\leq r\leq p^e+1$ &  $1\leq k\leq \lfloor \frac{p^e+r-1}{2}\rfloor$ \\
\hline
\end{tabular}
\end{center}

\end{itemize}
\end{example}

\subsection{$e$-Galois self-orthogonal MDS code of length $n=q+1$}

In this subsection, we construct $q$-ary $e$-Galois self-orthogonal MDS codes of length $q+1$.
The dimension of our $e$-Galois self-orthogonal EGRS code is close to or equal to the value $\lceil\frac{q+p^e+1}{p^e+1}\rceil-1$.
 For the need of the proof, we introduce the following lemma.

\begin{lemma}\label{lem m(x)}
(\cite[Lemma 12]{RefJ (2018) Fang two})
For any integer $l\geq 2$ and finite field $\F_q$, there exists a monic polynomial
$m(x)\in \F_{q}[x]$ of degree $l$ such that $m(a)\neq 0$ for all $a\in \F_q$.
\end{lemma}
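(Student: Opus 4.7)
The plan is to invoke the classical existence of irreducible polynomials of every degree over a finite field, which immediately produces the required $m(x)$. Specifically, for every prime power $q$ and every positive integer $l$, there exists a monic irreducible polynomial of degree $l$ in $\F_q[x]$; this is standard (it follows either from Gauss's formula $\frac{1}{l}\sum_{d\mid l}\mu(d)q^{l/d}>0$, or from the fact that every finite extension of a finite field is simple). I would take $m(x)$ to be any such monic irreducible polynomial of degree $l$.

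Once $m(x)$ is in hand, the verification of the conclusion is immediate: if $m(a)=0$ for some $a\in\F_q$, then $(x-a)$ would be a linear factor of $m(x)$ in $\F_q[x]$, contradicting the irreducibility of $m(x)$ since $\deg m=l\geq 2$. Hence $m(a)\neq 0$ for every $a\in \F_q$.

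An equivalent approach, which may be preferable because it is more concrete, is to exhibit the polynomial explicitly: pick a primitive element $\alpha$ of $\F_{q^l}^{*}$, so that $\F_q(\alpha)=\F_{q^l}$, and let $m(x)\in\F_q[x]$ be the minimal polynomial of $\alpha$ over $\F_q$. Then $m(x)$ is monic of degree exactly $[\F_q(\alpha):\F_q]=l$, and since $\alpha\notin\F_q$ (because $l\geq 2$), no Galois conjugate of $\alpha$ lies in $\F_q$, so $m(a)\neq 0$ for all $a\in \F_q$.

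There is no real obstacle here; the only technical input is the existence of irreducible polynomials of each degree (or equivalently the primitivity of $\F_{q^l}^*$), both of which are basic facts about finite fields. The proof is essentially a one-line citation of one of these results followed by the observation above.
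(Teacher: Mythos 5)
Your proof is correct. The paper does not prove this lemma itself --- it only cites it from an external reference --- so there is no internal argument to compare against; your route (take a monic irreducible polynomial of degree $l\geq 2$, which exists for every degree over $\F_q$ and can have no root in $\F_q$ since a root would yield a linear factor) is a complete and standard justification of the cited fact, and the minimal-polynomial variant you give is an equally valid instantiation of the same idea.
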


\begin{theorem}\label{th 2 q+1}
Let $n=q+1$, then the following statements hold.
\begin{itemize}
\item[(1)]
If  $\frac{m}{s}$ is odd and $p$ is even, then for $1\leq k \leq \lfloor \frac{q+p^e-2}{p^e+1}\rfloor$, there exists an $[n,k]_q$ $e$-Galois self-orthogonal MDS code.
\item[(2)]
If $\frac{m}{s}$ and $p$ are odd,
then for $1\leq k\leq \lfloor \frac{q+p^e-4}{p^e+1}\rfloor$, there exists an $[n,k]_q$ $e$-Galois self-orthogonal MDS code.
\item[(3)]
If $\frac{m}{s}$ is even,
then for $1\leq k\leq \lfloor \frac{q+p^e-2(p^s+1)}{p^e+1}\rfloor$ or $k=\frac{q-1}{p^e+1}+1$ $(s=e)$,
there exists an $[n,k]_q$ $e$-Galois self-orthogonal MDS code.
\end{itemize}
\end{theorem}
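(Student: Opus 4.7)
The plan is to instantiate Lemma \ref{lem lam(x) 2} (the sufficient direction, which is Lemma \ref{lem yang 2} and does not require $k\geq p^e$) with $\bm{a}$ chosen to be an enumeration of all elements of $\F_q$, so that the EGRS code $GRS_k(\bm{a},\bm{v},\infty)$ has length exactly $q+1$. With this choice, the classical identity $\prod_{c\in\F_q^*}c=-1$ gives $\prod_{j\neq i}(a_i-a_j)=-1$, hence $u_i=-1$ for every $i$. The sufficient condition then collapses to the task: exhibit a monic polynomial $\lambda(x)\in\F_q[x]$ of degree exactly $\ell:=q-1-(k-1)(p^e+1)$ such that $\lambda(a)\in H$ for every $a\in\F_q$. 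Once such a $\lambda$ is produced, choosing $v_i\in\F_q^*$ with $v_i^{p^e+1}=\lambda(a_i)$ (possible since $H=\langle w^{p^e+1}\rangle$) yields the desired $[q+1,k]_q$ $e$-Galois self-orthogonal MDS code.

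The three parts of the theorem correspond to the three descriptions of $H$ given by Lemma \ref{lem HHH}. In case (1), $H=\F_q^*$ and it suffices to find a monic $\lambda$ of degree $\ell$ with no $\F_q$-roots. The hypothesis $k\leq\lfloor(q+p^e-2)/(p^e+1)\rfloor$ is tuned so that $\ell\geq 2$, so Lemma \ref{lem m(x)} furnishes such a $\lambda$ directly. In case (2), $H=QR$, and the plan is to take $\lambda(x)=g(x)^2$ where $g$ is a monic degree-$\ell/2$ polynomial with no $\F_q$-roots; then $\lambda(a)=g(a)^2\in QR$. Here $p$ is odd so $p^e+1$ is even and $q-1$ is even, forcing $\ell$ to be even, while the bound $k\leq\lfloor(q+p^e-4)/(p^e+1)\rfloor$ forces $\ell/2\geq 2$, so Lemma \ref{lem m(x)} applies to $g$.

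Case (3) is the most delicate. Here $H=\langle w^{p^s+1}\rangle$ and I would set $\lambda(x)=g(x)^{p^s+1}$ so that $\lambda(a)=g(a)^{p^s+1}\in H$. The critical structural step is the divisibility $(p^s+1)\mid\ell$. Writing $m=2vs$ (possible since $\frac{m}{s}$ is even) and $e=us$ (possible since $s\mid e$), the condition $s=\gcd(e,m)=s\gcd(u,2v)$ forces $\gcd(u,2v)=1$, hence $u$ must be odd. Odd $u$ yields the factorisation $p^e+1=(p^s)^u+1=(p^s+1)\sum_{i=0}^{u-1}(-p^s)^i$, so $(p^s+1)\mid(p^e+1)$; combined with $(p^s+1)\mid(p^{2s}-1)\mid(q-1)$ this gives $(p^s+1)\mid\ell$. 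The bound $k\leq\lfloor(q+p^e-2(p^s+1))/(p^e+1)\rfloor$ then ensures $\ell/(p^s+1)\geq 2$, activating Lemma \ref{lem m(x)}. The exceptional value $k=(q-1)/(p^e+1)+1$ only makes sense when $s=e$, since then $\frac{m}{e}=\frac{m}{s}$ is even so that $(p^e+1)\mid(q-1)$; for this $k$ one has $\ell=0$, and the trivial choice $\lambda(x)=1\in H$ works.

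The main obstacle is the divisibility bookkeeping in case (3)---in particular, extracting $(p^s+1)\mid(p^e+1)$ from the hypothesis $s=\gcd(e,m)$ with $\frac{m}{s}$ even, and cleanly separating the top-dimension exceptional case from the bulk. Elsewhere the argument is essentially pushing Lemma \ref{lem m(x)} through the appropriate power map ($\mathrm{id}$, squaring, or $(p^s+1)$-th power), with the three upper bounds on $k$ calibrated precisely so that the auxiliary polynomial $g$ in each case has degree at least $2$.
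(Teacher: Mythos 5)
Your proposal is correct and follows essentially the same route as the paper's proof: take $\bm{a}$ to enumerate all of $\F_q$ so that $u_i=-1$, then apply Lemma \ref{lem yang 2} with $\lambda=m$, $m^2$, or $m^{p^s+1}$ (and $\lambda=1$ in the exceptional case), where $m$ comes from Lemma \ref{lem m(x)}. The only cosmetic difference is that you derive the divisibilities $2\mid\ell$ and $(p^s+1)\mid\ell$ by direct factorisation arguments, whereas the paper reads them off from $\gcd(p^e+1,q-1)$ via Lemma \ref{lem gcd()}; both are valid.
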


\begin{proof}
Let $\bm{a}=(a_1,a_2,\dots,a_{q})\in \F_{q}^q$,
where $a_1, a_2,\dots, a_q$ are distinct elements in $\F_{q}$.
Then we have $u_i=\prod_{a\in \F_{q}^*}a=-1$ for $1\leq i\leq q$.
Denote $l=q-(k-1)(p^e+1)-1$.

(1)
Since $1\leq k \leq \lfloor \frac{q+p^e-2}{p^e+1}\rfloor$, we have $l=q-(k-1)(p^e+1)-1\geq 2$.
By Lemma \ref{lem m(x)},
there exists a monic polynomial
$m(x)\in \F_{q}[x]$ of degree $l$ such that $m(a)\neq 0$ for all $a\in \F_q$.
Let $\lambda(x)=m(x)$, then
$\deg(\lambda(x))=l$ and
$\lambda(a_i)\in \F_q^*$
for $1\leq i\leq q$.
It implies that
$-\lambda(a_i)u_i=\lambda(a_i)\in\F_q^*$ for $1\leq i\leq q$.
Then by Lemma \ref{lem yang 2},
there exists a vector $\bm{v}\in (\F_{q}^*)^n$
such that $GRS_{k}(\bm{a},\bm{v},\infty)$ is $e$-Galois self-orthogonal.

(2)
Since $1\leq k \leq \lfloor \frac{q+p^e-4}{p^e+1}\rfloor$,
we have $l=q-(k-1)(p^e+1)-1\geq 4$.
Note that $\gcd(p^e+1,q-1)=2$,
it follows that $2\mid l$ and $\frac{l}{2}\geq 2$.
By Lemma \ref{lem m(x)},
there exists a monic polynomial
$m(x)\in \F_{q}[x]$ of degree $\frac{l}{2}$ such that $m(a)\neq 0$ for all $a\in \F_q$.
Let $\lambda(x)=m(x)^2$, then
$\deg(\lambda(x))=l$ and
$\lambda(a_i)\in QR$
for $1\leq i\leq q$.
It implies that
$-\lambda(a_i)u_i=m(a_i)^2\in QR$ for $1\leq i\leq q$.
The desired result follows from Lemma \ref{lem yang 2}.

(3)
On one hand,
if $1\leq k \leq \lfloor \frac{q+p^e-2(p^s+1)}{p^e+1}\rfloor$,
we have $l=q-(k-1)(p^e+1)-1\geq 2(p^s+1)$.
Note that $\gcd(p^e+1,q-1)=p^s+1$,
it follows that $(p^s+1)\mid l$ and $\frac{l}{p^s+1}\geq 2$.
By Lemma \ref{lem m(x)},
there exists a monic polynomial
$m(x)\in \F_{q}[x]$ of degree $\frac{l}{p^s+1}$ such that $m(a)\neq 0$ for all $a\in \F_q$.
Let $\lambda(x)=m(x)^{p^s+1}$, then
$\deg(\lambda(x))=l$ and
$\lambda(a_i)\in \langle w^{p^s+1}\rangle$
for $1\leq i\leq q$.
It implies that
$-\lambda(a_i)u_i=m(a_i)^{p^s+1}\in \langle w^{p^s+1}\rangle$ for $1\leq i\leq n$.
The desired result follows from Lemma \ref{lem yang 2}.

On the other hand,
since $s=e$,
we have $(p^e+1)\mid (q-1)$.
If $k=\frac{q-1}{p^e+1}+1$,
 it follows that $l=0$.
Let $\lambda(x)=x^0=1$,
then
$-\lambda(a_i)u_i=1\in \langle w^{p^s+1}\rangle$ for $1\leq i\leq n$.
The desired result follows from Lemma \ref{lem yang 2}.
 This completes the proof.
\end{proof}

\begin{example}
In this example,
we give some $e$-Galois self-orthogonal MDS codes from Theorem \ref{th 2 q+1}.
\begin{itemize}
\item[(1)]
In Theorem \ref{th 2 q+1} (1), taking $(p,m,s)=(2,7,1)$, we obtain some $[n,k]_{2^{7}}$ $e$-Galois self-orthogonal MDS codes
 with the following parameters.
\begin{center}
\begin{tabular}{ccc| c c c| c c c}
\hline
$e$ & $n$ & $k$ & $e$ & $n$ & $k$ & $e$ & $n$ & $k$\\
\hline
$\{1,6\}$ & $129$ &  $1\leq k\leq 42$ &$\{2,5\}$ & $129$ & $1\leq k\leq 26$&$\{3,4\}$ & $129$ & $1\leq k\leq 14$\\
\hline
\end{tabular}
\end{center}

\item[(2)]
In Theorem \ref{th 2 q+1} (2), taking $(p,m,s)=(3,5,1)$, we obtain some $[n,k]_{3^{5}}$ $e$-Galois self-orthogonal MDS codes
 with the following parameters.
\begin{center}
\begin{tabular}{ccc| c c c}
\hline
$e$ & $n$ & $k$ & $e$ & $n$ & $k$ \\
\hline
$\{1,4\}$ & $244$ &  $1\leq k\leq 60$ &$\{2,3\}$ & $244$ & $1\leq k\leq 24$\\
\hline
\end{tabular}
\end{center}
\item[(3)]
In Theorem \ref{th 2 q+1} (3), taking $(p,m,s)=(3,8,2)$ or $(p,m,s)=(3,8,1)$, we obtain some $[n,k]_{3^{8}}$ $e$-Galois self-orthogonal MDS codes
 with the following parameters.
\begin{center}
\begin{tabular}{ccc| c c c}
\hline
$e$ & $n$ & $k$ & $e$ & $n$ & $k$ \\
\hline
$\{1,7\}$ & $6562$ & $1\leq k\leq 1639$ or $k=1641$ &$\{2,6\}$ & $6562$ & $1\leq k\leq 655$ or $k=657$ \\
$3$ & $6562$ & $1\leq k\leq 235$ &$5$ & $6562$ & $1\leq k\leq 235$ \\
\hline
\end{tabular}
\end{center}
\end{itemize}
\end{example}

\section{Comparison of results and new MDS codes with $e$-Galois hulls}\label{sec5}

In this section, we compare the Galois self-orthogonal MDS codes in this paper with known Galois self-orthogonal MDS codes. Moreover, using the propagation rules, we obtain some new MDS codes with Galois hulls.
As an application, many quantum codes can be obtained from these MDS codes with Galois
hulls.

\subsection{Comparison of results}

In Table \ref{tab:1}, we list the parameters of the known Galois self-orthogonal MDS codes.
 From Table \ref{tab:1}, we can find that most of the known $e$-Galois self-orthogonal MDS codes have dimensions less than $\lfloor \frac{n+p^e-1}{p^e+1}\rfloor$ or $\lfloor \frac{n+p^{m-e}-1}{p^{m-e}+1}\rfloor$.
By Remark \ref{rem 1},
we only consider the case $2e\leq m$ without loss of generality.
Note that if $2e\leq m$, we have $\lfloor \frac{n+p^{m-e}-1}{p^{m-e}+1}\rfloor\leq \lfloor \frac{n+p^e-1}{p^e+1}\rfloor$. Thus we need only consider a bound $\lfloor \frac{n+p^e-1}{p^e+1}\rfloor$.

In Table \ref{tab:2}, we list our constructions of Galois self-orthogonal MDS codes.
It can be found that the dimension of the Galois self-orthogonal MDS code we constructed can be greater than $\lfloor \frac{n+p^e-1}{p^e+1}\rfloor$, which is different from most of the codes in Table \ref{tab:1}.
Moreover, the Galois self-orthogonal MDS codes we constructed have more flexible lengths compared to the results of Islam et al \cite{RefJ (2023) Islam}.
In addition, we also consider Galois self-orthogonal MDS codes of length $q+1$, which are not available in Table \ref{tab:1}.
Therefore, the Galois self-orthogonal MDS code we construed has new parameters.

\subsection{MDS codes with Galois hulls of arbitrary dimensions}

From the following propagation rules, we can obtain many MDS codes with Galois hulls of arbitrary dimensions from a known $e$-Galois self-orthogonal (extended) GRS code.

\begin{lemma}\label{lem rule}(propagation rule)
Let $q>4$ be a prime power. If there exists an $[n,k]_{q}$ $e$-Galois self-orthogonal (extended) GRS code, then the following statements hold.
\begin{itemize}
\item[(1)]
(\cite[Corollary 2.1]{RefJ (2023) Chen}) There exists an $[n,k]_{q}$ MDS code with $l$-dimensional $e$-Galois hull for each $0\leq l\leq k$;
\item[(2)]
(Combining \cite[Corollary 2.13]{RefJ (2023) Yang FFA} and \cite[Corollary 2.1]{RefJ (2023) Chen})
For $1\leq i\leq k$, there exists an $[n-i,k-i]_{q}$ MDS code with $l$-dimensional $e$-Galois hull for each $0\leq l\leq k-i$;
\item[(3)]
(Combining \cite[Corollary 2.14]{RefJ (2023) Yang FFA} and \cite[Corollary 1]{RefJ (2023) Wan 3})
For $1\leq i\leq k$, there exists an $[n-i,k]_{q}$ MDS code with $l$-dimensional $e$-Galois hull for each $0\leq l\leq k-i$;
\item[(4)]
(\cite[Corollary 3]{RefJ (2023) Wan 3})
For $1\leq i\leq \min\{k,q+1-n\}$, there exists an $[n+i,k]_{q}$ MDS code with $l$-dimensional $e$-Galois hull for each $0\leq l\leq k-i$;
\item[(5)]
(\cite[Corollary 5]{RefJ (2023) Wan 3})
If $n<q+1$, for $1\leq i\leq k$, there exists an $[n,k+i]_{q}$ MDS code with $l$-dimensional $e$-Galois hull for each $0\leq l\leq k-i$;
\item[(6)]
(\cite[Corollary 9]{RefJ (2023) Wan 3})
If $n<q+1$, for $1\leq i\leq \min\{k,q+1-n\}$, there exists an $[n+i,k+i]_{q}$ MDS code with $l$-dimensional $e$-Galois hull for each $0\leq l\leq k-i$;
\item[(7)]
(Combining \cite[Corollary 2.1]{RefJ (2023) Chen} and
Corollary \ref{cor kkkk})
If $n< q+1$, for $1\leq i\leq k$, there exists an $[n,k-i]_q$ MDS code with $l$-dimensional $e$-Galois hull for each $0\leq l\leq k-i$.
\end{itemize}
\end{lemma}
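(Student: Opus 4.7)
The plan is to prove each of the seven items by invoking the cited results and, where two references are listed, chaining them together. The conceptual engine throughout is Chen's hull-dimension theorem (Corollary 2.1 in \cite{RefJ (2023) Chen}): once one has an $e$-Galois self-orthogonal $[n',k']_q$ (extended) GRS code, one can realize every intermediate $e$-Galois hull dimension $0 \le l \le k'$ at the same length and dimension. So the strategy for each item reduces to (a) producing an auxiliary $e$-Galois self-orthogonal (extended) GRS code with the target length/dimension, then (b) applying Chen's theorem to fill in every hull dimension.

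First, I would dispatch item (1) directly by Chen's theorem, since the given $[n,k]_q$ code is itself $e$-Galois self-orthogonal with full hull. For items (2) and (3), I would invoke Corollaries 2.13 and 2.14 of \cite{RefJ (2023) Yang FFA} (respectively Corollary 1 of \cite{RefJ (2023) Wan 3}) to turn the given $e$-Galois self-orthogonal $[n,k]_q$ (extended) GRS code into an $e$-Galois self-orthogonal $[n-i,k-i]_q$ or $[n-i,k]_q$ (extended) GRS code, and then apply Chen's theorem. Items (4), (5), (6) follow verbatim from the length-extending, dimension-extending, and simultaneous length-and-dimension extending propagation rules in \cite{RefJ (2023) Wan 3} (Corollaries 3, 5, 9), under the stated bounds $i \leq \min\{k,q+1-n\}$ or $n < q+1$ which guarantee that the extended construction remains a valid MDS GRS/EGRS code.

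For item (7), the idea is to exploit Corollary \ref{cor kkkk}: since the original $[n,k]_q$ $e$-Galois self-orthogonal (extended) GRS code has length $n \le q$, every subcode of the form $GRS_{k-i}(\bm a, \bm v)$ (or its EGRS analogue) with $1 \le i \le k$ is again an $e$-Galois self-orthogonal $[n,k-i]_q$ (extended) GRS code; then Chen's theorem supplies every hull dimension $0 \le l \le k-i$ at these parameters.

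The main obstacle is not mathematical depth but careful bookkeeping of hypotheses: at each step I must check that the intermediate code still satisfies the length/field/dimension constraints imposed by the cited propagation rule (for example, $q > 4$, $n < q+1$, $i \le q+1-n$, and that an EGRS code remains an EGRS code after shortening), and that the self-orthogonality is preserved through the intermediate operation before Chen's theorem is applied. Once these conditions are verified in each case, the proof is a mechanical assembly of the referenced results.
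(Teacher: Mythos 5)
Your proposal matches the paper exactly: the paper gives no independent proof of this lemma, treating each item as an immediate consequence of the cited external results (and, for item (7), of Corollary \ref{cor kkkk} combined with Chen's hull-dimension theorem), which is precisely the citation-chaining you describe. The bookkeeping caveats you raise are appropriate but do not constitute a departure from the paper's route.
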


 Note that the dimension of the Galois self-orthogonal MDS codes we obtained can be larger than $\lfloor \frac{n+p^e-1}{p^e+1}\rfloor$, then the dimension of the MDS codes with Galois hulls of arbitrary dimensions obtained by propagation rules can also be larger than $\lfloor \frac{n+p^e-1}{p^e+1}\rfloor$.
 Therefore,
 from the Galois self-orthogonal MDS codes constructed in this paper and Lemma \ref{lem rule}, we can directly obtain many new MDS codes with Galois hulls of arbitrary dimensions.

\subsection{Application to quantum codes}

We always use the notation $[[n,k,d]]_q$ to denote a $q$-ary QECC of length $n$, dimension $q^k$ and minimum distance $d$. It has the abilities to detect up to $d-1$ quantum errors and correct up to $\lfloor\frac{d-1}{2}\rfloor$ quantum errors. A QECC must satisfy the quantum Singleton bound (i.e., $2d\leq n-k+2$) (see \cite{RefJ (2001) introduct 3}). In particular,
A QECC that satisfies $2d=n-k+2$ is called an MDS QECC.

Furthermore, we denote by $[[n,k,d;c]]_q$ a $q$-ary EAQECC
which encodes $k$ information qubits into
$n$ channel qubits with the help of $c$ pairs of maximally entangled states.
When $c=0$, the EAQECCs are just the standard quantum stabilizer codes.
Similarly, a EAQECC with parameters $[[n,k,d;c]]_q$ is required to satisfy
the following EA-quantum Singleton bound:

\begin{lemma}(EA-quantum Singleton bound)(\cite[Corollary 9]{RefJ (2022) Grassl bound})
For an any EAQECC with parameters $[[n,k,d;c]]_q$,
then we have
\begin{align}
&k\leq c+\max\{0,n-2d+2\};\label{eqqqq 1}\\
&k\leq n-d+1;\label{eqqqq 2}\\
&k\leq \frac{(n-d+1)(c+2d-2-n)}{3d-3-n},\ if\ 2d\geq n+2.\label{eqqqq 3}
\end{align}
An EAQECC that satisfies Eq. (\ref{eqqqq 1}) for $2d\leq n+2$ or Eq. (\ref{eqqqq 3}) for $2d\geq n+2$ with equality is called an MDS EAQECC.
\end{lemma}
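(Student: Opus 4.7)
The plan is to prove the three inequalities separately, since they apply in different parameter regimes and use genuinely different combinatorial techniques. The starting point for all three is the standard Brun--Devetak--Hsieh correspondence: an $[[n,k,d;c]]_q$ EAQECC is equivalent to a classical additive code $\mathcal{D}$ of length $n+c$ over $\F_{q^2}$ of size $q^{n+c-k}$ which is trace-symplectic self-orthogonal on the first $n$ coordinates, with minimum symplectic weight on those $n$ coordinates equal to $d$. This reduction translates the quantum bounds into classical combinatorial constraints on $\mathcal{D}$ and its symplectic dual, and it is the lens through which I would attack each of (\ref{eqqqq 1})--(\ref{eqqqq 3}).

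For inequality (\ref{eqqqq 2}), I would use a puncturing argument on $\mathcal{D}$: deleting any $d-1$ of the first $n$ coordinates of a minimum-symplectic-weight codeword cannot produce the zero vector, so a dimension count on the punctured code yields $k\leq n-d+1$ at once. For inequality (\ref{eqqqq 1}), I would apply the classical Singleton bound to $\mathcal{D}$ viewed as a code of length $n+c$ with minimum distance at least $d$. Rearranging $d\leq (n+c)-(n+c-k)+1$ gives $k\leq c+n-2d+2$ in the nontrivial regime, and the $\max\{0,\cdot\}$ handles the degenerate case $n<2d-2$ where the classical Singleton inequality becomes vacuous and must be replaced by $k\leq c$.

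The main obstacle is inequality (\ref{eqqqq 3}), the refined bound valid when $2d\geq n+2$. Here the direct Singleton-type argument is too weak, and I would instead invoke the quantum linear programming method: set up the MacWilliams identities relating the symplectic weight enumerators of $\mathcal{D}$ and $\mathcal{D}^{\perp_s}$, impose the shadow inequalities on the enumerator coefficients, and adjoin the natural non-negativity and normalization constraints. In the regime $2d\geq n+2$ many of the low-weight coefficients are forced to vanish, so the resulting LP becomes tractable. Extracting the dual inequality and simplifying should produce the closed form $k\leq \frac{(n-d+1)(c+2d-2-n)}{3d-3-n}$. I expect the bulk of the effort to lie precisely in the algebraic manipulation needed to reduce the LP dual to this clean rational expression; the MDS characterization is then read off by recording which LP constraints are active at the extremal solution.
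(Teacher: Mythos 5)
This statement is not proved in the paper at all: it is quoted verbatim as \cite[Corollary 9]{RefJ (2022) Grassl bound}, whose proof is entropic (von Neumann entropy, subadditivity and decoupling arguments applied to the encoded state), valid for arbitrary, not necessarily stabilizer, EAQECCs. Your proposal therefore cannot be compared with a proof in the paper, but it can be assessed on its own terms, and it has genuine gaps. First, reducing everything to a trace-symplectic self-orthogonal additive code via the Brun--Devetak--Hsieh correspondence restricts you from the outset to stabilizer EAQECCs, so even if every subsequent step worked you would have proved a strictly weaker statement than the lemma as quoted. Second, your derivation of (\ref{eqqqq 1}) does not work as written: applying the classical Singleton bound once to a code of length $n+c$ and size $q^{n+c-k}$ gives an inequality of the form $d\leq k+1$ (or its dual analogue), and no rearrangement of a single such application produces the coefficient $2$ in $2d-2$. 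That factor of two is exactly the quantum content of the bound --- it comes from the existence of two disjoint correctable erasure patterns (no-cloning), or equivalently from two uses of subadditivity in the entropic proof --- and a purely classical Singleton count on $\mathcal{D}$ cannot see it.

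Third, for (\ref{eqqqq 3}) you defer the entire argument to ``the LP should produce the closed form.'' This is the substantive part of the result: the rational bound $\frac{(n-d+1)(c+2d-2-n)}{3d-3-n}$ in the regime $2d\geq n+2$ is precisely what distinguishes Corollary 9 of the cited work from the older EA-Singleton bound, and it is obtained there by a careful chaining of entropy inequalities across a three-part split of the code block, not by a weight-enumerator LP; it is not known (and you give no evidence) that the shadow-inequality LP dual simplifies to this expression. As it stands the proposal establishes at most (\ref{eqqqq 2}) for stabilizer codes, and the remaining two inequalities --- including the one whose equality case defines MDS EAQECCs in the rest of the paper --- are not proved.
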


The following lemma shows that from the known MDS codes with Galois hulls we can obtain two classes of EAQECCs.

\begin{lemma}\label{lem EAQECC}
If $\C$ is an $[n,k]_q$ MDS code, then
\begin{itemize}
\item[(1)]
(\cite[Corollary IV.1]{RefJ (2021) cao}) there exists an
$[[n,k-dim(\Hull_e(\C)),n-k+1;n-k-dim(\Hull_e(\C))]]_q$ EAQECC;
\item[(2)]
(\cite[Theorem 4.2]{RefJ (2023) Li.H DM})
there exists an
$[[n,n-k-dim(\Hull_e(\C)),k+1;k-dim(\Hull_e(\C))]]_q$ EAQECC.
\end{itemize}
\end{lemma}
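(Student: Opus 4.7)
The plan is to reduce both statements to a single workhorse, which underlies both cited references: for any $[n,k,d]_q$ linear code $\C$ with $\ell$-dimensional $e$-Galois hull, there exists an $[[n,\,k-\ell,\,d;\,n-k-\ell]]_q$ EAQECC. I would invoke this without reproving it; the underlying mechanism is standard, consisting of choosing a generator matrix $G$ of $\C$, verifying that $\mathrm{rank}(G\,\sigma^e(G)^T) = k - \dim(\Hull_e(\C))$, and feeding this rank into the Brun et al.\ stabilizer construction in its Galois-symplectic variant. For part (1), the MDS hypothesis gives $d = n-k+1$, and direct substitution yields the claimed parameters $[[n,\,k-\dim(\Hull_e(\C)),\,n-k+1;\,n-k-\dim(\Hull_e(\C))]]_q$ immediately.

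For part (2), I would apply the workhorse to the $e$-Galois dual code $\C^{\perp_e}$, which is an $[n,n-k,k+1]_q$ MDS code (MDS duality for the nondegenerate Galois pairing). A small but essential check is that $\dim(\Hull_e(\C^{\perp_e})) = \dim(\Hull_e(\C))$. I would verify this by first establishing the symmetry $\dim(\Hull_e(\D)) = \dim(\Hull_{m-e}(\D))$ for any code $\D$, via the identity $G\,\sigma^{m-e}(G)^T = \sigma^{m-e}\bigl((G\,\sigma^e(G)^T)^T\bigr)$ (so the two matrices have the same rank), and then combining it with Galois biduality $(\C^{\perp_e})^{\perp_{m-e}} = \C$, which gives $\Hull_{m-e}(\C^{\perp_e}) = \C^{\perp_e}\cap\C = \Hull_e(\C)$. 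Hence $\dim(\Hull_e(\C^{\perp_e})) = \dim(\Hull_{m-e}(\C^{\perp_e})) = \dim(\Hull_e(\C))$. Substituting $k \mapsto n-k$ and $d \mapsto k+1$ into the workhorse then yields $[[n,\,n-k-\dim(\Hull_e(\C)),\,k+1;\,k-\dim(\Hull_e(\C))]]_q$, as required.

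The main obstacle I anticipate is precisely the hull-dimension preservation step in (2). In the Euclidean case this is immediate from $(\C^\perp)^\perp = \C$; in the Galois setting $(\C^{\perp_e})^{\perp_e}$ need not equal $\C$, so one must route through the $(m-e)$-dual and exploit the symmetry of Galois hull dimension under $e \leftrightarrow m-e$. Once this small diagram is in place, both parts drop out routinely from the workhorse, and the MDS parameters $d = n-k+1$ and $d^\perp = k+1$ take care of the rest.
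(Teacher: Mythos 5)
The paper does not actually prove this lemma: both parts are imported verbatim from the cited sources (Cao's Corollary IV.1 and Li's Theorem 4.2), so there is no internal proof to compare against. Your reconstruction is nonetheless correct and matches the mechanism behind those citations: part (1) is the standard EAQECC construction $[[n,\,k-\ell,\,d;\,n-k-\ell]]_q$ specialized to $d=n-k+1$, and part (2) is the same construction applied to $\C^{\perp_e}$, which is $[n,n-k,k+1]_q$ MDS because $\C^{\perp_e}=\sigma^{m-e}(\C^{\perp_E})$ is a coordinatewise automorphic image of the Euclidean dual. The one step that genuinely requires care is the hull identity $\dim(\Hull_e(\C^{\perp_e}))=\dim(\Hull_e(\C))$, since $(\C^{\perp_e})^{\perp_e}\neq\C$ in general; your route through the symmetry $\dim(\Hull_e(\D))=\dim(\Hull_{m-e}(\D))$ (via $G\,\sigma^{m-e}(G)^{T}=\sigma^{m-e}\bigl(\sigma^{e}(G)\,G^{T}\bigr)$, which preserves rank) combined with the biduality $(\C^{\perp_e})^{\perp_{m-e}}=\C$ is exactly the right fix, and it is consistent with the paper's own Lemma \ref{lem duichen} on the $e\leftrightarrow m-e$ symmetry. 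In short: the proposal is a sound derivation of a result the paper only cites, and the parameters check out in both cases.
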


The following lemma shows that we can obtain general entanglement-assisted concatenated quantum codes (EACQCs)
 from two EAQECCs.

\begin{lemma}\label{lem EACQC}
(\cite[Theorem 1]{RefJ (2022) EACQC})
Denote by $Q_1 :=[[n',k',d';c']]_p$ and $Q_2:=[[n,k,d;c]]_{p^{k'}}$ the inner code and the outer code, respectively. Then there exists an $[[n'n,k'k,\geq d'd; c'n+ck']]_p$ EACQC.
\end{lemma}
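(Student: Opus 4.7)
The plan is to realize the EACQC as the classical concatenation of $Q_2$ with $Q_1$ and then read off each parameter in turn. First I would fix an $\F_p$-basis of $\F_{p^{k'}}$ so that every symbol over the large alphabet is identified with a vector of $k'$ symbols over $\F_p$. Under this identification, the outer code $Q_2=[[n,k,d;c]]_{p^{k'}}$ encodes $k$ logical qudits over $\F_{p^{k'}}$, i.e.\ $k'k$ logical qudits over $\F_p$, using $n$ physical qudits over $\F_{p^{k'}}$ (hence $k'n$ physical qudits over $\F_p$) and $c$ maximally entangled pairs over $\F_{p^{k'}}$ (hence $ck'$ maximally entangled pairs over $\F_p$).

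Next I would compose with the inner code. Partition the $k'n$ physical $\F_p$-qudits produced by the outer encoder into $n$ consecutive blocks of length $k'$, and apply $Q_1=[[n',k',d';c']]_p$ block-wise: each block is the $k'$-qudit input to an independent inner encoder, producing $n'$ output qudits and consuming $c'$ additional maximally entangled pairs over $\F_p$. The resulting concatenated code has length $n'n$, encodes $k'k$ logical qudits over $\F_p$, and uses a total of $c'n+ck'$ maximally entangled pairs, matching three of the four claimed parameter slots.

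Third, I would verify the distance bound. Given any Pauli error $E$ on the $n'n$ base qudits with weight strictly less than $d'd$, group $E$ into the $n$ inner blocks of length $n'$ and let $S$ be the set of blocks on which the restriction of $E$ has weight at least $d'$; a pigeonhole argument shows $|S|<d$. On each block outside $S$ the inner code $Q_1$ detects or corrects the local restriction of $E$, so the only errors surviving at the outer level are supported on positions in $S$, a pattern still detectable or correctable by $Q_2$. Composing the inner and outer decoders therefore handles $E$, and the minimum distance of the concatenated code is at least $d'd$.

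The main obstacle is the entanglement bookkeeping that underpins the last step: one must verify that a Pauli error on the $n'n$ code qudits commutes correctly with both the inner and outer simplified stabilizer generators after concatenation, so that the two decoders can be run sequentially without the $c'n$ inner ebits and $ck'$ outer ebits interfering or being double-counted. This compatibility between the two layers of entanglement assistance is the genuinely new technical content of the cited theorem; in the unassisted specialization $c=c'=0$ the argument collapses to the classical quantum-code concatenation bound.
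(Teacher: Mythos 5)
The paper offers no proof of this lemma: it is imported verbatim, with citation, from Theorem~1 of the reference on entanglement-assisted concatenated quantum codes, so there is no in-paper argument to compare against. Your sketch follows the standard concatenation route, which is also the route of the cited source: identify each $\F_{p^{k'}}$-symbol with $k'$ symbols of $\F_p$, so that the outer code contributes $k'k$ logical qudits, $k'n$ intermediate physical qudits and $ck'$ ebits, then encode each block of $k'$ intermediate qudits with an independent copy of the inner code to obtain length $n'n$ and $c'n$ further ebits. The parameter bookkeeping for length, dimension and entanglement is correct.

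Two steps are asserted rather than proved. For the distance bound, the cleaner stabilizer-formalism argument is: any undetectable, logically nontrivial error of the concatenated code must commute with every inner stabilizer, hence restricts on each inner block either to an inner stabilizer element or to a nontrivial inner logical operator; the blocks of the latter type carry an undetectable, logically nontrivial error of the outer code, so there are at least $d$ of them, each of weight at least $d'$, giving total weight at least $d'd$. Your detection-plus-pigeonhole phrasing reaches the same conclusion but glosses over degenerate errors lying in the inner stabilizer, which contribute weight without being ``corrected'' in the operational sense you describe. Second, as you yourself flag, the compatibility of the two layers of entanglement assistance --- that the $c'n$ inner ebits and $ck'$ outer ebits can be consumed independently without altering the simplified stabilizer structure --- is precisely the technical content of the cited theorem and is not established by the sketch. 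Since the lemma is quoted rather than proved in this paper, that deferral is acceptable here, but the proposal should not be read as a self-contained proof.
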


Combining Lemma \ref{lem EAQECC} and the MDS codes with Galois hulls obtained in this paper, we can obtain many EAQECCs.
Compared to the EAQECCs obtained from MDS codes with Galois hulls in \cite{RefJ (2021) cao,RefJ (2023) cao,RefJ (2022) Fang CC,RefJ (2023) Yang FFA,RefJ (2023) Islam,RefJ (2023) Li.H DM}, the EAQECCs we constructed have more flexible parameters.
Moreover, in Lemma \ref{lem EACQC}, if we take $k'=m$ and let $Q_2$ be the $q$-ary (MDS) EAQECCs
constructed in this paper, then we are able to obtain some families of $p$-ary
EACQCs with larger lengths.
 Since there are so many MDS codes with Galois hulls, (MDS) EAQECCs and EACQCs available, we will not list them all here.

\section{Conclusions}\label{sec6}

In this paper,
for a vector $\bm{v}\in (\F_q^*)^n$ and a $q$-ary linear code $\C$,
we give some necessary and sufficient conditions for the equivalent code $\Phi_{\bm{v}}(\C)$ of $\C$ and the extended code of $\Phi_{\bm{v}}(\C)$
to be $e$-Galois self-orthogonal (see Lemmas \ref{lem galois oa} and \ref{lem galois oa extend}).
This leads to some necessary and sufficient conditions for GRS and EGRS codes to be Galois self-orthogonal
(see Lemmas \ref{lem <a,v>}, \ref{lem <a,v> EG}, \ref{lem lam(x)} and \ref{lem lam(x) 2}).
Moreover, we construct many new classes of $e$-Galois self-orthogonal MDS codes that can have dimensions greater than $\lfloor \frac{n+p^e-1}{p^e+1}\rfloor$ (see Theorems \ref{th 1 ax+b}, \ref{th p2s 1} and \ref{th 11 3333}).
Galois self-orthogonal MDS codes of length $q+1$ are also obtained (see Theorem \ref{th 2 q+1}).
Furthermore, by propagation rules, we obtain some new MDS codes
with Galois hulls of arbitrary dimensions.
Then many quantum codes can be obtained from these MDS codes with Galois
hulls.

\section*{Acknowledgments}

This research is supported by the National Natural Science Foundation of China under Grant Nos. 12171134 and U21A20428.

\section*{Data availability}

All data generated or analyzed during this study are included in this paper.

\section*{Conflict of Interest}

The authors declare that there is no possible conflict of interest.

\section*{Appendix}

This appendix contains the omitted proofs for Lemma \ref{lem zhenchu} and Theorem \ref{th 11 3333}.

$\bullet$ \textbf{Proof of Lemma \ref{lem zhenchu}.}

\begin{proof}
It is easy to see that for any $0\leq i,j\leq k-1$,
we have $p^ei+j<p^{2e}-1$.
Suppose $(i,j)\neq (0,0)$.
Then if $p^ei+j=u(p^e-1)$, we have $1\leq u\leq p^e$.
We suppose that $p^ei+j$ is divisible by $p^e-1$ for some $0\leq i,j\leq k-1$.
Then there exists an integer $u$ such that
$$p^ei+j=u(p^e-1).$$
Hence, we have
$$p^ei+j=(u-1)p^e+p^e-u.$$
Since $1\leq u\leq p^e$,
it follows that
$$i=u-1,\quad j=p^e-u.$$

If $u\geq t+1$, then
$$i=u-1\geq t\geq k,$$
which contradicts the fact $i\leq k-1$;

If $u\leq p^e-t$, then
$$j=p^e-u\geq t\geq k,$$
which contradicts the fact $j\leq k-1$.
Hence, $u\in\{0\}\cup [p^e-t+1,t]$.
This completes the proof.
\end{proof}

$\bullet$ \textbf{Proof of Theorem \ref{th 11 3333}.}

\begin{proof}
Let $\theta=w^{\frac{q-1}{p^e-1}}$, where $w$ is the primitive element of $\F_{q}$.
Put
$$\bm{a}=(0, w, w\theta,\dots,w\theta^{p^e-2},\dots,w^r,w^{r}\theta,\dots,w^{r}\theta^{p^e-2})\in \F_{q}^{r(p^e-1)+1},$$
and
$$\bm{v}=(v_0,\underbrace{v_1,\dots,v_1}_{p^e-1\ times},\dots,\underbrace{v_r,\dots,v_r}_{p^e-1\ times})\in (\F_{q}^*)^{r(p^e-1)+1} ,$$
where $v_0,v_1,\dots,v_r\in \F_{q}^*$.
Then when $(i,j)=(0,0)$, we have
\begin{equation}\label{eq qiuhe 0}
\langle \bm{a}^{0},\bm{v}^{p^e+1}\rangle_E=v_0^{p^e+1}+(p^e-1)\sum_{l=1}^rv_l^{p^e+1}.
\end{equation}
Moreover, when $(i,j)\neq (0,0)$, we have
$$\langle \bm{a}^{p^ei+j},\bm{v}^{p^e+1}\rangle_E=\sum_{l=1}^rw^{l(p^ei+j)}v_l^{p^e+1}\sum_{\nu=0}^{p^e-2}\theta^{\nu(p^ei+j)},$$
thus
\begin{equation}\label{eq qiuhe 2}
\langle \bm{a}^{p^ei+j},\bm{v}^{p^e+1}\rangle_E=\begin{cases}
0, & if\ (p^e-1)\nmid (p^ei+j);\\
(p^e-1)\sum_{l=1}^rw^{l(p^ei+j)}v_l^{p^e+1}, & if\  (p^e-1)\mid (p^ei+j).
\end{cases}
\end{equation}
We only need to prove the case when $r$ is odd,
since the proof is completely similar for the case when $r$ is even.
Denote $t=\frac{p^e+r-1}{2}$.
By Lemma \ref{lem zhenchu}, $p^ei+j=u(p^e-1)$ if and only if $u\in [p^e-t+1,t]$.
Then by Eq. (\ref{eq qiuhe 2}), when $(p^e-1)\mid (p^ei+j)$,
we have
$$\langle \bm{a}^{p^ei+j},\bm{v}^{p^e+1}\rangle_E=(p^e-1)\sum_{l=1}^{r}w^{l u (p^e-1)}v_l^{p^e+1},$$
where  $u\in [p^e-t+1,t]$.
Let $\alpha=w^{p^e-1}$ and $a=p^e-t+1$, then we can get $\alpha^{a+v}\neq \alpha^{a+v'}\neq 1$
for any $0\leq v\neq v'\leq r-2$.
Let
$$B=\begin{pmatrix}
 1&   1    &  1  &  \dots  & 1  \\
 0& \alpha^{a} &  \alpha^{2a} & \dots & \alpha^{ra}\\
 0& \alpha^{a+1} &  \alpha^{2(a+1)}  &\dots & \alpha^{r(a+1)}\\
 \vdots& \vdots &  \vdots & \ddots & \vdots\\
 0& \alpha^{a+r-2} & \alpha^{2(a+r-2)} & \dots & \alpha^{r(a+r-2)}\\
\end{pmatrix}$$
be an $r\times (r+1)$ matrix over $\F_{q}$.
For $1\leq i\leq r+1$, let $B_i$ be the $r\times r$ matrix obtained from $B$ by deleting the $i$-th column.
It is easy to check that $rank(B)=rank(B_1)=\dots=rank(B_{r+1})=r$.
By \cite[Lemma 6]{RefJ (2023) Wan 2}, equation $B\bm{x}^T=\bm{0}^T$
has a solution $\bm{x}=(x_0,x_1,\dots,x_{r})\in (\F_{q}^*)^{r+1}$.
Since $\frac{m}{e}$ is odd and $p$ is even, by Lemma \ref{lem HHH}, we have $\F_q^*=\langle w^{p^e+1}\rangle$.
Then
let $v_l^{p^e+1}=x_l$ for $1\leq l\leq r$ and let $v_0\in \F_{q}^*$ such that $v_0^{p^e+1}=x_0(p^e-1)$.
By Eq. (\ref{eq qiuhe 0}), we have
$$\langle \bm{a}^{0},\bm{v}^{p^e+1}\rangle_E=v_0^{p^e+1}+(p^e-1)\sum_{l=1}^rv_l^{p^e+1}=(p^e-1)\sum_{l=0}^{r}x_l=0.$$
Moreover, by Eq. (\ref{eq qiuhe 2}), when $(i,j)\neq (0,0)$ and $(p^e-1)\mid (p^ei+j)$, we have
$$\langle \bm{a}^{p^ei+j},\bm{v}^{p^e+1}\rangle_E=(p^e-1)\sum_{l=1}^{r}\alpha^{lu}x_l=0,$$
where  $u\in [p^e-t+1,t]$.
Hence $\langle \bm{a}^{p^ei+j},\bm{v}^{p^e+1}\rangle_E=0$, for all $0\leq i,j\leq k-1$.
Then the desired result follows from Lemma \ref{lem <a,v>}.
This completes the proof.
\end{proof}

\end{document}